%% file: equilibrium_world_model.tex
\documentclass[11pt, fleqn]{article}

\usepackage[margin=1in]{geometry}
\usepackage{soul}
\usepackage{setspace}
\usepackage{amsmath,amssymb,amsthm,mathtools}
\usepackage{bm}
\usepackage[T1]{fontenc}
\usepackage{newpxtext,newpxmath}
\usepackage{booktabs}
\usepackage{array}
\usepackage{graphicx}
\usepackage[dvipsnames]{xcolor}
\usepackage{tikz}
\usetikzlibrary{decorations.pathreplacing,arrows.meta,positioning}
\usepackage{enumitem}
\usepackage{float}
\usepackage{caption}
\usepackage{algorithm}
\usepackage{algpseudocode}
\usepackage[authoryear,round]{natbib}
\usepackage{hyperref}
\hypersetup{colorlinks=true,linkcolor=blue,citecolor=blue,urlcolor=blue}

\newtheorem{definition}{Definition}
\newtheorem{assumption}{Assumption}
\newtheorem{theorem}{Theorem}
\newtheorem{proposition}{Proposition}
\newtheorem{lemma}{Lemma}

\newtheorem*{remark}{Remark}

\newcommand{\R}{\mathbb{R}}
\newcommand{\E}{\mathbb{E}}

\newcommand{\X}{\mathcal{X}}
\newcommand{\A}{\mathcal{A}}
\newcommand{\calE}{\mathcal{E}}
\newcommand{\calR}{\mathcal{R}}
\newcommand{\calB}{\mathcal{B}}
\newcommand{\calF}{\mathcal{F}}
\newcommand{\calK}{\mathcal{K}}

\newcommand{\calT}{\mathcal{T}}
\newcommand{\muK}{\mu_\kappa}
\newcommand{\muerg}{\mu_{\pi_\theta}}
\newcommand{\nnp}{\pi_\theta}
\newcommand{\nnet}{\mathcal{N\!N}}
\newcommand{\wm}{\widehat{W}_\psi}
\newcommand{\Qpi}{Q^{\pi_\theta}}

\newcommand{\rwm}{R}
\newcommand{\TV}{\mathrm{TV}}
\newcommand{\sg}{\mathrm{sg}}
\DeclareMathOperator*{\argmin}{arg\,min}

\newcommand{\SCE}{\mathrm{SCE}}
\newcommand{\RE}{\mathrm{RE}}
\newcommand{\Normal}{\mathcal{N}}

\makeatletter
\renewcommand\paragraph{\@startsection{paragraph}{4}{\z@}
  {1ex \@plus .25ex \@minus .15ex}
  {-1em}
  {\normalfont\normalsize\bfseries}}
\makeatother

\title{\Large \textbf{Equilibrium World Models}}

\author{
\begin{minipage}[t]{0.46\textwidth}
\centering
\textsc{Andreas Schaab}\\[0.4em]
\small
Department of Economics\\
University of California, Berkeley, USA\\
\texttt{schaab@berkeley.edu}
\end{minipage}
\hfill
\begin{minipage}[t]{0.46\textwidth}
\centering
\textsc{Simon Scheidegger}\thanks{\footnotesize
Simon Scheidegger wrote part of this paper while visiting the Bank for
International Settlements (BIS), and thanks the BIS for its hospitality. We thank
Marlon Azinovic-Yang, Oliver Surbek, and Jan {\v Z}emli{\v c}ka for helpful comments and
discussions.}\\[0.4em]
\small
Department of Economics\\
University of Lausanne, Switzerland\\
Grantham Research Institute on Climate Change and the Environment, LSE, UK\\
\texttt{simon.scheidegger@unil.ch}
\end{minipage}
}

\date{\today}

\input{bewley_ledger_values.tex}

\input{bewley_ledger_stubs.tex}

\begin{document}
\setlength{\parskip}{0pt}
\setlength{\parindent}{15pt}
\begin{singlespace}

\maketitle
\vspace{-.4in}

\begin{abstract}
We introduce \emph{Equilibrium World Models} (EWMs), a deep-learning method for globally solving
dynamic stochastic models that feature rare disasters, binding constraints, and counterfactual states. Standard
unsupervised neural-network-based solvers impose equilibrium conditions only on states generated by their own
simulated policy. Their solutions can therefore be self-confirming: accurate on the simulated path,
but untested off it, sensitive to initialization, and costly when expectations must be recomputed at
each step. EWMs change the computational representation, not the economics. They enforce the model's
exact equilibrium conditions on a broader, model-generated distribution of ordinary, rare, stressed,
and counterfactual states. They carry the continuation with a learned surrogate, but certify the
resulting policy strictly against the true equilibrium conditions. We provide an error decomposition,
an off-path residual bound, and a convergence result linking self-confirming solutions to
rational-expectations equilibria. We demonstrate EWMs through a sequence of test cases that isolate the
main pathologies of classical deep-learning solvers and then scale them to richer economies. In a rare-disaster
Brock--Mirman laboratory, coverage reduces disaster-region residuals by an order of magnitude. In a
high-dimensional international real-business-cycle model, classical deep-learning solvers fail from all random
starts, whereas EWMs converge from nearly all and evaluate continuations up to two orders of magnitude
less often. When actions move transition measures, EWMs use action-conditioned continuations to
recover the relevant policy margin. In a heterogeneous-agent economy with aggregate risk, EWMs compress
the numerical representation of the wealth distribution by at least $25\times$ while imposing exact
full-distribution rational-expectations conditions.
\end{abstract}

\medskip
\noindent\textbf{Key words:} World Models, Deep Learning, Deep Equilibrium Nets, Global
Solution Methods, Rational Expectations, Self-Confirming Equilibrium, Heterogeneous Agents, Rare
Disasters.

\smallskip
\noindent\textbf{JEL classification:} C45, C61, C63, D52, D58, F44.

\clearpage

\section{Introduction}\label{sec:intro}

\paragraph{Motivation and challenge.}
Many central questions in macroeconomics and finance are decided far from the deterministic steady
state. Monetary policy is most informative when constraints bind unevenly across households; financial
models are most revealing when leverage, asset prices, and default risk interact away from normal times;
and climate and disaster models are read precisely in the tails, where rare events dominate welfare and
asset prices. These are the regions in which a local approximation discards the economics and a fixed
grid cannot reach, so answering such questions calls for a global solution\footnote{We use the term
\emph{global solution} for a solution computed from a model's equilibrium conditions at many points across
its state space, in contrast to a \emph{local solution}, which rests on a local approximation around a
steady state; a method that delivers one is a \emph{global solution method}. This sense of ``global'' is
not the one in ``global optimization,'' which refers to finding a global optimum.} of a dynamic
stochastic model on a high-dimensional state space with occasionally binding constraints and pronounced
nonlinearities \citep{bellman1961adaptive}. Over the past decade, unsupervised deep-learning solvers have
become a leading tool for exactly this task. Deep Equilibrium Nets \citep{azinovicDEEPEQUILIBRIUMNETS2022}
and deep-learning Euler-residual methods \citep{maliarwinant2021} represent the equilibrium policy by a
neural network and train it without labelled data or a precomputed reference solution; the only training
signal is the model's own equilibrium conditions, evaluated from the structural equations and driven
toward zero.\footnote{See \citet{fernandezvillaverde2024taming} for a recent review and
\citet{scheidegger2026deeplearning} for a textbook treatment.}

The difficulty is not that these solvers abandon economic structure. By construction they impose the
model's defining equilibrium equations, so what is in doubt is not the economics they enforce but the
domain on which the returned solution is certified. A pathwise solver imposes those equations only on the
states its own simulation visits, so a candidate solution can be made accurate on the policy's typical
path while remaining untested off it: after rare shocks, near binding constraints, and along the
counterfactual responses that policy analysis traces. In this sense the computation is self-confirming. It
is certified on the data its own policy generates and left silently uncertified precisely where it will be
read. This is not merely extrapolation into a rare region but a failure of certification, and closing it,
by imposing the model's exact equilibrium conditions on the wider region where the solution will be used
and not only on the path that confirms it, is the problem this paper takes up.

\paragraph{A structural world model.}
We introduce \emph{Equilibrium World Models} (EWMs), an unsupervised, simulation-based deep-learning
method that makes the certificate follow the region in which the solution will be used. The idea is a
disciplined dreamer. In artificial intelligence, a world model lets an agent improve by rehearsing
imagined experience inside a model of its environment rather than by acting in the world itself
\citep{Ha2018WorldModels,hafner2020dream,hafner2023dreamerv3}; EWMs give a deep equilibrium solver the
same ability to rehearse states its own simulation would rarely encounter. But the analogy stops at
rehearsal. Every imagined state is generated or advanced through the model's maintained law of motion, and
every candidate policy is judged by the model's own equilibrium residual, so the construction preserves
the structural discipline emphasized by the Lucas critique \citep{Lucas1976Critique} rather than fitting a
reduced-form simulator in its place. EWMs change the computational representation, not the economics: the
transition law, the equilibrium conditions, and the reported accuracy criterion remain the model's exact
structural objects, and what changes is the support on which those conditions are enforced and the way the
forward-looking continuation is carried. To the best of our knowledge, EWMs are the first structural use
of the world-model idea for computing rational-expectations equilibria in dynamic stochastic economic
models.

\paragraph{Coverage and audited world components.}
The method has two components. First, EWMs replace the purely pathwise training distribution by a broader,
model-generated coverage measure that contains the policy's ordinary simulated states together with
structurally admissible rare, stressed, locally perturbed, and counterfactual states generated or advanced
through the model's transition; the same equilibrium conditions are then enforced not only where the
policy usually goes but also where the solution will be read. Second, EWMs carry the forward-looking
continuation with audited learned objects that make this broader enforcement affordable: in
finite-dimensional economies a continuation surrogate\footnote{A surrogate, also called an emulator or
response surface, is a cheap statistical approximation to an expensive computation; here it stands in for
the continuation value, replacing the costly per-state expectation the equilibrium conditions would
otherwise require \citep{chen2025Deep}.} amortizes the conditional expectation entering the Euler equation;
when an action itself moves the law of future states, the continuation is conditioned on the action as
well as the state, recovering a policy margin a state-only continuation would average away; and when the
aggregate state is a distribution, a learned encoder compresses the cross-section into a tractable
representation. These learned objects are instruments for solving the model, not replacements for it: the
transition law is never learned, the equilibrium residual is never replaced, and the reported accuracy is
always the exact equilibrium residual on held-out states the policy never trained on. The certificate is
therefore explicit about its own domain: it validates the policy where that held-out residual is evaluated
and makes no claim for regions outside the coverage design.

\paragraph{From self-confirmation to rational expectations.}
Enlarging coverage is not importance sampling under another name. Drawing more states from the same
simulated path only estimates the same pathwise object more accurately; it imposes no condition where that
path never goes. Coverage instead changes the population objective itself, the set of states on which the
candidate equilibrium must satisfy the model's restrictions, so each reach defines a distinct fixed point
rather than a sharper estimate of one target. At zero additional reach the object is the computational
analogue of a self-confirming equilibrium \citep{fudenberg1993selfconfirming,sargent1999conquest},
certified only on the policy's own simulated experience; as coverage expands and the approximation,
continuation, and optimization errors vanish, the same residual conditions come to hold on the
coverage-certified region, that is, on the part of the reachable state space the audit deliberately
covers. Coverage therefore moves the computed object from self-confirmation toward rational expectations
on the audited region, a conditional limit the theory below makes precise.

\paragraph{Contributions.}
The paper makes three contributions. The first is conceptual: it identifies pathwise residual
minimization as a form of computational self-confirmation, a failure of certification driven by the
endogenous support of the training distribution rather than by rare-event extrapolation alone, and so
clarifies why low on-path residuals do not by themselves license counterfactual analysis. The second is
methodological: a structural world model that widens the support of equilibrium enforcement while using
learned continuations and learned distributional representations only as audited computational devices,
under an unchanged economic model and an unchanged exact-residual audit. The third is theory and evidence:
an error decomposition, an off-path counterfactual bound, and a conditional convergence result establish
the coverage-indexed family as a constructive route from self-confirmation toward rational expectations on
the audited region, with the change in support shown to deliver off-path accuracy, seed robustness,
affordability, and distributional scalability in models that demand global nonlinear solutions.

The method is not designed to improve on-path precision for its own sake: on its typical path a flexible
solver can already drive the residual to tolerance. The binding constraint is computational robustness,
whether the solver converges from arbitrary starts, whether the solution is certified where it will be
used, and whether the certificate remains affordable as the model grows.

\paragraph{Three microscopes and a diagnostic variant.}
We develop and test the method in three discrete-time economies, chosen not as showcases but as
microscopes. Each is the simplest setting in which one demand on the method can be isolated and audited by
the same exact held-out residual, and together they raise the aggregate state from a single capital stock
to a full wealth distribution.

The first is a rare-disaster version of the textbook Brock--Mirman growth model with irreversible
investment. It is small enough that the coverage gap can be seen directly, yet it already contains the
failure that matters: after a rare productivity collapse the irreversibility constraint binds, precisely
where a pathwise policy has least reason to be accurate. Imposing the equilibrium conditions on a
model-generated set of disaster states cuts the residual there by about an order of magnitude, and a
variant that widens coverage without the learned component captures essentially all of that gain,
identifying the expansion of support, not the learned shortcut, as the lever.

The second is an international real business cycle model with rare disasters and irreversible investment,
and it is the headline economy. The number of countries sets the dimension of the state, the rare disaster
carves out a region the simulated path almost never visits, and the irreversibility constraint adds a kink
a smooth on-path policy cannot extrapolate to. The result is reliability, and it is starkest in
convergence. In the reported protocol the pathwise solver reaches a verified solution from none of its
starts, with disaster-region errors an order of magnitude larger than on-path errors. EWMs converge from
nearly all starts and stay certified in the disaster region, while reducing per-query continuation
evaluations by up to two orders of magnitude, a saving that grows with the number of countries and lets
the method keep converging as the state dimension rises. The difference is economic, not merely numerical:
through the disaster the certified policies price disaster risk consistently, while the pathwise policy,
uncertified there, misprices it.

A diagnostic variant isolates a separate failure mode, one orthogonal to dimension. When an action changes
the probability law of future states rather than the next state directly, a state-only continuation
evaluates only the action the policy already prescribes and averages away the margin that action controls.
The continuation must then be conditioned on the action as well as the state, the analogue of a Q-function
in dynamic programming. In a Brock--Mirman economy with an endogenous protection choice that lowers
disaster exposure, a state-only continuation collapses the protection margin to zero, while the
action-conditioned continuation recovers it and passes the held-out exact-residual audit. Coverage decides
where the equilibrium conditions are imposed; this variant shows that the world component itself must
carry the policy margin the model requires.

The third is a heterogeneous-agent economy with aggregate risk, in which the state is the entire
cross-sectional distribution of wealth. This is where the structural-world-model view does its most
interesting work, because it meets a long-standing tension head-on. To solve such a model under rational
expectations a household must in principle forecast the behavior of a whole distribution, an
informational burden that has made the rational-expectations benchmark itself contentious here
\citep{Moll2026EJ}, even as the Lucas--Sargent tradition prizes exactly the cross-equation discipline
that rational expectations imposes \citep{Lucas1976Critique,Sargent2024MacroAfterLucas}. EWMs separate how
the population is represented from how the equilibrium is disciplined. The policy reads only a
low-dimensional learned summary of the cross-section, an encoder trained as a Joint Embedding Predictive
Architecture\footnote{An encoder is a learned map that compresses a high-dimensional object, here the
wealth distribution, into a few numbers that keep what matters for decisions; trained as a Joint Embedding
Predictive Architecture (JEPA), it is paired with a predictor of next period's embedding
\citep{LeCun2022AMI,MaesLeLidec2026LeWM}.} a representation-learning analogue of the moment-based
tradition, from the hand-chosen moments of \citet{krusell1998} to the learned generalized moments of
\citet{han2023deepham}, while the population is still advanced by the model's exact distributional law and
the policy is still held to the exact, full-distribution rational-expectations conditions. The numerical
representation of the distribution is compressed by at least $25\times$; at the reported budget, among the
raw histogram, the classical hand-chosen moments, and the learned summary, only the learned summary
recovers the decision-relevant shape of the cross-section. This provides a computational compromise within
the maintained model: the policy uses a low-dimensional perception of the distribution, while the audit
remains the exact full-distribution rational-expectations residual.

\paragraph{Organization of the paper.}
Section~\ref{sec:related} reviews the related literature. Section~\ref{sec:framework} develops the
Equilibrium World Model object, the coverage gap it addresses, the audited continuation component, and the
coverage-indexed objective, illustrated by a Brock--Mirman laboratory. Section~\ref{sec:theory} proves the
error decomposition, the off-path counterfactual bound, and the conditional convergence from
self-confirmation to rational expectations as coverage expands. Section~\ref{sec:exp} reports the numerical
experiments, residual audits, robustness statistics, and compute accounting across the three economies.
Section~\ref{sec:conclusion} concludes, with proofs, the action-conditioned protection variant, and
implementation details in the appendices.

\section{Related Literature}\label{sec:related}

This paper is closest to five literatures: (i) unsupervised deep-learning solvers for
rational-expectations equilibria; (ii) parameterized-expectations and surrogate methods; (iii)
self-confirming and learning equilibria; (iv) heterogeneous-agent and mean-field economies; and (v)
world models in artificial intelligence. These neighboring literatures can be read along the same
self-confirming-to-rational-expectations axis the coverage sieve traces: adaptive-learning and
self-confirming approaches at one end, global rational-expectations solvers at the other, and
restricted-representation or forecasting methods in between. The organizing distinction is
simple. EWM leaves the structural economy unchanged: the transition law, equilibrium residual, and
final accuracy audit are the modeler's exact objects. It changes the support on which those exact
conditions are enforced, replacing the policy's own simulated path by a model-generated coverage
measure that includes rare, stressed, and locally perturbed states. Learned components enter only to
make this wider enforcement affordable, or to compress high-dimensional states; they do not define
or certify equilibrium.

\paragraph{Deep-learning-based solution techniques.}
The most direct benchmark is the class of unsupervised neural network solvers for dynamic stochastic
equilibria. Four features make their global solution demanding
\citep{azinovicDEEPEQUILIBRIUMNETS2022}: (a)~stochasticity, which forces conditional expectations
over future shocks; (b)~high-dimensional state spaces, where the curse of dimensionality
\citep{bellman1961adaptive} obstructs grid-based representations; (c)~kinks and strong
nonlinearities from occasionally binding constraints; and (d)~irregular ergodic geometry that makes
uniform grids wasteful or infeasible. Deep Equilibrium Nets train a policy network by minimizing squared structural residuals
along simulated paths \citep{azinovicDEEPEQUILIBRIUMNETS2022}; \citet{maliarwinant2021} unify
Euler, Bellman, and ``all-in-one'' residual training; and DeepHAM extends the approach to
heterogeneous-agent economies through learned generalized moments \citep{han2023deepham}. Related
methods now appear in continuous-time finance \citep{duarte2024ml,gopalakrishna2024aliens},
high-dimensional dynamic programming \citep{kahou2021exploiting}, heterogeneous-agent macroeconomics
\citep{FernandezVillaverdeHurtadoNuno2023,azinoviczemlicka_2024}, climate economics
\citep{Folini_2021}, search and matching \citep{payne2024deepsam}, monetary policy
\citep{nuno2024monetary}, and production networks \citep{carvalho2025planning}.

EWM keeps the same unsupervised residual-learning principle and the same equilibrium equations, but
changes where the equations are imposed. A pathwise solver certifies the policy on the states the
policy itself visits. EWM instead trains and audits on a coverage measure generated from the model's
own transition, thereby testing the rare and counterfactual states that matter for disasters,
binding constraints, and prices. Relative to this literature, the novelty is not a new residual or a
larger network. It is to make the enforcement distribution an object of the method, to identify the
coverage term in the equilibrium error, and to bound off-path accuracy by quantities observable on
the coverage set. This is complementary to sequence-space neural network solvers, which stabilize the input
distribution by conditioning on exogenous shock histories \citep{azinovicyangzemlicka2025sequencespace};
EWM instead changes the support on which the endogenous-state solution is certified. We do not claim
EWM dominates these neighboring solvers on the problems they were built for: a
parameterized-expectations method is fast and stable on a low-dimensional system with a known
approximation class, and a well-tuned pathwise solver is already accurate on its own ergodic path.
The contribution is the narrower one of certifying the off-path region those methods leave untested.

\paragraph{Parameterized expectations and surrogate methods.}
EWM also connects to parameterized-expectations methods \citep{marcet:88,denhaanmarcet1990,denhaan2010},
their early neural network implementations \citep{duffy2001approximating}, recent machine-learning
projection approaches \citep{valaitisvilla2024}, and the broader use of Gaussian-process or
neural-network-based surrogates for expensive economic computations
\citep{Rasmussen:2005:GPM:1162254,Scheidegger2019JCS,kase2022estimating,friedl2023deepuq,kublerscheideggersurbek2025carbontax,chen2025Deep}.
The mechanical overlap is clear: EWM may replace repeated costly continuation evaluations by a
cheap learned approximation.

The economic role of that approximation is different. In a parameterized-expectations method, the
fitted expectation is the object whose fixed point defines the computed policy, and accuracy is
usually assessed on the simulated paths that the policy generates. In EWM, the learned continuation
is an audited computational device. The structural transition is not learned, the residual remains
the exact equilibrium restriction, and the reported certificate is the held-out exact residual on the
coverage set. This separation has empirical content: the coverage-only version, with no continuation
surrogate, already delivers the off-path gains in the benchmark model; when the surrogate helps, it
helps by amortizing cost or improving conditioning, not by changing the equilibrium concept. The same
logic distinguishes coverage from standard active learning. Active learning samples points to
approximate a fixed target efficiently \citep{mackay1992information,snoek2012practical}; EWM chooses
support for an endogenous fixed-point constraint, so changing the support changes which
self-confirming object is ruled out.

\paragraph{Self-confirming, restricted-perceptions, and learning equilibria.}
A third strand studies equilibria in which beliefs are correct only in a limited sense.
Self-confirming equilibria require beliefs to be correct on histories generated by the agent's own
behavior, while off-path beliefs remain unrestricted \citep{fudenberg1993selfconfirming,sargent1999conquest}.
A closely related notion is Berk--Nash equilibrium, in which an agent's beliefs are optimal within a
possibly misspecified model and confirmed by the data her own actions generate \citep{espondapouzo2016}.
Restricted-perceptions, adaptive-learning, and self-justified equilibria instead restrict the
forecasting rules, subjective models, or learning processes available to agents
\citep{MarcetSargent1989,EvansHonkapohja2001,BranchEvans2006RPE,KUBLER2023105707,10.1093/jeea/jvaf062}.
Related internal-rationality and bounded-rationality approaches include
\citet{bray1982,sargent1993bounded,eusepipreston2011,adammarcetnicolini2016}.

EWM uses this literature diagnostically rather than behaviorally. A flexible pathwise neural network solver
is not mainly restricted by the class of forecasts it can represent; on its own simulated support it
can make the residual small. It is restricted by observation: the residual is never imposed at states
outside that support. The resulting fixed point is therefore self-confirming in a computational
sense. EWM replaces the single pathwise fixed point by a coverage-indexed family. At zero additional
coverage it coincides with the pathwise, self-confirming object. As coverage expands over the
reachable state space, and approximation, surrogate, and optimization errors vanish, the exact
residual holds on all reachable states, yielding rational expectations. Relative to this strand, the
novelty is a constructive sieve from computational self-confirmation to rational expectations, with
the off-path residual measuring the distance still left to cover. We use these notions as a
computational analogue, not an economic identification: there is no agent whose misspecified belief
is confirmed by her own feedback, only a solver whose residual is confirmed on the states its own
iterate visits. The equilibrium we ultimately certify is the standard rational-expectations one, on
the reachable states the coverage measure attains.

\paragraph{Heterogeneous-agent and mean-field economies.}
In heterogeneous-agent economies with aggregate risk, the aggregate state is a distribution. The
classic Krusell--Smith approach summarizes that distribution by a few moments \citep{krusell1998};
recent neural network methods learn richer generalized moments \citep{kahou2021exploiting,han2023deepham};
and continuous-time and mean-field formulations characterize values and distributions through
coupled Hamilton--Jacobi--Bellman, Kolmogorov-forward, and master equations
\citep{lasry2007mean,cardaliaguet2019master,achdou2022,gu2024masterequations,payne2024deepsam}.
The informational burden of rational expectations in such economies is substantial
\citep{Moll2026EJ}, even though the Lucas--Sargent tradition emphasizes the cross-equation
discipline that rational expectations provides \citep{Lucas1976Critique,Sargent2024MacroAfterLucas}.

EWM contributes on this computational margin. Its distributional world component is an encoder of
the cross-section: the policy reads a compact learned summary rather than the full population state.
The novelty is the discipline imposed on that summary. The population is still pushed forward by the
exact structural transition, the exact rational-expectations residual is evaluated at off-equilibrium
populations the realized economy does not generate, and the held-out exact residual is the audit.
Thus the encoder replaces hand-chosen moments as a state representation, not the law of motion of the
distribution. Relative to structural reinforcement learning that learns low-dimensional price laws
along realized population paths \citep{YangWangSchaabMoll2026SRL}, EWM keeps the
rational-expectations residual as the target and imposes it off path. The contribution is a way to
make a full-distribution rational-expectations benchmark operational while the policy itself uses
only a low-dimensional perception of the distribution.

\paragraph{World models and model-based learning.}
Finally, EWM borrows language and one design principle from world models in artificial intelligence.
Model-based reinforcement learning learns a predictive model of an unknown environment and improves a
policy by simulating inside that learned model, from early differentiable world models to World
Models, Dreamer agents, and joint-embedding predictive architectures
\citep{Schmidhuber1990MakingDifferentiable,Ha2018WorldModels,hafner2020dream,LeCun2022AMI,hafner2023dreamerv3,MaesLeLidec2026LeWM}.
Related ideas have entered finance through learned market simulators and model-based trading
environments \citep{Huang2022MBRLTrading,LiLiu2025MarS}.

The novelty of EWM is to make the world model structural. In artificial intelligence, the transition
is part of what must be learned. In EWM, the transition is the economist's model and is known. The
imagined states are generated by that transition, and the policy is trained to satisfy the model's
optimality, complementarity, and market-clearing conditions at those states. The learned component is
therefore not a simulator of an unknown environment; it is a device for expanding and amortizing the
enforcement of known equilibrium restrictions. To our knowledge, this is the first use of the
world-model idea to compute, rather than relax or simulate around, a rational-expectations
equilibrium.

\section{Equilibrium World Models}\label{sec:framework}

This section introduces EWM systematically, defining the method one object at a time and then making
it concrete on a transparent worked example. We take as the baseline the unsupervised pathwise
residual solver, a well-understood and operational class for which DEQN
\citep{azinovicDEEPEQUILIBRIUMNETS2022} are the canonical representative; we build on it precisely
because it is a strong, representative solver, not because of any deficiency.

EWM modifies that baseline in exactly two places. First, it enlarges the support on which the exact
structural residual is imposed, from the policy's own ergodic measure $\muerg$ to a broader,
model-generated coverage measure $\muK$ that reaches the rare, stressed, and counterfactual states
the policy never visits. Second, on that enlarged support it amortizes the expensive per-state
quadrature $\Qpi$ for the forward-looking continuation with an audited learned surrogate $\wm$.
Neither change is a black-box efficiency device. Enlarging the support is not importance sampling: it
changes which equilibrium is solved, and certified, rather than reweighting an estimator of a fixed
one, so that changing the support changes which self-confirming solution is ruled out. And the
surrogate is not a parameterized-expectations method: it is trained against the exact continuation
and never defines the equilibrium, whose error is always reported by the exact held-out residual, so
it amortizes cost without changing the equilibrium concept (Section~\ref{sec:related}). The economy,
the exact transition $\Gamma$, the structural residual $\rwm$, and the reported accuracy metric are
left untouched. Formally, at coverage reach $\kappa$ an equilibrium world model is the
tuple
\begin{equation}
\mathrm{EWM}_\kappa \;=\; \bigl(\,\nnp,\ \Gamma,\ \rwm,\ \muK,\ \wm,\ \mathcal{A}_H\,\bigr),
\end{equation}
collecting the equilibrium policy $\nnp$; the exact, known transition $\Gamma$; the exact
equilibrium residual $\rwm$; the coverage measure $\muK$ on which that residual is imposed; the
learned world component $\wm$ that carries the continuation (a surrogate for the
conditional-expectation term $\Qpi$ in the Euler equation, the expected discounted marginal value of
next period); and the held-out exact-residual audit
$\mathcal{A}_H$. The transition is never learned and the residual never
replaced; the world component only makes imposing the exact residual on a deliberately enlarged set
of states computationally feasible.

The section is organized components first: we build up the parts of EWM before illustrating the
assembled method. Section~\ref{sec:gap} specifies the baseline solver and the coverage gap it leaves;
Section~\ref{sub:econ} defines EWM together with its exact and surrogate residual evaluations and its
training losses; Section~\ref{sub:components} separates the three components that are fixed in every
model (the policy, the exact transition, and the exact residual) from the three an EWM adds and a
modeler adapts to the application (the coverage measure, the world component, and the audit);
Section~\ref{sub:coverage} develops the coverage measure and the sieve it induces from a
self-confirming equilibrium toward rational expectations; and Section~\ref{sub:discipline} gives the
training algorithm. We then illustrate EWM through variants of the Brock--Mirman model, first a benign
version (Section~\ref{sub:worked}) and then one with a rare disaster that makes the coverage gap
visible (Section~\ref{sub:bmdc}); Section~\ref{sub:variants} catalogues the world-component
variants, including an action-conditioned world model, whose continuation depends on the action and
not only the state, needed when the action itself shifts the distribution of next-period states (the
endogenous-protection case). These Brock--Mirman studies are illustrative, meant to make the
construction concrete and visible; the careful, comprehensive evaluation on the larger
international real business cycle model and the heterogeneous-agent Bewley economy follows in
Section~\ref{sec:exp}.

\subsection{The baseline pathwise solver and the coverage gap}\label{sec:gap}

We write a recursive stochastic economy as a tuple $(\X,\A,\Gamma,\rwm)$. The state $x\in\X\subseteq\R^n$ evolves through an exact, known
transition
\begin{equation}
x' \;=\; \Gamma(x,a,\varepsilon'), \qquad \varepsilon'\sim\nu,
\label{eqn:transition}
\end{equation}
where $a=\nnp(x)\in\A\subseteq\R^p$ stacks the agents' choices and Lagrange multipliers and is the
output of a neural network with trainable parameters $\theta$, and $\varepsilon'\in\R^{k}$ is an
exogenous shock drawn from a known distribution $\nu$ on $\R^{k}$, in general multivariate and a
product of heterogeneous marginals: its components may follow different laws, for instance Gaussian
innovations of AR(1) productivity processes alongside discrete regime or disaster shocks. For both
solvers we study, the pathwise baseline (DEQN) and EWM, and for the theoretical analysis, we fix a
quadrature rule and identify the exact model with this finite discretization, so the expectation in
\eqref{eqn:residual} is the weighted node sum and $\varepsilon'$ ranges over the finite node set
$\calE$ (Assumption~\ref{ass:regular}); the integration error of the rule relative to the
continuous-shock law $\nu$ is a separate, refinable approximation error, not part of the equilibrium
statements. The equilibrium is characterized by a structural residual, the equilibrium
conditions of the economic model written as deviations from zero and called structural because each
entry is one of the model's own optimality, complementarity, or market-clearing conditions rather
than a statistical error term,
\begin{equation}
\rwm\bigl(x,a,Q\bigr)=0,\qquad
Q=\Qpi(x):=\E\bigl[g\bigl(x',\nnp(x')\bigr)\,\big|\,x,a\bigr],
\label{eqn:residual}
\end{equation}
where $\rwm=(\rwm_1,\dots,\rwm_K)$ stacks the $K$ structural conditions, the Euler equations, the Karush--Kuhn--Tucker (KKT) complementarity conditions in Fischer--Burmeister form,\footnote{A complementarity condition is a set of inequalities: a nonnegative multiplier $\mu\ge0$, a nonnegative constraint slack $s\ge0$, and the complementary-slackness requirement $\mu s=0$ (at most one of the two is positive). Inequalities cannot be entered directly as equality residuals. The Fischer--Burmeister function $\phi(a,b)=a+b-\sqrt{a^2+b^2}$ collapses the three requirements into a single equality, since $\phi(a,b)=0$ holds if and only if $a\ge0$, $b\ge0$, and $ab=0$. Replacing each complementarity pair $(\mu,s)$ by the single equality residual $\phi(\mu,s)=0$ (continuous and globally Lipschitz, though non-differentiable at the origin; see Proposition~\ref{prop:fb}) therefore turns the KKT conditions into equalities that enter the structural residual $\rwm$ on the same footing as the Euler and market-clearing equations, so the entire equilibrium system reduces to a single residual driven to zero.} and the market-clearing conditions, into a single vector, each block $\rwm_i$ vanishing exactly at equilibrium. The continuation $\Qpi(x)$ is the conditional expectation of next period's marginal utilities, returns, prices, and multipliers the current decision depends on, and the kernel $g$ assembles the marginal-utility or market-clearing terms that enter $\rwm$.

An unsupervised deep-learning solver trains $\nnp$ with no external label, driving the exact
residual to zero on the states it visits. We follow the standard recipe for residual-learning
solvers, of which DEQN \citep{azinovicDEEPEQUILIBRIUMNETS2022} and the
deep-learning Euler-residual method of \citet{maliarwinant2021} are the canonical instances. It
has four components, which we state abstractly here and instantiate on Brock--Mirman
(Section~\ref{sub:worked}): a parametric policy approximation, a loss measuring violation of the
equilibrium conditions, an update rule, and the measure on which the loss is imposed. Only the last,
the choice of enforcement measure, is what EWM later changes; the other three it inherits unchanged.

\paragraph{(i) Function approximator.} Representing the policy is a computational choice, separate
from the economics. We approximate the equilibrium policy $\pi$ by a parametric family
$\nnp=\nnet_\theta$ with trainable parameters $\theta$, realized in our experiments by a feedforward
neural network $\nnet_\theta$, so that the action is $a=\nnp(x)=\nnet_\theta(x)$;\footnote{A densely
connected feedforward neural network with $K$ layers encodes the map
$\nnet_\theta(x)=\sigma_K\!\bigl(W_K\cdots\sigma_2(W_2\,\sigma_1(W_1 x+b_1)+b_2)\cdots+b_K\bigr)$, a
sequence of affine maps $z\mapsto W_i z+b_i$ with weight matrices $W_i$ and bias vectors $b_i$, each
followed by an elementwise activation $\sigma_i$; the parameter vector $\theta$ collects all weights
and biases. This densely connected feedforward form is the simplest such architecture; more
expressive ones (deeper, residual, convolutional, or attention-based networks) are of course
possible.} the approach is agnostic to the class,
and a polynomial or spline basis would serve equally. A neural network is a universal approximator
\citep{hornik1989multilayer} able to resolve the local, nonlinear features of equilibrium policies.

\paragraph{(ii) Loss.} The loss is the squared structural residual \eqref{eqn:residual} at a state,
with the continuation $\Qpi$ computed by quadrature on the exact $\Gamma$ at the next-period nodes
$x'=\Gamma(x,\nnp(x),\varepsilon')$. The per-state loss is the weighted sum of the squared blocks,
\begin{equation}
\mathcal{L}(x)\;=\;\bigl\|\rwm\bigl(x,\nnp(x),\Qpi(x)\bigr)\bigr\|^2
\;=\;\sum_{i=1}^{K} w_i\,\rwm_i\bigl(x,\nnp(x),\Qpi(x)\bigr)^2,
\label{eqn:Rsum}
\end{equation}
with fixed weights $w_i>0$ (unity unless stated, the case of the squared norm $\|\rwm\|^2$ used
throughout below), so that one scalar objective drives every equilibrium condition to zero at once.
There is no label and no target policy: the only training signal is the model's own equilibrium
condition.

\paragraph{(iii) Updating.} The parameters are updated by mini-batch stochastic gradient descent
(SGD). On a mini-batch $\mathcal{B}$ of sampled states they take one step down the gradient of the
mean squared residual,
\begin{equation}
  \theta \;\leftarrow\; \theta - \eta\,\nabla_\theta\,\frac{1}{\lvert\mathcal{B}\rvert}
  \sum_{x\in\mathcal{B}}\bigl\|\rwm\bigl(x,\nnp(x),\Qpi(x)\bigr)\bigr\|^2,
\label{eqn:sgd_update}
\end{equation}
with learning rate $\eta>0$; in practice we use the Adam variant \citep{kingma2015adam}, one
gradient step per simulated batch.

\paragraph{(iv) Sampling.} The states are drawn by simulating the economy forward under the current
policy, so each iteration draws a batch from the policy-induced invariant (ergodic) measure $\muerg$
and re-simulates: the sampling measure moves with the policy. The defining choice is where the
residual is imposed; at a fixed point the policy minimizes
\begin{equation}
\E_{x\sim\muerg}\bigl[\|\rwm(x,\nnp(x),\Qpi(x))\|^2\bigr],
\label{eqn:pathloss}
\end{equation}
the residual on the very measure the policy induces. Algorithm~\ref{alg:deqn} assembles the four
components: simulate a fresh batch, take one gradient step per state, and re-simulate under the
updated policy.

\begin{algorithm}[t]
\caption{Pathwise residual solver (the DEQN / deep-learning Euler-residual baseline), assembling the
four components of Section~\ref{sec:gap}. Algorithm~\ref{alg:ewm} is its equilibrium-world-model
counterpart, differing only in the sampling measure (line~\ref{algd:sample}) and an added world arm.}
\label{alg:deqn}
\begin{algorithmic}[1]
\Require exact transition $\Gamma$, residual $\rwm$, exact continuation $\Qpi(\cdot)$ via quadrature
on $\Gamma$; policy class $\Pi_m$.
\State \textbf{(i)~approximator:} initialize the policy network $\nnp$ at random \Comment{ab initio: no reference solution}
\For{episode $=1,2,\dots$}
  \State \textbf{(iv)~sample:} simulate $\Gamma$ under $\nnp$ to draw states $x\sim\muerg$ \label{algd:sample}
  \State \textbf{(ii)~loss:} compute $\Qpi(x)$ by quadrature on $\Gamma$ and the residual $\rwm(x,\nnp(x),\Qpi(x))$
  \State \textbf{(iii)~update:} mini-batch Adam step on $\theta$ descending the path loss \eqref{eqn:pathloss}
\EndFor
\State \Return $\nnp$; report the out-of-sample exact residual $R^{H}$ (defined in \eqref{eqn:RH}) on fresh simulated states
\end{algorithmic}
\end{algorithm}

The sampler feeds the approximator, whose policy determines the sampler: a closed loop whose fixed
points are configurations in which the policy is consistent on its own ergodic set and need not be
consistent anywhere else. This is the computational image of a self-confirming equilibrium
\citep[Ch.~6]{sargent1999conquest}: beliefs are confirmed by the data the agent's own behavior
generates, and off-path behavior is never tested. A pathwise solver imposes the residual only on its own path; states it does not visit are
simply left uncertified, neither tested nor confirmed. The
empirical signatures are a low on-path residual coexisting with a large off-ergodic residual (rare
regimes, generalized impulse-response functions, tails) and strong dependence on the random seed,
distinct seeds settling into distinct self-confirming basins; this is also the closed loop the
literature on these solvers recognizes as a source of training instability, since a large policy update
can shift the state distribution into sparsely sampled regions and inflate gradients
\citep{azinovicDEEPEQUILIBRIUMNETS2022,azinovicyangzemlicka2025sequencespace}. The standard remedies in
current practice (widening the initial sampling cloud, freezing intermediate iterates, multi-start
simulation, feasibility penalties) are ex-ante engineering fixes that treat the symptom without
addressing its structural cause, the endogenous simulator itself; EWM instead makes the enforcement
distribution an explicit object of the method. We call this the
coverage gap: the exact residual certifies the solution only on the support it is imposed on, and
on-path training makes that support the policy's own simulated experience.

\paragraph{Why the coverage gap cannot be dodged.} Three objections answer themselves once the use of the solution is kept in view. (i)~That the ergodic simulation rarely visits these states. But rare visits are no reason to leave them uncertified: counterfactuals, impulse responses, stress tests, and welfare integrals all query precisely these regions, so a regime of small but positive probability is an object of study, not a curiosity to be averaged away. (ii)~That the gap could be dissolved by enlarging the training box. It cannot: the reachable region is a thin, curved set carved out by the model's own dynamics, so a hypercube drawn around it has volume that explodes with dimension, spending almost all of its exact evaluations where the policy is never queried while still under-resolving the slender region that matters. EWM instead rolls the exact transition forward from stressed and rare seeds, concentrating training where the economics happens; a placebo control draws the same budget of states from a uniform box rather than the exact transition, isolating whether the gain comes from the structural dynamics or merely from sampling more states. Consistent with the volume argument, a box can still blanket a low-dimensional benchmark, so whether it substitutes is governed by the dimension. (iii)~That coverage is importance sampling under another
name. It is not: the off-path error of the residual decomposition below is a
distance between measures with no sample-size rate, so unlimited path data
leave it untouched, and the measure enters the definition of the computed
equilibrium itself; the argument is completed where coverage is defined
(Section~\ref{sub:coverage}).

\subsection{The equilibrium world model: a perceived law of motion for continuation objects}\label{sub:econ}

Relative to the pathwise residual loop of Algorithm~\ref{alg:deqn}, EWM changes two things. The first change is
where the residual is imposed: on a deliberately
enlarged coverage measure
\begin{equation}
\muK \;=\; \rho_1\,\muerg \;+\; \rho_2\,\mu^{\mathrm{stress}}_\kappa
\;+\; \rho_3\,\mu^{\mathrm{local}}_\kappa, \qquad\textstyle\sum_i\rho_i=1,
\label{eqn:muK}
\end{equation}
which mixes the on-policy ergodic component with a stress component that reaches the rare,
off-ergodic regimes the path almost never visits and a local component that fills admissible
neighborhoods of the path, every component generated by the exact transition $\Gamma$ rather than an
arbitrary hypercube (Figure~\ref{fig:coverage}). The reach $\kappa$ indexes how far off the ergodic
set the residual is imposed;\footnote{In plain terms, $\kappa$ is a dial for how far beyond the
model's typical (ergodic) behavior we insist the equilibrium conditions hold: at $\kappa=0$ only the
policy's own simulated path, and as $\kappa$ grows, ever rarer and more stressed states, so a larger
$\kappa$ enforces the same conditions on a wider set.} its schedule and the bridge it opens are
developed in Section~\ref{sub:coverage}. The second change is how the forward-looking continuation is
evaluated. In the pathwise loop (Algorithm~\ref{alg:deqn}) the continuation $\Qpi$, an expectation
over next period that itself evaluates the policy network at every quadrature node, is recomputed by
exact quadrature at every visited state, the dominant per-step cost. EWM may instead carry it with a
cheap, small neural-network surrogate $\wm(x)\approx\Qpi(x)$, fit to the exact continuation and
audited against it; being a smooth, differentiable function rather than a quadrature sum, the
surrogate also removes the per-step re-quadrature and supplies clean gradients, a numerical advantage
over the exact rule.

\begin{figure}[t]
\centering
\resizebox{\textwidth}{!}{
\begin{tikzpicture}[
  panel/.style={draw=black!35, thin},
  ergdot/.style={circle, fill=blue!55!black, inner sep=1.1pt},
  ergfaint/.style={circle, draw=black!40, fill=none, line width=0.3pt, inner sep=1.0pt},
  strdot/.style={circle, fill=red!72!black, inner sep=1.1pt},
  locdot/.style={circle, fill=orange!85!black, inner sep=1.1pt},
  lbl/.style={font=\small\bfseries},
  ax/.style={font=\scriptsize, black!60}]
\begin{scope}[xshift=0cm]
  \draw[panel] (0,0) rectangle (3.2,2.4);
  \node[lbl] at (1.6,2.75) {(a) ergodic $\muerg$};
  \node[ax] at (1.6,-0.32) {capital $k$};
  \node[ax,rotate=90] at (-0.30,1.2) {productivity $z$};
  \foreach \x/\y in {1.7/1.3,1.9/1.25,2.0/1.4,1.8/1.45,2.05/1.2,1.95/1.55,1.75/1.2,2.1/1.35,1.85/1.3,2.0/1.5,1.9/1.4,1.8/1.35}
    \node[ergdot] at (\x,\y) {};
\end{scope}
\begin{scope}[xshift=4.3cm]
  \draw[panel] (0,0) rectangle (3.2,2.4);
  \node[lbl] at (1.6,2.75) {(b) $+$ stress $\mu^{\mathrm{stress}}_\kappa$};
  \node[ax] at (1.6,-0.32) {capital $k$};
  \foreach \x/\y in {1.7/1.3,1.9/1.25,2.0/1.4,1.8/1.45,2.05/1.2,1.95/1.55,1.85/1.3,2.0/1.5}
    \node[ergfaint] at (\x,\y) {};
  \draw[->,>=stealth,red!72!black,thin] (1.65,1.18) .. controls (1.05,0.9) and (1.0,0.62) .. (0.88,0.6);
  \foreach \x/\y in {0.5/0.5,0.65/0.4,0.45/0.65,0.7/0.6,0.55/0.35,0.8/0.5,0.6/0.75,0.4/0.5,0.78/0.4,0.9/0.66}
    \node[strdot] at (\x,\y) {};
\end{scope}
\begin{scope}[xshift=8.6cm]
  \draw[panel] (0,0) rectangle (3.2,2.4);
  \node[lbl] at (1.6,2.75) {(c) $+$ local $\mu^{\mathrm{local}}_\kappa$};
  \node[ax] at (1.6,-0.32) {capital $k$};
  \foreach \x/\y in {1.7/1.3,1.9/1.25,2.0/1.4,1.8/1.45,2.05/1.2,1.95/1.55,1.85/1.3,2.0/1.5}
    \node[ergfaint] at (\x,\y) {};
  \foreach \x/\y in {1.4/1.1,1.5/1.7,2.3/1.12,2.35/1.6,1.45/1.42,2.32/1.4,1.7/0.86,2.12/0.9,1.7/1.85,2.1/1.82,1.55/0.96,2.25/1.85}
    \node[locdot] at (\x,\y) {};
\end{scope}
\end{tikzpicture}}
\caption{Stylized schematic of the coverage measure \eqref{eqn:muK},
$\muK=\rho_1\muerg+\rho_2\mu^{\mathrm{stress}}_\kappa+\rho_3\mu^{\mathrm{local}}_\kappa$, drawn for a
two-dimensional state $(k,z)$ of capital $k$ (horizontal axis) and productivity $z$ (vertical), as
in the Brock--Mirman economy with a rare disaster of Section~\ref{sub:bmdc}. (a)~The ergodic
component $\muerg$ is the policy's own simulated path, a thin on-path cloud at high capital. (b)~The
stress component places mass in the rare, off-ergodic region the path almost never visits, here
drawn-down capital under the disaster regime, reached by rolling stressed seeds forward through the
exact transition $\Gamma$ (arrow). (c)~The local component fills admissible neighborhoods of the
path. The exact residual is imposed on the union, so the solution is certified precisely where the
policy's own simulation places little or no weight. The point clouds are illustrative, not sampled
data.}
\label{fig:coverage}
\end{figure}

These two evaluations of the residual \eqref{eqn:residual}, distinguished only by how the
continuation $\Qpi$ inside it is computed, we name explicitly:
\begin{align}
\text{exact (high-fidelity):}\quad
   & R^{H}(x)\;:=\;\rwm\bigl(x,\nnp(x),\Qpi(x)\bigr), \label{eqn:RH}\\[2pt]
\text{surrogate (low-fidelity):}\quad
   & R^{L}(x)\;:=\;\rwm\bigl(x,\nnp(x),\wm(x)\bigr). \label{eqn:RL}
\end{align}
The exact residual $R^{H}$ evaluates the conditional expectation $\Qpi$ by exact quadrature on
the known transition $\Gamma$; the surrogate residual $R^{L}$ replaces $\Qpi$ by the learned
continuation $\wm(x)\approx\Qpi(x)$.

\paragraph{Audit convention.} The two residuals play different roles, which we fix once here: the
exact residual $R^{H}$ of \eqref{eqn:RH} is the reported equilibrium certificate, and the surrogate
residual $R^{L}$ of \eqref{eqn:RL} is an internal training object. ``Exact'' means evaluated with the
high-fidelity quadrature that defines the numerical model; any error relative to a continuous-shock model is a separate, refinable
approximation error. Throughout, only $R^{H}$ on held-out states, drawn independently of the
training batch and never used in any gradient step, is reported as an equilibrium error. The surrogate is therefore an accelerator of the exact
equilibrium conditions, never a substitute for them, and in our experiments also a stabilizer: it
reaches a verified solution from a larger share of random seeds than the surrogate-free coverage arm
(Section~\ref{sec:exp}).

EWM is therefore neither a DEQN, nor a world model in the sense of \citet{Ha2018WorldModels}, nor a parameterized-expectations method of the \citet{marcet:88}--\citet{denhaan2010} lineage: a DEQN is unsupervised with no continuation target, a Ha--Schmidhuber world model is supervised against simulation but carries no equilibrium fixed point, and a parameterized-expectations method learns a state-only continuation along the equilibrium path. EWM is the combination, an unsupervised equilibrium objective whose off-path continuation is supplied, and audited, by a supervised, structurally-anchored world model.

EWM trains the policy and the world model together
(Figure~\ref{fig:solver_vs_ewm}), alternating two
updates rather than minimizing a single joint objective: the policy arm
$\mathcal{L}_{\text{policy}}(\theta;\psi)$ is descended in $\theta$ with the world model held
fixed, and the world arm $\mathcal{L}_{\text{world}}(\psi;\theta)$ in $\psi$ with the policy
held fixed (the semicolon marks the held-fixed argument), both expectations taken over the
imagined coverage measure $\muK$:\footnote{The squared norm is a convention, not a theorem:
equilibrium is the zero set $\rwm=0$, so any loss with that zero set (mean absolute error, Huber)
identifies the same policy. We use it because it is smooth and renders the world arm
\eqref{eqn:Lworld} an $L^2(\muK)$ projection, the orthogonal structure the residual decomposition
(Proposition~\ref{prop:decomp}) and off-path bound (Theorem~\ref{thm:offpath}) exploit.}
\begin{align}
\mathcal{L}_{\text{policy}}(\theta;\psi)
&=\E_{x\sim\muK}\Big[\;
\underbrace{\big\|\,\rwm\big(x,\nnp(x),\,\wm(x)\big)\,\big\|^2}_{\substack{\text{squared surrogate residual }R^{L}\\ \text{(Euler}+\text{KKT/Fischer--Burmeister}+\text{market clearing)}}}
\;+\;\omega\;\underbrace{\calK\big(x,\nnp(x)\big)}_{\substack{\text{feasibility (admissibility);}\\ \omega\calK\equiv0\text{ in our runs}}}
\;\Big],
\label{eqn:Lpolicy}\\[2pt]
\mathcal{L}_{\text{world}}(\psi;\theta)
&=\E_{x\sim\muK}\Big[\;
\underbrace{\big\|\,\wm(x)-\Qpi(x)\,\big\|^2}_{\substack{\text{world model matches the exact}\\ \text{continuation }\Qpi\text{ computed on }\Gamma}}
\;\Big].
\label{eqn:Lworld}
\end{align}
The term $\omega\,\calK$ in \eqref{eqn:Lpolicy} is an optional admissibility (feasibility) penalty,
off in all our runs ($\omega=0$) because the softplus output heads enforce feasibility directly; it
is retained only so the objective also covers models without such heads.

\begin{figure}[ht!]
\centering
\resizebox{\textwidth}{!}{
\begin{tikzpicture}[
    deqnstep/.style={rectangle, draw=blue!55!black, thick, fill=blue!5,
        minimum width=4.7cm, minimum height=0.9cm, font=\small,
        rounded corners=3pt, align=center},
    ewmstep/.style={rectangle, draw=red!60!black, thick, fill=red!5,
        minimum width=4.7cm, minimum height=0.9cm, font=\small,
        rounded corners=3pt, align=center},
    arrB/.style={-{Stealth[length=2.5mm]}, thick, blue!55!black},
    arrR/.style={-{Stealth[length=2.5mm]}, thick, red!60!black},
    fbR/.style={-{Stealth[length=2.5mm]}, thick, red!60!black, dashed}
]
    \node[font=\small\bfseries, blue!55!black, align=center] at (-4.4,4.7)
      {Unsupervised residual solver};
    \node[deqnstep] (d1) at (-4.4,3.6)
      {States $x_i\sim\muerg$ \\ on-policy simulation};
    \node[deqnstep] (d2) at (-4.4,2.0)
      {Continuation $\Qpi(x_i)$ \\ exact quadrature on $\Gamma$};
    \node[deqnstep] (d3) at (-4.4,0.4)
      {Loss $\|\rwm(x_i,\nnp(x_i),\Qpi)\|^2$ \\ residual on $\theta$};
    \node[deqnstep] (d4) at (-4.4,-1.2)
      {SGD on $\theta\to\theta^\star$};
    \draw[arrB] (d1) -- (d2);
    \draw[arrB] (d2) -- (d3);
    \draw[arrB] (d3) -- (d4);

    \node[font=\small\bfseries, red!60!black] at (4.4,4.7) {EWM};
    \node[ewmstep] (e1) at (4.4,3.6)
      {States $x_i\sim\muK$ \\ coverage mixture};
    \node[ewmstep] (e2) at (4.4,2.0)
      {Exact continuation $\Qpi(x_i)$ \\ quadrature on $\Gamma$, once per episode};
    \node[ewmstep] (e3) at (4.4,0.4)
      {World arm: fit surrogate \\ $\psi\leftarrow\min_\psi\|\wm(x_i)-\Qpi(x_i)\|^2$};
    \node[ewmstep] (e4) at (4.4,-1.2)
      {Policy arm: drive residual \\ $\theta\leftarrow\min_\theta\|\rwm(x_i,\nnp(x_i),\wm(x_i))\|^2$};
    \draw[arrR] (e1) -- (e2);
    \draw[arrR] (e2) -- (e3);
    \draw[arrR] (e3) -- (e4);
    \draw[fbR] (e4.east) .. controls +(1.7,0) and +(1.7,0) .. (e1.east)
      node[midway, right, font=\scriptsize, red!60!black, align=left]
      {updated $\theta$:\\ re-simulate $\muK$};
\end{tikzpicture}}
\caption{Per-iteration structure of an unsupervised residual solver (left) versus EWM
(right). The pathwise solver \citep{maliarwinant2021,azinovicDEEPEQUILIBRIUMNETS2022}
samples its own simulated path $\muerg$ and recomputes the exact continuation $\Qpi$ by
quadrature on $\Gamma$ at every visited state, with the policy $\theta$ the only trainable
object. EWM samples from a coverage measure $\muK$ reaching the rare and stressed regions,
computes the exact continuation $\Qpi$ once per episode; the two arms then run in sequence, the
world arm fitting the surrogate $\wm$ to $\Qpi$ and the policy arm driving the residual with
$\wm$, rather than as a single joint update. The dashed arrow marks the updated policy re-entering
the next episode's coverage simulation.}
\label{fig:solver_vs_ewm}
\end{figure}

Read economically, the world model is a perceived law of motion for continuation objects:
it does not condition on every household, firm, and price history but summarizes the environment by
a compact predictive state, and in a macro economy the decision-relevant predictions are exactly
the continuation objects, future marginal utilities, returns, prices, and multipliers. Like
any learned object, the surrogate $\wm$ is reliable only on the states it was fit to and untested
elsewhere; so the coverage states it is trained on, and on which the residual is imposed, are what
decide whether the solution is certified only on the policy's own path or also off it. When the
aggregate state is itself a distribution, the learned object is a distributional encoder $h_\psi$
rather than a state-only surrogate (Section~\ref{sub:variants}).

\begin{definition}[Equilibrium world model]\label{def:ewm}
An equilibrium world model consists of (1) an exact, known structural transition
$\Gamma$; (2) an exact equilibrium residual $R$; (3) a coverage measure $\muK$ generated by
$\Gamma$; (4) a learned world object; and (5) an audit discipline that reports only the
exact held-out residual. The learned world object is one of: a continuation surrogate $\wm(x)$,
an action-conditioned continuation $\widehat W(x,a)$, or a distributional encoder $h_\psi$.
\end{definition}

\noindent This invariant core is common to all three world objects; the object itself is the one
component each economic model switches on. Section~\ref{sub:components} catalogues the components and
what each keeps fixed, Section~\ref{sub:variants} the three world objects, and the development that
follows specializes to the state-only continuation $\wm(x)$.

\subsection{Components and invariants}\label{sub:components}

The tuple $\mathrm{EWM}_\kappa$ above collects the six components of an EWM; we
now state which are held fixed and which vary across economic models. Three are fixed by the
economic model itself, the problem specification that any residual-based solver of the kind in
Algorithm~\ref{alg:deqn} already inherits: the policy $\nnp$, the exact transition $\Gamma$, and
the exact residual $\rwm$. The other three, the coverage measure $\muK$, the world component, and
the exact-residual audit, are what an EWM adds, and they form a modular reliability architecture
rather than a single device.

\paragraph{What never changes in an EWM.} Whatever the economic model, four things are invariant in
an EWM as we formulate it, and together
they are what keeps imagining off-path from drifting into a self-confirming fiction: (i) the
transition $\Gamma$ is never learned, every imagined next state being the exact
$\Gamma$ of \eqref{eqn:transition} at the learned action $a=\nnp(x)$, so the world model imagines
the value of the continuation, never the dynamics that generate it; (ii) equilibrium is the
exact residual $\rwm=0$, not an arbitrary reward, and coverage and monitoring objects never enter
it or supply policy labels; (iii) the world component never defines equilibrium, entering
only the continuation slot of that residual; and (iv) accuracy is the exact held-out
residual $R^{H}$, with the surrogate audited against exact quadrature on sensitive states and
fallback to exact wherever they disagree.

\paragraph{What varies across models.} The coverage measure $\muK$ and the exact-residual audit
are an invariant core present in every application; the world component takes whichever form the
model calls for, a continuation surrogate $\wm(x)$, its action-conditioned extension
$\widehat W(x,a)$, or a distributional encoder $h_\psi$ (Section~\ref{sub:variants}).
Table~\ref{tab:components} gives, for each of these five, the problem it solves and when it is
needed; that coverage, not the surrogate, is decisive is a statement about off-path accuracy on the
finite-dimensional models, not that the other components are dispensable.

\begin{table}[t!]\centering\small
\begin{tabular}{p{0.21\textwidth}p{0.40\textwidth}p{0.30\textwidth}}
\toprule
\textbf{Component} & \textbf{Problem it solves} & \textbf{When it is needed} \\
\midrule
Coverage measure $\muK$
 & closes the self-confirming gap; the decisive lever for off-path accuracy
 & whenever off-path or counterfactual regions matter \\[4pt]
Exact-residual audit ($R^{H}$ on held-out states)
 & certification; the operational distance-to-rational-expectations metric
 & always on; it is what makes any accuracy claim verifiable \\[4pt]
Continuation surrogate $\wm(x)$
 & conditioning, stabilization, and amortization of the off-path expectation
 & used throughout; becomes necessary where the exact residual is too stiff to descend or the integral too costly to repeat \\[4pt]
Action-conditioned continuation $\widehat W(x,a)$
 & identification when an action moves the transition measure, not the next state
 & endogenous-protection and other measure-moving-action settings \\[4pt]
Distributional encoder $h_\psi$ + anti-collapse regularizer (SIGReg)
 & perception for distribution-valued states; a stable low-dimensional population view
 & when the aggregate state is a cross-sectional distribution \\
\bottomrule
\end{tabular}
\caption{The components of an EWM. The columns are: Component, the EWM ingredient; Problem it
solves, the role that ingredient plays; and When it is needed, the settings in which it is switched
on. The first two rows are the invariant core, present in every application; the last three are
switched on by the economic model. SIGReg, in the final row, is an anti-collapse regularizer
that keeps the encoder from mapping every population to the same uninformative embedding, by
spreading the embeddings to fill an isotropic Gaussian (Appendix~\ref{app:encoder_what}).}
\label{tab:components}
\end{table}

\subsection{Coverage: imagination off the manifold, the \texorpdfstring{$\SCE\!\to\!\RE$}{SCE-to-RE} sieve}\label{sub:coverage}

Section~\ref{sub:econ} introduced the coverage measure $\muK$ as the first of the two changes;
this section develops its reach $\kappa$, the schedule that enlarges it, and the bridge it opens
from a self-confirming equilibrium toward rational expectations. Recall that $\muK$ of
\eqref{eqn:muK} replaces the ergodic training measure $\muerg$ in \eqref{eqn:pathloss} by the
economic counterpart of the set of states a world model rehearses on,
mixing the on-policy ergodic component with a stress component that imagines rare, off-ergodic
regimes (a disaster shock, or a distribution pushed forward under an adverse regime) and a local
component that perturbs visited states within the admissible domain (Figure~\ref{fig:coverage}).
Every component is generated by the exact transition $\Gamma$ (the distributional case uses
$\Phi$), never an arbitrary hypercube; the seed distributions, mixing weights $\rho_i$,
perturbation radii, disaster depth, and the roll horizon $H$ (the number of exact-transition steps a
stressed seed is pushed forward) are specified concretely for each model
below.\footnote{Across the models below, $H$ is typically $3$ or $5$; it is a coverage design
parameter set by experimentation, not tuned against the held-out residual $R^H$.}
The index $\kappa$ is the reach of
imagination, the radius and mass of off-ergodic support on which the exact
residual is enforced. Imposing the exact residual on this measure, while keeping exact quadrature for the
continuation, isolates the first change on its own,
\begin{equation}
\E_{x\sim\muK}\bigl[\|\rwm(x,\nnp(x),\Qpi(x))\|^2 + \omega\,\calK(x,\nnp(x))\bigr],
\label{eqn:cov_policy_loss}
\end{equation}
which is the baseline objective \eqref{eqn:pathloss} with the ergodic measure $\muerg$ replaced
by the coverage measure $\muK$ and nothing else touched. Because it still evaluates the exact
continuation $\Qpi$, it is not yet an EWM: it makes only the first change, the enlarged
measure, and so serves as the control that isolates what coverage alone buys; the equilibrium
world model's training loss \eqref{eqn:Lpolicy} is this same objective with $\Qpi$ replaced by the
learned surrogate $\wm$, the second change. As in \eqref{eqn:Lpolicy}, the admissibility penalty is
off in our runs ($\omega\calK\equiv0$), so \eqref{eqn:cov_policy_loss} reduces to the mean squared
exact residual. The solver runs ab initio: the first coverage stage starts from random
initialization and later stages warm-start from the previous EWM stage, requiring no known,
steady-state, or reference solution; prior information can be folded in when available, but is never
needed.

At $\kappa=0$, where $\muK=\muerg$,
\eqref{eqn:cov_policy_loss} is exactly the self-confirming objective \eqref{eqn:pathloss}, the
world model imagining only on-path; as $\kappa$ grows, $\muK$ fills the region the economy can
reach and the residual comes to hold everywhere the economy can go, not merely where the policy
happens to go, which is rational expectations. Imagination is thus the sieve that carries the
solver from a self-confirming equilibrium toward rational expectations, every imagined state
anchored to the exact transition $\Gamma$ and judged by the exact residual $R^{H}$;
Section~\ref{sec:theory} makes the bridge precise.

\paragraph{What $\kappa$ is, concretely.} The index $\kappa$ is not an abstract
parameter; it is the stage of a schedule that literally enlarges the off-ergodic
part of $\muK$. Increasing $\kappa$ means turning specific, namable dials: raising
the mass $\rho_2,\rho_3$ on the off-path components of \eqref{eqn:muK}; widening
the perturbation radius of the local component; raising the frequency and depth of
the forced rare regime in the stress component; and rolling stressed
cross-sections further off the manifold, where rolling forward means iterating the exact
transition $x'=\Gamma(x,\nnp(x),\varepsilon')$ for $H$ steps under the current policy, never a
learned simulator. The stages this produces, from the ergodic path through a local neighborhood and the rare regime
to deep off-manifold cross-sections, are the bridge of Figure~\ref{fig:bridge}.

\paragraph{Is the coverage measure not itself self-confirming?} Since coverage states are
generated by rolling stressed seeds forward, $\muK$ depends on $\nnp$, raising the worry that the
self-confirming loop has merely been displaced. It has not: the seeds are placed off the ergodic
set by construction (capital displaced, the disaster forced), not by the policy's own dynamics,
and the forward map is the exact $\Gamma$, never a learned simulator
(Section~\ref{sub:discipline}), so the only policy-dependence is through the equilibrium action,
which the exact residual drives toward the correct value at every imagined state. We prove below
that this is a well-defined population fixed point, not a displaced self-confirming loop; whether
the nonconvex algorithm reaches it is left to the experiments. The one remaining risk, a coverage
hole the optimal policy reaches but the seeds never do, cannot hide either: it surfaces as a high
residual on the frozen, policy-independent held-out set.

\begin{figure}[ht!]
\centering
\resizebox{\textwidth}{!}{
\begin{tikzpicture}[
    box/.style={draw, rounded corners=2pt, minimum width=2.5cm,
      minimum height=1.45cm, align=center, font=\scriptsize, inner sep=2pt,
      line width=0.6pt},
    scebox/.style={box, fill=BrickRed!10, draw=BrickRed!70},
    midbox/.style={box, fill=Goldenrod!18, draw=Goldenrod!70!black},
    rebox/.style={box, fill=MidnightBlue!12, draw=MidnightBlue!70},
    arrow/.style={-{Stealth}, line width=0.7pt, shorten >=2pt, shorten <=2pt}
]
  \node[font=\scriptsize\itshape, align=center] at (8.0, 4.05)
    {\textsc{held fixed:} structural model $\Gamma$, residual $\rwm$, equilibrium
     definition $\rwm=0$, exact transition.};
  \draw[-{Stealth}, line width=0.5pt, gray!55!black] (0.4, 3.45) -- (15.6, 3.45)
    node[midway, above=2pt, font=\scriptsize]
    {coverage reach $\kappa$: off-ergodic coverage support grows
     (on-path at left $\to$ full reachable support at right)};
  \node[font=\scriptsize, text=gray!45!black] at (3.2, 2.7) {$\muK$ widens};
  \node[font=\scriptsize, text=gray!45!black] at (6.4, 2.7) {$\muK$ widens};
  \node[font=\scriptsize, text=gray!45!black] at (9.6, 2.7) {$\muK$ widens};
  \node[font=\scriptsize, text=gray!45!black] at (12.8, 2.7) {$\muK\!\to$ full supp.};
  \node[scebox] (B1) at (1.6, 1.0)
    {\textbf{SCE}\\$\kappa=0$\\on-path only:\\cover where you go};
  \node[midbox] (B2) at (4.8, 1.0)
    {$\kappa=1$\\$+$ local\\neighborhood};
  \node[midbox] (B3) at (8.0, 1.0)
    {$\kappa=2$\\$+$ rare /\\disaster regime};
  \node[midbox] (B4) at (11.2, 1.0)
    {$\kappa=K$\\$+$ deep off-manifold\\cross-sections};
  \node[rebox] (B5) at (14.4, 1.0)
    {\textbf{RE}\\$\kappa\to\infty$\\imagine on full\\reachable support};
  \draw[arrow] (B1) -- (B2);
  \draw[arrow] (B2) -- (B3);
  \draw[arrow] (B3) -- (B4);
  \draw[arrow] (B4) -- (B5);
  \draw[decorate, decoration={brace, amplitude=4pt, mirror}, gray!60]
    (0.4, -0.05) -- (12.4, -0.05)
    node[midway, below=3pt, font=\scriptsize, text=gray!45!black]
      {Theorem~\ref{thm:br}: coverage-confirmed fixed point on
       $\muK$ at finite $\kappa$};
  \draw[decorate, decoration={brace, amplitude=4pt, mirror}, gray!60]
    (12.6, -0.05) -- (15.6, -0.05)
    node[midway, below=3pt, font=\scriptsize, text=gray!45!black]
      {Theorem~\ref{thm:consist}: RE};
  \node[font=\scriptsize\itshape, align=center, text=gray!50!black]
    at (8.0, -1.35)
    {\textsc{progress diagnostic:} off-equilibrium residual
     shrinks as the imagined support enlarges ($\kappa\to\infty$).};
\end{tikzpicture}
}
\caption{The $\kappa$-homotopy as a bridge from a self-confirming equilibrium to
rational expectations on the coverage axis. Each stage enlarges the imagined support
$\muK$, from the ergodic path to a neighborhood, the rare regime, and post-shock
cross-sections, imposing the same exact residual on strictly more of the reachable
set. At every finite $\kappa$ the fixed point is a coverage-confirmed fixed point on
$\muK$, self-confirming in the strict sense only at $\kappa{=}0$; as the
support fills the region the coverage schedule certifies, accumulation points satisfy the exact
residual everywhere that coverage reaches, its own path together with the audited off-ergodic
regions; filling the entire structurally reachable set is the idealized limit of enriching the
coverage design. The theory below establishes both ends of this bridge.}
\label{fig:bridge}
\end{figure}

\paragraph{The \texorpdfstring{$\kappa$}{kappa}-equilibrium world model and the homotopy.}\label{sub:kewm}
The bridge of Figure~\ref{fig:bridge} is realized by a specific object, which we now define by
specializing the EWM of Definition~\ref{def:ewm} to the finite-dimensional continuation instance the
theory analyzes.

\begin{definition}[Finite-dimensional continuation EWM ($\kappa$-equilibrium world model)]\label{def:kewm}
Fix a perception class $\calF_m$ for the world model and a policy class
$\Pi_m$, both of capacity index $m$, and a coverage measure $\muK$. A pair
$(\nnp,\wm)\in\Pi_m\times\calF_m$ is a $\kappa$-equilibrium world
model if it jointly solves the surrogate-continuation policy arm
\eqref{eqn:Lpolicy} in $\theta$ and the world arm \eqref{eqn:Lworld} in $\psi$.
\end{definition}

\noindent The definition is stated with the surrogate continuation $\wm$ in the policy
residual, the object the fixed-point and consistency arguments below actually analyze, so
that the world model genuinely enters the equilibrium. The exact-continuation coverage loss
\eqref{eqn:cov_policy_loss}, in which $\wm$ is replaced by the exact $\Qpi$, is the
surrogate-free control we use to isolate what the coverage measure alone contributes (Section~\ref{sub:protocol}): a diagnostic
special case, not the definition.

\noindent The surrogate $\wm$ can be any approximator of the continuation; in practice a small
feed-forward network with two hidden layers and $\tanh$ activations, chosen for simplicity rather
than required by the method. We analyze the state-only continuation $\wm(x)$ throughout; the
action-conditioned variant $\widehat W(x,a)$ of Section~\ref{sub:variants} carries none of the
present accuracy claims. The theory below is
stated not for this particular network but for an idealized convex class $\calF_m$ of growing
capacity \citep{cybenko1989approximation,hornik1989multilayer}, which is what lets the projection
and fixed-point arguments invoke a convexity the network itself lacks (Section~\ref{sec:theory}).
Two distinct levers are at work, and we keep them separate throughout: the coverage reach, how far
off its own path the world model is made to imagine (the radius and mass of off-ergodic support in
$\muK$), and the perception capacity, the width of $\calF_m$. On a simple model a small capacity
already spans the continuation, so coverage reach is the binding lever; the experiments isolate
each.

\noindent A $\kappa$-equilibrium world model need not satisfy the equilibrium conditions
exactly: its policy attains the smallest residual achievable in $\Pi_m$ on $\muK$, generically
positive. For $\kappa>0$ we call such a pair a coverage-confirmed fixed point, the
projection-best-response structure imposed on the modeler's chosen $\muK$ rather than on data the
agent's own behavior generates, so it is a self-confirming equilibrium in the strict sense only at
$\kappa{=}0$. Such a pair exists at each $\kappa$ under a single-valued policy best response, as
we establish below.

\noindent The solver realizes Definition~\ref{def:kewm} through the $\kappa$-homotopy of
Figure~\ref{fig:bridge}, a sequence of stages of increasing coverage reach $\kappa$ (and, where
it binds, perception capacity $m$), each warm-started from the last; Section~\ref{sec:theory}
proves it a well-defined bridge from a self-confirming equilibrium to rational expectations.

\paragraph{Coverage is not a sampling method.} The general-interest objection
writes itself: is coverage not importance sampling under another name,
oversampling rare states to estimate the same object more efficiently? It is
not, and the distinction is the difference between an estimator and an
estimand. A sampling design improves the rate at which a finite-sample
objective converges to a fixed population objective; coverage changes the
population objective itself. Formally, the coverage term of
Proposition~\ref{prop:decomp} is a total-variation distance between measures,
not a variance: it carries no sample-size rate, so a pathwise solver with
unlimited draws from its own path retains the full off-path error of one with
a thousand. Reweighting the path's own draws is at best a rare-event-simulation
implementation of a different training measure where the path's support
overlaps it, exponentially expensive exactly in the rare regions at issue, and
unavailable on the counterfactual cross-sections and post-shock populations
the path never generates at all. Nor is there an estimation problem left for a
cleverer sampler to improve: at a pathwise fixed point the on-path objective
is already driven to numerical tolerance, and what is missing is not precision
but constraints, the equilibrium conditions at states the path never visits,
which is what enlarging $\muK$ imposes. The measure accordingly enters the
definition of the computed object, each reach $\kappa$ indexing a distinct
coverage-confirmed fixed point (made precise in Section~\ref{sec:theory}), so
moving $\kappa$ moves the equilibrium concept rather
than sharpening an estimate of one, exactly as requiring optimality off the
equilibrium path is constitutive of a game-theoretic solution concept; asking
whether that is ``just sampling'' is asking whether subgame perfection is just
Nash equilibrium with more histories. The high-dimensional experiments of Section~\ref{sec:exp}
confirm that the gain is neither the point count nor the in-loop compute.

\paragraph{Relation to active and adaptive sampling.} Choosing where to evaluate invites comparison
with Bayesian active learning and adaptive design
\citep{mackay1992information,snoek2012practical,rennerscheidegger_2018}, but coverage is not active
learning: it governs the support of an unsupervised fixed-point constraint whose solution is
endogenous to the policy, not the sampling efficiency of a fixed, exogenous target, so each $\kappa$
indexes a distinct equilibrium concept rather than a better approximation of one. The model's own
uncertainty is in fact an unreliable acquisition signal here, smallest precisely in the rare regimes
where the residual is largest; active learning reappears only one level down, in choosing where to
spend the exact quadrature that trains $\wm$.

\subsection{Training an equilibrium world model}\label{sub:discipline}

Training an EWM runs the same four components as the pathwise solver of
Algorithm~\ref{alg:deqn}, changing the sampling measure in component~(iv) and adding a second
approximator. The function approximator~(i) now carries two networks, the policy $\nnp$ and the
continuation surrogate $\wm$; the loss~(ii) splits into a policy arm and a world arm; the
update~(iii) is a mini-batch gradient step on each; and the sampling~(iv) draws from the coverage
measure $\muK$ in place of the ergodic measure $\muerg$. Concretely, Algorithm~\ref{alg:ewm} runs an
outer loop over the coverage reach $\kappa$, each stage warm-started from the last, and within each
episode takes three steps:
\begin{enumerate}\setlength{\itemsep}{1pt}
\item \textbf{Generate coverage states (sampling).} Simulate $\Gamma$ under $\nnp$, carrying the
on-path batch forward from the previous episode's end state rather than resetting it to a fixed
point, then turn the visited states into off-path imagined states, seeding stressed and rare regions
and locally perturbing the path and rolling forward through the exact $\Gamma$, to draw a batch
$x\sim\muK$.
\item \textbf{Fit the world component (world arm).} Take $E$ mini-batch gradient steps on the world arm
\eqref{eqn:Lworld}, fitting $\wm$ to the exact continuation $\Qpi$ computed once per episode
(line~\ref{alg:Q}).
\item \textbf{Update the policy (policy arm).} Take $E$ mini-batch gradient steps on the policy arm
\eqref{eqn:Lpolicy}, driving the structural residual with the updated $\wm$.
\end{enumerate}
Every state fed to the residual is structurally generated by $\Gamma$ from an admissible seed or
perturbation, never an arbitrary hypercube point;\footnote{For a
practitioner building $\muK$ in a new model, the components of \eqref{eqn:muK} follow a short list
of design rules. (i) Reachability first: every coverage state is generated by the exact transition
$\Gamma$ (or $\Phi$) from an admissible seed, never drawn from a hypercube, for the reasons given
in Section~\ref{sub:coverage}; a box placebo isolates whether box sampling can substitute.
(ii) The cheapest local enlargement is to re-draw the one-step shock, evaluating
$x'=\Gamma(x_t,\nnp(x_t),\varepsilon)$ at alternative innovations $\varepsilon$, which is reachable
by construction; perturbing visited states in economically meaningful coordinates (capital,
productivity, distribution summaries) by a small fraction of the ergodic spread serves the same
purpose. (iii) Stress seeds should trace exactly the post-shock paths that counterfactuals, impulse
responses, and stress tests will later query, and should be placed so that occasionally binding
constraints, a borrowing limit, irreversibility, a zero lower bound, actually bind on a non-trivial
fraction of the pool, since the kink is where a smooth on-path policy extrapolates worst. (iv) When
the state is a distribution, perturb the compact summary (moments, or the learned embedding
$h_\psi(\mu)$) and push the implied population forward by the exact $\Phi$, rather than perturbing
individual agents. (v) Surprise scores are diagnostics, not generators: an ergodic-fit surprise
score locates under-covered regions after the fact, but model-internal
uncertainty is smallest precisely where a self-confirming solution is silently uncertified (see the
discussion of active and adaptive sampling in Section~\ref{sub:coverage}), so structural seeding through $\Gamma$ must
remain the generator. The mixing weights $\rho_i$, radii, depths, and horizon $H$ are then fixed by
the coverage criterion of Appendices~\ref{app:bmdc_spec} and~\ref{app:irbc_cov_spec}, enough mass
that the rare regions are sampled at all, and never tuned against the reported held-out $R^{H}$.} and
every reported error is the exact residual $R^{H}$, evaluated by full high-fidelity quadrature on
states held out from training (drawn independently of the gradient batches).
We advance through the coverage schedule by this exact audit, not by a fixed episode budget: each
stage is trained to verified stationarity at its current coverage reach $\kappa$ (and, where it
binds, perception capacity), its held-out $R^{H}$ is recorded, and only then is the reach enlarged
for the next stage, warm-started from the converged one. The off-path accuracy at every stage is
thus a measured quantity, not a hidden training artifact.

\begin{algorithm}[t]
\caption{EWM training, the same four components as the pathwise solver of
Algorithm~\ref{alg:deqn} mirrored step for step. The only changes are in component~(iv), where states
are drawn from the coverage measure $\muK$ (line~\ref{alg:cov}), and the added world arm that fits
the surrogate (line~\ref{alg:world}). Computing $\Qpi$ once per episode (line~\ref{alg:Q}) and
reusing it across the $E$ inner policy steps is the surrogate's compute saving.}
\label{alg:ewm}
\begin{algorithmic}[1]
\Require exact transition $\Gamma$, residual $\rwm$, exact continuation $\Qpi(\cdot)$ via
quadrature on $\Gamma$; classes $\Pi_m,\calF_m$; coverage schedule
$\kappa_1<\dots<\kappa_S$ (a single stage $S{=}1$ suffices for simple models); mixing
weights $\rho$, rollout horizon $H$, inner steps $E$.
\State \textbf{(i)~approximators:} initialize policy $\nnp$ and surrogate $\wm$ at random \Comment{ab initio: no external/reference warm start, no known solution}
\For{stage $s=1,\dots,S$ \textbf{(reach $\kappa_s$)}} \Comment{warm-start from stage $s{-}1$}
  \For{episode $=1,2,\dots$}
    \State \textbf{(iv)~sample (coverage):} simulate $\Gamma$ under $\nnp$ for on-path states $X_{\mathrm{erg}}\sim\muerg$, carried forward across episodes (never reset);
    \State \hspace{1.2em} seed stressed states (capital off the ergodic level; where a disaster regime exists, set $d{:=}1$), \label{alg:cov}
    \Statex \hspace{1.2em} roll $H$ steps through $\Gamma$; add local perturbations of $X_{\mathrm{erg}}$, clip to feasible;
    \Statex \hspace{1.2em} mix $\muK=\rho_1\muerg+\rho_2\mu^{\mathrm{stress}}_\kappa+\rho_3\mu^{\mathrm{local}}_\kappa$
    \State \textbf{(ii)~target:} draw a batch $x\sim\muK$; compute the exact continuation $\Qpi(x)$ by quadrature on $\Gamma$ \Comment{once per episode} \label{alg:Q}
    \State \textbf{(iii)~world arm} ($E$ SGD steps): $\psi \leftarrow \psi-\nabla_\psi\,\E_{\muK}\|\wm(x)-\sg\,\Qpi(x)\|^2$ \label{alg:world}
    \State \textbf{(iii)~policy arm} ($E$ SGD steps): $\theta \leftarrow \theta-\nabla_\theta\,\E_{\muK}\big[\|\rwm(x,\nnp(x),\wm(x))\|^2+\omega\,\calK(x,\nnp(x))\big]$ \label{alg:pol}
    \State \textbf{(optional) route:} where $\|\wm(x)-\Qpi(x)\|$ exceeds a threshold, substitute $\Qpi(x)$ for $\wm(x)$ in line~\ref{alg:pol} \label{alg:route}
  \EndFor
  \State evaluate held-out exact residual $R^{H}$ and verified stationarity
\EndFor
\State \Return $\nnp$; report $R^{H}$ on held-out states \Comment{the surrogate $R^L$ is never a reported number}
\end{algorithmic}
\end{algorithm}

In the joint limit of full coverage reach and zero world-arm loss, the policy arm is the exact
residual on all reachable states, which is rational expectations
(Theorems~\ref{thm:consist},~\ref{thm:offpath}).

\paragraph{The audited continuation surrogate.}\label{sub:surrogate}

Where the exact residual is too stiff to descend on the coverage region, as in the
disaster-conditioned Euler residual of Section~\ref{sec:exp}, the learned continuation
$\wm(x)\approx\Qpi(x)$ supplies a smooth, bounded target: $\Qpi$ is a live target, a sum over
sharply varying branches (disaster persistence versus recovery), each a fresh evaluation of the very
policy being trained, so differentiating through it from a random start is badly conditioned. The
world arm \eqref{eqn:Lworld} fits $\wm$ to a detached (stop-gradient) high-fidelity continuation
$Q^{H}$ computed on a sparse anchor subsample of the coverage batch; evaluating $Q^{H}$ only on those
anchors rather than at every state in every gradient step is the surrogate's compute saving
(Algorithm~\ref{alg:ewm}, Figure~\ref{fig:amortization}; the stop-gradient and Polyak-target
conventions are detailed in Appendix~\ref{app:arch}), and $\wm$ then carries the continuation to the
bulk of $\muK$. The surrogate never defines the equilibrium: unlike a parameterized expectation
(Section~\ref{sec:related}) it is audited against exact quadrature on sensitive states, with fallback
to exact wherever $R^{L}$ and $R^{H}$ disagree and the gap $R^{H}-R^{L}$ reported, and every reported
equilibrium error is the exact $R^{H}$, never $R^{L}$; it therefore stabilizes the solution of a
fixed equilibrium rather than redefining it. Its value is conditioning, not raw speed: when the exact
expectation is cheap the surrogate and pathwise arms run at essentially equal exact-evaluation
budget, the saving accruing only in higher dimensions. Whether it is decisive is empirical: the
surrogate-free control that keeps $R^{H}$ but imposes the exact residual directly on $\muK$ is already
the single largest accuracy gain on the rare-disaster model (Section~\ref{sub:bmdc}), while where
that residual is too stiff to descend the smooth surrogate conditions it into reliable convergence.

\paragraph{A disciplined dreamer.}\label{sub:whatis}
EWM is, in the end, a disciplined dreamer. Like a world-model agent in
artificial intelligence \citep{Ha2018WorldModels,hafner2023dreamerv3}, it improves its policy by
rehearsing states its own realized path never visits; unlike a generic world model, the transition
is the economy's exact, known law and is never learned, the imagined states are generated by the
model's own equations, and each is judged by the exact equilibrium residual rather than a learned
reward. It is a world model in the Dreamer sense only under a Lucas-critical adaptation: it imagines
off-path and learns a continuation, but never learns $\Gamma$, scores $\rwm=0$ rather than a reward,
and imagines a single period since $\Qpi$ already encodes the forward sum (Appendix~\ref{app:hyper},
Table~\ref{tab:wm_vs_ewm}).

\subsection{A worked example: Brock--Mirman under DEQN and EWM}\label{sub:worked}

In this section, we use the Brock--Mirman growth model as a microscope. It is simple enough that every building
block of an equilibrium world model, the state and its residual, the coverage measure, and the
continuation with its audit, can be read off by hand, so the role of each is visible in isolation.
We proceed in two stages. Section~\ref{sub:bm_basic} introduces the basic model and uses it to
illustrate the EWM construction qualitatively: its equilibrium is smooth and a
standard residual solver already drives it to machine precision, so the structural objects can be
fixed and the coverage construction built explicitly, in a case where it is deliberately inert. Section~\ref{sub:bmdc} then adds a rare disaster, the off-ergodic regime an
equilibrium world model is built for, where coverage stops being inert and the two methods separate.

\subsubsection{The basic model}\label{sub:bm_basic}

A representative agent with log utility chooses next-period capital
$k'$ to maximize $\E\sum_t\beta^t\log c_t$ subject to $c+k'=zk^{\alpha}$, with productivity
following $\log z'=\rho\log z+\varepsilon'$. Under the two assumptions of this basic version, log
utility and full capital depreciation, the equilibrium policy is available in closed form,
$\nnp^{\star}(k,z)=\alpha\beta\,zk^{\alpha}$, which is what makes it a clean laboratory; the disaster
variant of Section~\ref{sub:bmdc}, with partial depreciation, has no closed form. The state
is $x=(k,z)$ and the choice is next-period capital $a=k'$, approximated by the policy network
$\nnet_\theta$ of Section~\ref{sec:gap},
\[
  a \;=\; \nnp(x) \;=\; \nnet_\theta(x),\qquad \nnp:\;x=(k,z)\;\longmapsto\;y=k';
\]
training drives the residual to zero in $\theta$. Equilibrium collapses to
a single Euler equation,
\begin{equation}
\rwm(x,a,Q)=\frac{1}{c}-\beta\,Q=0,\qquad
Q=\Qpi(x)=\E\!\left[\frac{\alpha z'(k')^{\alpha-1}}{c'}\,\Big|\,x,a\right],
\qquad c=zk^{\alpha}-k',
\label{eqn:bm_resid}
\end{equation}
an instance of the general residual \eqref{eqn:residual} in which the continuation
object $\Qpi$ is the expected discounted marginal return to capital.
Table~\ref{tab:bm_map} maps each abstract object of Section~\ref{sec:framework} to its
Brock--Mirman counterpart.

\begin{table}[t]\centering\small

\begin{tabular}{lll}
\toprule
Framework object & Symbol & Brock--Mirman instance \\
\midrule
State        & $x$                       & capital and productivity $(k,z)$ \\
Choice       & $a=\nnp(x)$               & next-period capital $k'$ \\
Transition   & $x'=\Gamma(x,a,\varepsilon')$ & $k$-part $=k'$;\; $\log z'=\rho\log z+\varepsilon'$ \\
Continuation & $\Qpi(x)$                 & $\E[\,\alpha z'(k')^{\alpha-1}/c'\mid x,a\,]$ \\
Residual     & $\rwm(x,a,Q)$             & Euler equation \eqref{eqn:bm_resid} \\
\addlinespace
Training measure (DEQN) & $\muerg$       & the policy's own simulated path \\
Training measure (EWM)  & $\muK$         & exact forward map from stressed seeds \\
\bottomrule
\end{tabular}
\caption{Framework objects of Section~\ref{sec:framework} and their Brock--Mirman
instances. The columns are: Framework object, the abstract object of Section~\ref{sec:framework};
Symbol, its notation; and Brock--Mirman instance, its concrete counterpart in the Brock--Mirman
model. DEQN and EWM share every row but the last: they enforce the same
residual and differ only in the measure on which it is imposed.}\label{tab:bm_map}
\end{table}

The two solvers differ only in how the training batch is generated and in their trainable
objects (Appendix~\ref{app:arch}, Figure~\ref{fig:architecture}): DEQN carries a single policy,
EWM adds the audited continuation surrogate.

\paragraph{DEQN: on-path sampling.}
In DEQN the training states are the policy's own simulated path. Starting from a batch of $M$
feasible states $\{(k^{(m)}_0,z^{(m)}_0)\}_{m=1}^{M}$, we roll the economy forward under the
current policy: for each $m$ and $t=0,1,\dots,T-1$,
\begin{equation}
\varepsilon^{(m)}_{t+1}\sim\Normal(0,\sigma^2),\quad
z^{(m)}_{t+1}=\exp\!\bigl(\rho\log z^{(m)}_t+\varepsilon^{(m)}_{t+1}\bigr),\quad
k^{(m)}_{t+1}=\nnp\bigl(k^{(m)}_t,z^{(m)}_t\bigr).
\label{eqn:bm_deqn_sim}
\end{equation}
The training set is the visited collection
$\mathcal{S}_{\mathrm{path}}=\{(k^{(m)}_t,z^{(m)}_t)\}_{m,t}$, and the policy is trained by
minimizing the mean squared Euler residual accumulated along the simulated
trajectories,
\begin{equation}
\mathcal{L}_{\mathrm{DEQN}}(\theta)=\frac{1}{MT}\sum_{m=1}^{M}\sum_{t=0}^{T-1}
R\bigl(k^{(m)}_t,z^{(m)}_t;\theta\bigr)^2,
\label{eqn:bm_deqn_loss}
\end{equation}
where $R(k,z;\theta)$ is the left-hand side of the Euler equation \eqref{eqn:bm_resid}
evaluated at the network policy, a single number per state. The capital coordinate is whatever the
network currently outputs, so $\mathcal{S}_{\mathrm{path}}$ is distributed as the
policy's own ergodic law $\muerg$; the batch is carried forward from one episode to the next, so as
$\theta$ updates the recursion \eqref{eqn:bm_deqn_sim} drags it along. This is the closed
sampler-approximator loop of \eqref{eqn:pathloss}: the policy is tested only where
it already goes.

\paragraph{EWM: coverage sampling.}
EWM reuses the very same recursion \eqref{eqn:bm_deqn_sim} but imposes the residual on the coverage
measure $\muK=\rho_1\muerg+\rho_2\mu^{\mathrm{stress}}_\kappa+\rho_3\mu^{\mathrm{local}}_\kappa$ of
\eqref{eqn:muK} rather than on $\muerg$ alone. Each batch unions three pools, all generated by the
exact $\Gamma$ of \eqref{eqn:bm_deqn_sim}: the ergodic path itself; a stress pool of fresh
off-ergodic seeds (capital displaced, productivity in the lower tail) rolled $H$ steps forward; and a
local pool of small admissible perturbations of visited states (Figure~\ref{fig:bm_control}
draws the construction). The reach $\kappa$ dials the off-path mass $\rho_2,\rho_3$ and the seed
spread; every off-path state is reachable by $\Gamma$, never a hypercube draw. Writing
$\overline{R^2}_{\mathcal{S}}$ for the mean
squared residual \eqref{eqn:bm_resid} on a set $\mathcal{S}$, minimizing it over this batch gives the
coverage-weighted objective
\begin{equation}
\mathcal{L}_{\mathrm{EWM}}(\theta)=
\rho_1\,\overline{R^2}_{\mathcal{S}_{\mathrm{path}}}
+\rho_2\,\overline{R^2}_{\mathcal{S}_{\mathrm{stress}}}
+\rho_3\,\overline{R^2}_{\mathcal{S}_{\mathrm{local}}},
\label{eqn:bm_ewm_loss}
\end{equation}
the Brock--Mirman instance of the general policy arm \eqref{eqn:Lpolicy}; setting
$\rho_2{=}\rho_3{=}0$ collapses it to the DEQN loss \eqref{eqn:bm_deqn_loss}, while EWM keeps the
off-path weights $\rho_2,\rho_3$ positive, as Figure~\ref{fig:bm_control} draws. Here the continuation is a
one-dimensional expectation over $z'$, cheap by exact quadrature at every state, so the residual is
exact ($R{=}R^{H}$), the world arm \eqref{eqn:Lworld} is trivial ($\wm{=}\Qpi$), and the surrogate
plays no role: Brock--Mirman isolates the contribution of coverage alone, amortized by the surrogate
only where quadrature is costly (Figure~\ref{fig:amortization}).

\begin{figure}[ht!]\centering
\includegraphics[width=0.72\textwidth]{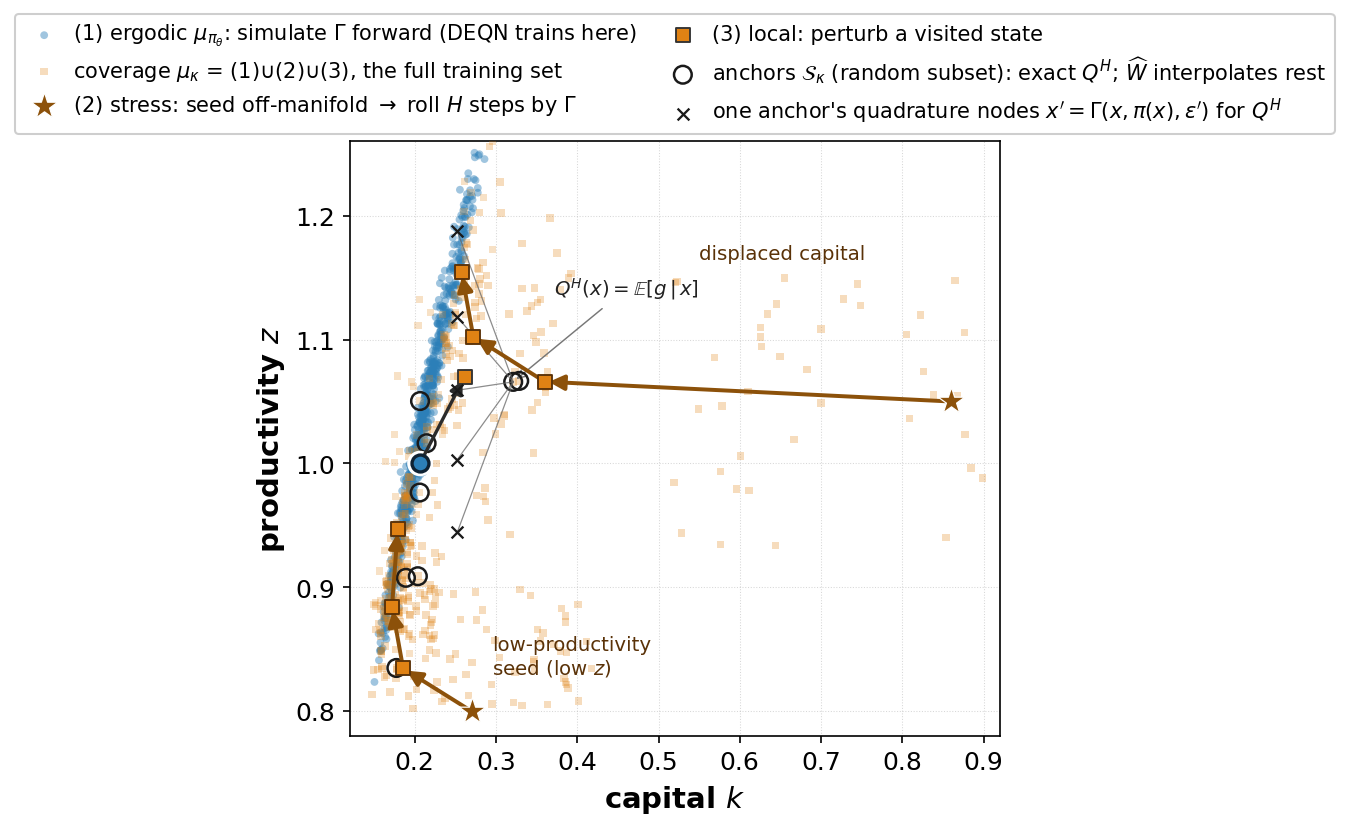}
\caption{How the coverage measure $\muK$ of \eqref{eqn:muK} is built in Brock--Mirman, on the state
$(k,z)$ of capital and productivity, and how the continuation is amortized on it. Building
$\muK$ (the three-step coverage sampling of this section): (1)~ergodic $\muerg$, the
policy's own path, obtained by simulating the exact transition $\Gamma$ forward (blue), the set DEQN
trains on; (2)~stress, seeds drawn off the ergodic set, a low-productivity seed (low $z$) or
displaced capital (brown stars), each rolled $H$ steps forward through the exact $\Gamma$
(arrows), every landing a $\muK$ state (orange squares), the rollout mean-reverting back toward the
ergodic band; and (3)~local, a small perturbation of a visited state (grey arrow). The
training set is the union, $\muK=\rho_1\muerg+\rho_2\mu^{\mathrm{stress}}_\kappa+\rho_3
\mu^{\mathrm{local}}_\kappa$ \eqref{eqn:bm_ewm_loss}; every off-path state is produced by $\Gamma$
from an admissible seed, never an arbitrary box, and each arrow is one exact $\Gamma$-step under the
closed-form policy, not a sketch. On this smooth model the off-path residual is already small, so
coverage is correctly inert here; the contrast appears in the rare-disaster model of
Section~\ref{sub:bmdc}. Amortization: the exact continuation $Q^{H}$ is computed by
quadrature only on a sparse anchor set $\mathcal{S}_\kappa$ (rings), a uniform random subsample of
the coverage states refreshed each episode; at one anchor its exact continuation
$Q^{H}(x)=\E[g\mid x]$ is expanded into the next-period Gauss--Hermite quadrature nodes
$x'=\Gamma(x,\nnp(x),\varepsilon')$ (crosses, joined to the anchor). The surrogate $\wm$ carries the
continuation to the rest, so exact solves scale as
$\mathcal{O}(K)$ with $K\ll N$ while the reported error stays the exact $R^{H}$. On Brock--Mirman the
continuation is closed-form, so $\wm$ is unnecessary ($\wm=\Qpi$) and the anchors are illustrative;
the saving reaches $40\text{--}70\times$ in high dimensions (Section~\ref{sub:irbc}).}
\label{fig:bm_control}\label{fig:amortization}
\end{figure}

Because the residual \eqref{eqn:bm_resid} vanishes identically on and off the ergodic path, DEQN and
EWM are indistinguishable here by construction, with nothing for coverage to repair. Brock--Mirman in
this benign form earns its place not as a performance comparison but as the model where the two arms
and the coverage measure can be drawn and checked by hand (Figure~\ref{fig:bm_control}); the first
measurable gap between the methods appears only once the economy has an off-ergodic region the policy
cannot generalize across, the rare disaster we add next (Section~\ref{sub:bmdc}).

\subsubsection{Brock--Mirman with a rare disaster}\label{sub:bmdc}

We next add the smallest departure from the smooth Brock--Mirman benchmark that creates a genuine
coverage problem. The state is $x=(k,z,d)$, where $d\in\{0,1\}$ is a disaster regime. The disaster is
rare but persistent: it is entered with probability $p_d=0.002$, remains active with probability
$p_{dd}=0.6$, and scales output to
$Y=e^{z}(1-bd)k^{\alpha}$ with $b=0.45$. Its stationary probability is therefore about $0.5\%$.
We also impose irreversible investment. With capital accumulation
$k'=(1-\delta)k+i$ and investment $i=Y-c\ge0$, the non-negativity constraint has multiplier
$\mu\ge0$ and is written as the Fischer--Burmeister equation
$\phi(i,\mu)=i+\mu-\sqrt{i^2+\mu^2}=0$. The constraint binds after a disaster shock: the agent would
like to reduce capital quickly, but disinvestment is infeasible. This kink is precisely the feature
that a policy trained only on the normal ergodic region does not learn to handle.

The exact residual contains the Euler equation and the complementarity condition. The Euler block is
\begin{equation*}
  c\bigl(\beta\Qpi+\mu\bigr)-1=0,
  \qquad
  \Qpi(x)=\E\!\left[\frac{1}{c'}\bigl(\mathrm{MPK}'+1-\delta\bigr)\,\bigm|\,x\right],
\end{equation*}
where
$\mathrm{MPK}'=\alpha e^{z'}(1-bd')(k')^{\alpha-1}$ is the next-period marginal product of
capital, evaluated at the next-period state and policy-implied consumption $c'$. The expectation is
over the next-period productivity innovation and the two-state disaster transition. We use the
calibration $\alpha=0.36$, $\beta=0.95$, $\delta=0.10$, $\rho=0.9$, and $\sigma=0.04$. The purpose of
this example is not to study a family of calibrations. It is to put the coverage gap in a
low-dimensional model where the relevant states can be plotted and where solver variants that differ
only in the measure on which the residual is imposed can be compared directly. Table~\ref{tab:bmdc}
reports ten independent seeds for each variant.

The comparison has three rungs. The pathwise baseline, \texttt{DEQN-path-exact}, imposes the exact
residual only on the policy's own simulated path; it uses no coverage states and no surrogate. The
coverage control, \texttt{DEQN-coverage-exact}, imposes the same exact residual on the enlarged
coverage measure $\muK$; it uses coverage but no surrogate. The full EWM arm,
\texttt{EWM-coverage-surrogate}, uses the same coverage measure and replaces repeated exact
continuation evaluations inside the policy loop by the audited continuation surrogate. Thus the policy
class, the transition law, the calibration, and the held-out exact residual are common across arms.
All runs use $3000$ episodes; the reported policy is a Polyak average over the final $500$ episodes of
the carried-forward simulation.

The results identify coverage as the main accuracy lever. The pathwise arm is accurate in the normal
region but has a disaster-region residual about sixteen times larger. Imposing the exact residual on
the coverage states reduces the disaster residual by $7.3\times$ at the mean and by about $10\times$ in
the tail. The surrogate arm remains more than threefold more accurate in the disaster region than the
pathwise baseline, while using about one sixth of the policy-loop exact-quadrature evaluations of the
surrogate-free coverage control. Normal-region errors remain small in all three arms.

Two held-out diagnostics make the same point. First, the disaster test set is far from the pathwise
training law. Section~\ref{sec:theory} later formalizes this mismatch through the distance between a
test measure and the coverage measure; here we report the empirical ingredients directly. The disaster
regime has only about $0.5\%$ of the ergodic mass but $34\%$ of the coverage mass, so the total-variation
distance between the ergodic law and the coverage law is at least $0.33$. In the capital dimension,
$41\%$ of the disaster-region held-out states lie below the first percentile of the ergodic-path
capital distribution. These are not merely low-probability realizations on the normal path; they are
states that the pathwise objective essentially never samples. Second, the improvement is not only an
average effect. The supremum norm $\lVert R^{H}\rVert_{\infty}$, here the maximum absolute exact
residual over the held-out disaster states, falls from $3.1\times10^{-2}$ for the pathwise arm to
$4.1\times10^{-3}$ for the coverage-exact arm. This is a $7.6\times$ reduction in the worst case. The
surrogate arm is intermediate at $7.4\times10^{-3}$, the same ranking shown by the mean and tail
columns of Table~\ref{tab:bmdc}.

The figures illustrate the mechanism behind the table. Figure~\ref{fig:bm_control} shows where the
coverage states lie in the $(k,z)$ plane. Figure~\ref{fig:bmdc_conv} then reports the dynamic evidence:
the disaster residual over training, the exact-compute frontier, and the residual along a
Koop--Pesaran--Potter generalized impulse response to a disaster onset. The impulse-response path is
rolled forward with the exact transition $\Gamma$. The pathwise arm is least accurate at impact, when
the state is farthest from its ergodic support, and the gap closes as the disaster state mean-reverts.
This is the narrow claim of the example: adding structurally generated coverage states repairs the
self-confirming off-path gap left by a pathwise residual solver.

\begin{table}[t!]\centering\small

\input{tab_bmdc.tex}
\caption{Brock--Mirman with a rare disaster and irreversible investment. The columns are: Arm, the
solver configuration; normal, the median held-out exact residual $R^{H}$ on the on-path region;
disaster mean, the median held-out $R^{H}$ over the disaster-region set; disaster $p_{95}$, its 95th
percentile over that set; $B_{\mathrm{policy}}$, the cumulative policy-loop exact-quadrature budget in
millions; and low-error, the number of seeds whose held-out disaster-region mean residual is below
$10^{-2}$. All arms use the same
calibration, architecture, episode budget, and held-out evaluation sets; they differ only in the
training measure and in whether the continuation is evaluated exactly at every policy step or through
the audited surrogate. Each arm is run for $3000$ episodes on ten independent seeds, and each reported
policy is a Polyak average over the final $500$ episodes. Cells report the median held-out exact
residual $R^{H}$ across seeds, with the interquartile range in brackets. $B_{\mathrm{policy}}$ is the
cumulative policy-loop exact-quadrature budget, in millions of evaluations.
The final column reports the
number of seeds whose held-out disaster-region mean residual is below $10^{-2}$; stationarity is
checked separately by the sup-norm policy drift criterion on a held-out grid.
}\label{tab:bmdc}
\end{table}

\begin{figure}[t!]\centering
\includegraphics[width=\textwidth]{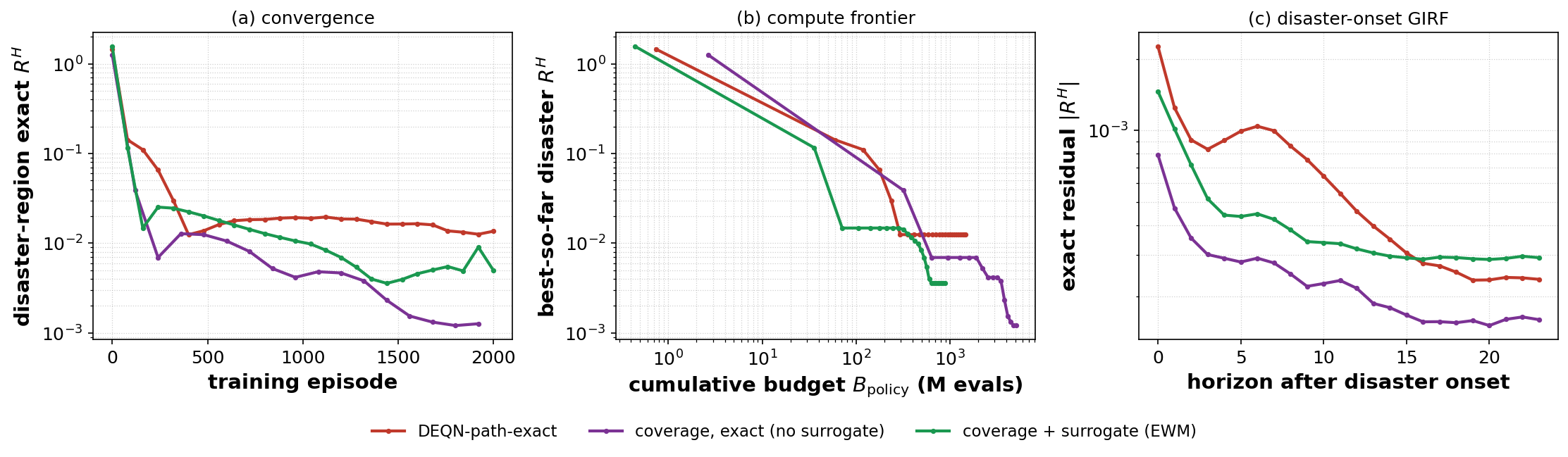}
\caption{Brock--Mirman with a rare disaster: held-out exact disaster residuals by arm. Panel (a) plots
$R^{H}$ in the disaster region against the training episode. All arms improve, but the pathwise
baseline plateaus above the two coverage arms because it does not train on the off-ergodic disaster
region. Panel (b) plots the best-so-far disaster residual against the cumulative policy-loop exact
quadrature budget $B_{\mathrm{policy}}$. The surrogate arm approaches the coverage-exact frontier while
using substantially fewer exact evaluations. Panel (c) plots $|R^{H}|$ along a
Koop--Pesaran--Potter generalized impulse response to a disaster onset \citep{koop1996impulse},
computed as the average difference between paths that enter the disaster at horizon zero and
counterfactual paths that do not, both rolled forward by the exact transition $\Gamma$. The pathwise
arm is least accurate at impact, where the economy is farthest from its ergodic set, and the residual
falls back as the disaster dissipates. The gross $(k,c)$ responses are close across arms; the solver
difference is in how accurately each policy satisfies its equilibrium conditions off the ergodic path.
All panels report exact $R^{H}$ on held-out states.}
\label{fig:bmdc_conv}\label{fig:bmdc_girf}
\end{figure}

\paragraph{The surrogate's disaster residual: amortization, not degradation.} At convergence the
surrogate arm's disaster residual lies between the pathwise baseline and the surrogate-free
coverage-exact arm (Table~\ref{tab:bmdc}). This ordering has a simple interpretation. The
coverage-exact arm recomputes the true quadrature continuation at every policy update. The surrogate
arm computes exact continuation targets on the coverage batch, fits $\wm$ to those targets, and then
uses $\wm$ inside the inner policy updates. The residual difference between the two coverage arms is
therefore the measured continuation-fit error of the surrogate on the covered disaster states. The
formal decomposition in Section~\ref{sec:theory} gives this component a name; no additional object is
being introduced here. In this example the component is small relative to the pathwise coverage gap,
and it buys a sixfold reduction in policy-loop exact evaluations. The result should be read as the
expected accuracy-amortization trade-off of using a learned continuation, not as a failure of the
coverage mechanism.

\paragraph{Coverage, not optimization, is what is missing off-path.} The experiment separates two
possible explanations for the pathwise failure. If optimization were the bottleneck, the residual would
remain large on the states used for training. Instead, the pathwise arm drives its on-path residual to
a small value while its held-out disaster residual remains large. The problem is therefore not that the
policy fails to minimize its own objective; the problem is that the objective omits the disaster states
on which the policy is later evaluated. Section~\ref{sec:theory} formalizes this distinction by
bounding the test residual by an in-coverage residual plus a change-of-measure term. The empirical
reading here is more direct: enlarging $\muK$ puts the missing disaster states into the objective, and
the held-out disaster residual falls.

\paragraph{What is varied in the sweeps.} It is useful to keep two dials separate. In this
Brock--Mirman experiment, increasing the coverage reach $\kappa$ means enlarging the stress and local
parts of $\muK$: more mass on disaster and post-disaster states, larger admissible capital
displacements, and deeper forced-disaster rollouts. Appendices~\ref{app:bmdc_sampling} and~\ref{app:bmdc_cov} give these coverage
settings. The endpoint $\kappa=0$ corresponds to the pathwise row of Table~\ref{tab:bmdc}; the
full-coverage endpoint corresponds to the coverage-exact row. This sweep changes where the exact
residual is imposed. It does not change the network architecture.

At scale, the corresponding sensitivity exercise is different. Figure~\ref{fig:irbc_kappa} varies the
surrogate's perception capacity and fidelity at fixed coverage reach: the surrogate width is increased
from $m=16$ to $32$ to $64$, and the exact-target anchor share from $0.1$ to $0.2$ to $0.4$.
Table~\ref{tab:hyper} gives the network settings, and Appendix~\ref{app:irbc_cov_spec} gives the fixed
coverage measure used in that exercise. Thus Figure~\ref{fig:irbc_kappa} is not evidence from a larger
coverage reach; it asks how large and how well-audited the continuation network must be once the
coverage region has been fixed. The practical rule is consequently two-step: first choose coverage
that reaches the economically relevant off-path region; then increase surrogate capacity and exact
anchor coverage until the held-out exact residual no longer improves materially.

The surrogate $\wm$ in the Brock--Mirman rare-disaster experiment is intentionally small: two hidden
layers of width $32$ with $\tanh$ activations, mapping the scaled state $(\log k,z,d)$ to the scalar
continuation $\Qpi$, with about $1.2\times10^{3}$ parameters. It has the same size and trunk structure
as the policy it serves. Its error is small on the covered disaster region, including the
Fischer--Burmeister kink where the investment constraint switches on, and grows only in the
extrapolation corner outside the covered support. The audit is designed precisely for that case: when
$\wm$ and exact quadrature disagree on sensitive states, those states are routed back to the exact
residual rather than being left to the policy update. The kink itself is not where the error lives.

\paragraph{What this isolates: the cross-model reading.} In this rare-disaster Brock--Mirman model the
surrogate-free coverage arm trains without stalling and is the most accurate arm in the disaster
region (Table~\ref{tab:bmdc}). The only non-smoothness is the Fischer--Burmeister kink, a continuous
and Lipschitz residual at the point where the irreversibility multiplier switches on. Thus the exact
coverage objective is well-conditioned enough here, and the surrogate's role is amortization: it
replaces repeated exact continuation evaluations inside the policy loop by a small, audited
approximation error. The endogenous-protection model in Section~\ref{sub:actcond} is different. There
the action changes the law of the disaster itself, and a state-only continuation cannot price the
marginal effect of the action. The two Brock--Mirman variants therefore separate the two roles of the
world component: amortization when exact coverage is feasible but expensive, and conditioning when the
state-only exact residual does not provide a usable descent direction.\footnote{The sampling,
calibration, and coverage-parameter specifications are in Appendices~\ref{app:bmdc_sampling} and
\ref{app:bmdc_cov}.}

\subsection{World-component variants}\label{sub:variants}

The world component is the one part of an equilibrium world model that adapts to the economy being
solved; the invariants of Table~\ref{tab:components}, the exact transition $\Gamma$, the structural
residual, and the held-out audit $R^{H}$, never change. Which world component to use, and how to
modify it, follows from the structure of the problem: whether the costly object is a continuation that
only needs amortizing, a continuation whose value an action itself moves, or an aggregate state too
large for the policy to take as a direct input. The paper uses three instances, in increasing order of what the
world component must carry; each is a modification chosen for the model at hand, never a change to the
conditions that certify the solution.

\paragraph{Continuation surrogate $\wm(x)$.} This is the state-only continuation used in the worked
example above and in the international real business cycle experiment of Section~\ref{sub:irbc}. It
approximates $\Qpi(x)$ on the coverage set and is audited by $R^{H}$; its role is amortization and
conditioning, and it underlies the accuracy results of Section~\ref{sec:exp}.

\paragraph{Action-conditioned continuation $\widehat W(x,a)$.}\label{sub:actcond} When an action
changes the probability law of future regimes rather than only the next physical state, a state-only
continuation averages away the very margin the action controls and leaves it unidentified. Endogenous
protection is the canonical case: the protection choice changes disaster exposure, so the world
component must resolve the continuation by the action that moves it. The move is the one from a value
function to a Q-function in dynamic programming: a state-only continuation evaluates the action the
policy already prescribes, while an action-conditioned continuation evaluates a deviation from it, the
margin a measure-moving action turns on. This belongs to the definition of EWM as an
architecture, not to the coverage-accuracy results of Section~\ref{sec:exp}. Appendix~\ref{sub:bmdcp}
works this case out in full: a calibrated example, the four experimental variants that together show
the action-conditioned continuation is the one that recovers the correct policy, and a proof that a
state-only continuation cannot value how the action shifts disaster exposure. Coverage decides where
the equilibrium conditions are imposed; the world component decides how the continuation is carried,
and where an action moves the measure no choice of sampling can substitute for it, so EWM's
contribution is not coverage sampling alone.

\paragraph{Distributional encoder $h_\psi$.} When the aggregate state is itself a cross-sectional
population $\mu$, the world component is a learned, permutation-invariant encoder that compresses it
into a finite-dimensional summary the policy conditions on in place of $\mu$,
\begin{equation}
  z=h_\psi(\mu)\in\R^d,\qquad \mu'=\Phi(\mu,a,\varepsilon'),\qquad a=\nnp(\mu),
\label{eqn:encoder}
\end{equation}
where $\Phi$ is the exact distributional transition, the counterpart of $\Gamma$ for distributions
and, like $\Gamma$, never learned. The push-forward depends on the equilibrium
decisions $a=\nnp(\mu)$ applied across the cross-section and on the aggregate shock, so the
next-period population is determined by the current population, the policy, and the realized
aggregate state. In the heterogeneous-agent application of Section~\ref{sub:bewley}, $\Phi$ is the
non-stochastic simulation update of \citet{young2010}, which advances the distribution exactly by
pushing the population mass forward through the policy and the shock onto a fixed grid,
$\mu'=\Phi(\mu,a',z)$. This summary $z$ is low-dimensional yet preserves the information the
equilibrium needs, in the lineage of the learned generalized moments of \citet{han2023deepham} and the
LeWorldModel of \citet{MaesLeLidec2026LeWM}, and a data-driven state reduction in the spirit of
active-subspace methods \citep[applied to economic models in][]{Scheidegger2019JCS}. Trained by a
joint-embedding predictive objective with an anti-collapse regularizer
\citep{LeCun2022AMI,BalestrieroLeCun2025SIGReg,MaesLeLidec2026LeWM}, it is the mechanism used in the
heterogeneous-agent experiment of Section~\ref{sub:bewley}; to our knowledge this is the
first use of a joint-embedding predictive architecture to solve, rather than estimate, a
dynamic stochastic economic model. Throughout, the EWM principle is unchanged: learn perception where
necessary, never the equilibrium conditions that certify the solution.

\section{Theory: From Path Confirmation to Rational Expectations}\label{sec:theory}

This section makes the central claim of the paper precise: that widening the coverage
reach $\kappa$ carries the solver's fixed point from a self-confirming equilibrium,
correct only on its own simulated path, to a rational-expectations equilibrium, correct
everywhere the coverage schedule certifies, its own path together with the audited off-ergodic
regions, with the full reachable set as the idealized limit of enriching coverage. One quantity tracks the journey, the distance between
where a candidate solution is tested and where the world model has actually imagined. We
measure that distance in total variation: for two probability measures $\mu$ and $\nu$,
\[
\TV(\mu,\nu):=\sup_{A}\,|\mu(A)-\nu(A)|\in[0,1],
\qquad\text{so that}\qquad
|\E_\mu f-\E_\nu f|\le 2\|f\|_\infty\,\TV(\mu,\nu)
\]
for every bounded measurable $f$. The coverage gap $\TV(\mu,\muK)$, between a test measure
$\mu$ and the coverage measure $\muK$ of Section~\ref{sec:framework}, is then a worst-case
diagnostic: it vanishes only when the test law and the coverage law coincide, not merely when
their supports overlap, so support expansion alone need not reduce it, and it bounds
the residual transport between any two measures regardless of their geometry. We are careful not
to overstate it. The homotopy's actual mode of convergence is weak,
$\muK\Rightarrow\mu_\infty$ (Assumption~\ref{ass:nested}), which does not imply
$\TV(\mu,\muK)\to0$ for a fixed test measure, and indeed cannot when $\mu$ is singular with
respect to $\muK$ (an empirical grid against a continuous coverage law). The total-variation term
is therefore the conservative fallback of Proposition~\ref{prop:decomp}; for the dominated audits
we actually report it is superseded by the sharper density-ratio bound of
Theorem~\ref{thm:offpath}, and the operational progress measure throughout is the held-out
off-path residual, not an estimate of $\TV$ itself. To keep the main text readable,
all proofs, together with the most standard supporting results (existence and uniqueness of
the projection, Proposition~\ref{prop:exist}; its monotonicity in the perception class,
Proposition~\ref{prop:mono}; the surrogate--exact gap, Proposition~\ref{prop:surrgap}; and
the off-ergodic discrepancy, Lemma~\ref{lem:offpath}), are collected in
Appendix~\ref{app:proofs}. The main text keeps only the results that carry the argument.

\paragraph{The theory in plain terms.} Three statements, in words. (a) At zero
coverage the solver's fixed point is a self-confirming equilibrium: the policy is correct
on the very measure it trains on and unconstrained off it. (b) Off-path, the residual decomposition has four
terms, optimization, perception-class approximation, surrogate-fit, and coverage
(Proposition~\ref{prop:decomp}), but coverage is the term a pathwise solver never controls off
its own path, and it is the distinctive lever EWM adds; the other three are already in play
on-path, and surrogate fidelity is the only additional term the surrogate introduces. So the two
dials EWM turns are coverage reach, how far the test states lie outside the coverage measure, and
surrogate fidelity, how well the learned continuation matches the exact one there. (c) As imagination
expands to fill the reachable set and the surrogate becomes exact, the fixed points
converge to rational expectations. The off-path residual is the diagnostic that tracks progress
along the way.

\paragraph{Scope of the theorems.} We are precise about what the theorems do and do not
establish. (i) Conditional, not a guarantee. Theorem~\ref{thm:consist}
shows that if the perception gap, the surrogate-fit error, and the optimization
error vanish and the coverage measure fills the reachable set, then accumulation
points are rational-expectations equilibria; it does not prove that a given coverage
schedule achieves these conditions, nor does it supply a rate. The conditions are objects
the modeler checks empirically, they are exactly the terms of the residual
decomposition (Proposition~\ref{prop:decomp}) and the coverage sweep; quantifying the
homotopy's rate as a function of the coverage reach $\kappa$, and characterizing which
off-ergodic regions a given coverage schedule can reach, remain open. (ii) Population,
not SGD. The statements characterize idealized fixed points (best-in-class projections on
$\muK$), not the SGD iterate that the network actually computes; whether
training reaches the fixed point is an empirical question the experiments answer. The same
boundary scopes the compute and robustness headlines: the seed-robustness results are
measured, not proved (a plausible mechanism, which we do not formalize, is that coverage
breaks the closed sampler-approximator loop of \eqref{eqn:pathloss}, in which the training
measure is dragged by the very policy it trains, the feedback the introduction identifies as
the source of distributional runaway; imposing the residual on a designer-controlled support
removes it, but we offer this only as an interpretation of the measured robustness, not a
theorem); the amortization index $A(\tau)$ (the per-step wall-clock speed-up the audited continuation
surrogate buys over exact quadrature) has its growth with the model
dimension accounted for by Proposition~\ref{prop:amort}, while its numerical value is a
measured constant.
(iii) ``Self-confirming'' is exact only at $\kappa=0$, and only when the perception gap is
closed on the path. With $\muK$ equal to the ergodic measure $\muerg$, the outer loop
self-consistent, and the perception gap closed there ($\eta_\kappa(\theta)=0$, so $\wm=\Qpi$ and
$R^H=0$ $\muerg$-a.s.), the fixed point is a Sargent self-confirming equilibrium in the literal
sense, the agent's belief being confirmed on the events its behavior visits. Without those
equalities it is a coverage-confirmed projection-best-response fixed point, closer to a
restricted-perceptions equilibrium when $\calF_m$ binds; and for $\kappa>0$, $\muK$ is the
modeler's chosen measure, so the object is coverage-confirmed regardless: a fixed point of
projection-best-response on a designer-controlled support, not an agent's self-confirming belief. (iv) The
off-path bound needs domination. Theorem~\ref{thm:offpath} controls $\E_\eta\|R^H\|^2$ only
for test measures $\eta$ dominated by $\muK$. A single deterministic impulse-response path
is $\muK$-null (density ratio $B=\infty$) and is not covered; the generalized impulse
response we actually report \citep[the Koop--Pesaran--Potter object,][]{koop1996impulse}, an average over a distribution
of disaster-entering and counterfactual paths, Section~\ref{sub:bmdc}) is such a distribution,
dominated by $\muK$ with $B<\infty$, so the bound does cover the reported object even though
it would not cover a single path. The supporting
propositions are standard projection and change-of-measure facts; the contribution is the
coverage-indexed bridge they assemble into, and its use as a verifiable diagnostic.

\noindent The formal assumptions are collected in Appendix~\ref{app:assumptions}: a regular
structural environment (Assumption~\ref{ass:regular}, compactness, continuity, and a bounded,
$L_R$-Lipschitz residual, with $L_R$ read off the equations exactly, the discount factor in the
affine case), nested perception classes of growing capacity $m$ with full-support coverage in
the reach limit (Assumption~\ref{ass:nested}, the two levers $\kappa$ and $m$), a unique
invariant measure (Assumption~\ref{ass:invariant}), and audit domination
(Assumption~\ref{ass:audit}, the density bound $B$ that controls off-path transport and is
vacuous on a single measure-zero path but finite on the reported tube and region distributions).
We state the results here and prove them, with the assumptions, in Appendix~\ref{app:proofs}.

\paragraph{Self-confirming, not self-justified.} Which departure from rational
expectations this is matters, because two distinct restrictions are
easy to conflate. A restricted-perceptions equilibrium \citep{BranchEvans2006RPE} or a
self-justified equilibrium \citep{KUBLER2023105707,10.1093/jeea/jvaf062} is limited by its
perception class: the agent forecasts the continuation within a restricted function
family and attains only the best approximation in that class, which can be inexact even on the
ergodic path. A self-confirming equilibrium \citep{fudenberg1993selfconfirming,sargent1999conquest}
is limited instead by its data: beliefs are correct on the events the agent's own behavior
generates and untested off them, while the forecast class itself need not be misspecified. These
are orthogonal restrictions, and they are the two error sources
of the residual decomposition developed below, a perception-class term and a coverage term. A modern deep-learning solver
has largely removed the first: a neural network is an expressive forecaster, and on the states it
trains on it drives the exact residual to numerical tolerance rather than to a nonzero
best-in-class projection. What still binds is that it samples only its own ergodic path, so its
forecasts are confirmed where it goes and silently uncertified off it. That is why our baseline endpoint
is a self-confirming equilibrium rather than a self-justified one, and why this paper attacks the
coverage axis rather than the perception-class axis. The two need not be mutually exclusive in
finite samples, since finite capacity could reintroduce a class restriction; what makes coverage
the operative constraint here is that the covered-region residual is driven to near zero
(Section~\ref{sec:exp}), not left at a best-in-class floor.

Stated as instruments, the two gaps close by different means. The perception-class gap
closes by enriching the forecast class: the function-space sieve on finite-dimensional
states, and, when the state is itself a distribution, the learned distributional sieve
$h_\psi$ of Section~\ref{sub:econ}. The coverage gap closes by enlarging the support on
which the exact residual is imposed. The two closures are orthogonal, the two error terms
of the residual decomposition (Proposition~\ref{prop:decomp}), and rational expectations is
the joint limit in which both vanish. On the finite-dimensional models a deep network leaves
the perception gap slack, so coverage is the binding axis; on heterogeneous-agent economies
the choice of distributional summary makes the perception axis operative again, which is why
the Bewley experiment of Section~\ref{sub:bewley} turns on the encoder.

\subsection{Existence, monotonicity, and the residual decomposition}

Under Assumptions~\ref{ass:regular}--\ref{ass:nested}, for every admissible policy
$\pi$ and every $\kappa$ the continuation $\Qpi$ is bounded and lies in
$L^2(\muK;\R^{d_Q})$, so the projection onto the closed convex class $\calF_m$ exists and
is unique ($\muK$-a.s.) by the Hilbert projection theorem; write
\[
\hat q^{\pi}_\kappa:=\Pi^{L^2(\muK)}_{\calF_m}\Qpi
=\argmin_{q\in\calF_m}\E_{x\sim\muK}\bigl[\|q(x)-\Qpi(x)\|^2\bigr]
\]
for it (Proposition~\ref{prop:exist}, Appendix~\ref{app:proofs}). Define the perception gap
$\eta_\kappa(\theta):=\inf_{q\in\calF_m}\E_{\muK}[\|q(x)-\Qpi(x)\|^2]$: the
in-class error of the best imaginable world model on the coverage measure.

\noindent At a fixed evaluation measure, enlarging the perception class weakly reduces the
in-class approximation error, $m\le m'\Rightarrow
\inf_{q\in\calF_{m'}}\E_\mu[\|q-\Qpi\|^2]\le\inf_{q\in\calF_m}\E_\mu[\|q-\Qpi\|^2]$
(Proposition~\ref{prop:mono}, Appendix~\ref{app:proofs}); when the coverage measure itself
varies with $\kappa$ this is the class-approximation component only, the change of measure
being governed separately by the coverage gap of Proposition~\ref{prop:decomp}.

\noindent In words: off the simulated path, a solution's error splits into what the solver
optimizes on its coverage set and what that set fails to reach.

\begin{proposition}[Residual decomposition: coverage as the off-path term]\label{prop:decomp}
Let $\mu$ be any test measure (ergodic, disaster-conditioned, or
impulse-response), and let $\hat q^{\nnp}_\kappa\in\calF_m$ be the population
projection of the current policy's continuation
(Proposition~\ref{prop:exist}). With
\begin{align*}
\mathrm{OptErr}&:=\E_{\muK}\!\bigl[\|\rwm(x,\nnp(x),\wm(x))\|^2\bigr], &
\mathrm{ApproxErr}_\kappa&:=L_R^2\,\eta_\kappa(\theta),\\
\mathrm{SurrFitErr}&:=L_R^2\,\E_{\muK}\!\bigl[\|\hat q^{\nnp}_\kappa(x)-\wm(x)\|^2\bigr], &
\mathrm{CoverageErr}&:=2\|\rwm\|_\infty^2\,\TV(\mu,\muK),
\end{align*}
the exact equilibrium residual on the test measure obeys
\[
\E_{x\sim\mu}\!\bigl[\|\rwm(x,\nnp(x),\Qpi(x))\|^2\bigr]
\;\le\;
3\bigl(\mathrm{OptErr}+\mathrm{ApproxErr}_\kappa+\mathrm{SurrFitErr}\bigr)
+\mathrm{CoverageErr}.
\]
The bound combines a change of measure (the $\mathrm{CoverageErr}$ term) with a
three-term split of the residual on $\muK$; the proof is in
Appendix~\ref{app:proofs}. Here $\mathrm{SurrFitErr}$ is the gap between the trained
surrogate and the best in-class projection of the continuation, that is, the
surrogate-fitting error, not a numerical-integration error; genuine finite-quadrature
error in evaluating $\Qpi$ on the known transition is a separate term, controlled by the
fixed quadrature rule and assumed negligible throughout.
\end{proposition}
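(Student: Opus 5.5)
The bound has two ingredients, which I will prove separately and then combine: a change of measure that moves the exact squared residual from the test measure $\mu$ to the coverage measure $\muK$, and a three-term split of that residual on $\muK$. Write $f(x):=\|\rwm(x,\nnp(x),\Qpi(x))\|^2$.

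\textbf{Change of measure.} By Assumption~\ref{ass:regular}, $f$ is bounded and measurable, with $0\le f\le\|\rwm\|_\infty^2$. The total-variation inequality recalled at the start of Section~\ref{sec:theory} then gives
\[
\E_\mu f\le \E_{\muK}f+2\|f\|_\infty\TV(\mu,\muK)\le \E_{\muK}f+2\|\rwm\|_\infty^2\,\TV(\mu,\muK).
\]
The last term is exactly $\mathrm{CoverageErr}$. Because $f\ge0$, one could even use $\|f\|_\infty$ without the factor $2$, but matching the stated constant is enough.

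\textbf{Split on $\muK$.} At each $x$, I insert the surrogate and the projection $\hat q:=\hat q^{\nnp}_\kappa$ from Proposition~\ref{prop:exist}. Using the $L_R$-Lipschitz property of $\rwm$ in its continuation argument (Assumption~\ref{ass:regular}), the triangle inequality gives
\[
\|\rwm(x,a,\Qpi)\|\le\|\rwm(x,a,\wm)\|+L_R\|\wm-\hat q\|+L_R\|\hat q-\Qpi\|,
\]
with $a=\nnp(x)$. The elementary inequality $(u+v+w)^2\le3(u^2+v^2+w^2)$ turns this into a bound on squares. Integrating against $\muK$ produces $3\,\mathrm{OptErr}$ and $3\,\mathrm{SurrFitErr}$ directly, plus $3L_R^2\,\E_{\muK}\|\hat q-\Qpi\|^2$. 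The last expectation equals $\eta_\kappa(\theta)$, because $\hat q$ attains the infimum defining the perception gap; existence of this minimizer is what Proposition~\ref{prop:exist} supplies. That term is therefore $3\,\mathrm{ApproxErr}_\kappa$. Substituting into the change-of-measure inequality yields the claim.

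\textbf{Main obstacle.} The delicate step is the Lipschitz transfer. The bound must be Lipschitz in the $Q$-slot uniformly over $(x,a)$, and the surrogate must take values where that Lipschitz bound holds. Assumption~\ref{ass:regular} is stated on a compact domain, so I would check that $\wm$ and $\hat q$ take values in the compact continuation range. If they might not, I would apply the assumption to the globally Lipschitz Euler/Fischer--Burmeister form (Proposition~\ref{prop:fb}), or clip the surrogate to that range, which does not increase $\|\wm-\hat q\|$ since projection onto a convex set is nonexpansive.

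A second point is that $\|\rwm\|_\infty$ must bound the residual at the exact continuation $\Qpi$. This holds because $\Qpi$ is itself bounded under Assumption~\ref{ass:regular}.
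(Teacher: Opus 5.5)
Your proposal is correct and follows the paper's proof essentially step for step. The paper also applies the total-variation change of measure to $f=\|\rwm(x,\nnp(x),\Qpi(x))\|^2$, inserts $\wm$ and $\hat q^{\nnp}_\kappa$, uses the $L_R$-Lipschitz bound and $(u+v+w)^2\le 3(u^2+v^2+w^2)$, and closes with the projection identity $\E_{\muK}\|\Qpi-\hat q^{\nnp}_\kappa\|^2=\eta_\kappa(\theta)$. The domain issue you flag as the main obstacle is already settled by Assumption~\ref{ass:regular}: it defines $\rwm$ and its Lipschitz property on the compact convex set $\mathcal{Q}_0$ and constrains every surrogate and sieve element to take values there. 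So no clipping is needed, and neither is Proposition~\ref{prop:fb}, whose Lipschitz constant concerns the complementarity arguments, not the continuation slot.
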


\noindent The decomposition isolates the mechanism. A pathwise solver controls
the first three terms only on $\mu=\muerg$; on any off-ergodic test measure its
error is governed by $\mathrm{CoverageErr}=2\|\rwm\|_\infty^2\,\TV(\mu,\muerg)$,
which it never reduces, it never imagined there. EWM attacks
$\mathrm{CoverageErr}$ directly by expanding $\muK$ toward $\mu$. The term is a
distance between measures, not a variance: it carries no sample-size rate and
does not shrink as the number of path draws grows, the formal sense in which
coverage is a property of the objective rather than of the sampler
(Section~\ref{sub:coverage}). The coverage
gap $\TV(\mu,\muK)$ measures progress along the
self-confirming-to-rational-expectations transition: it is exactly what
separates a solver that is right where it looks from one that is right where the
economy can go. The experimental ladder of Section~\ref{sub:protocol} isolates these terms one at a time: a
pathwise baseline carries the full $\mathrm{CoverageErr}$ off the ergodic path; a
coverage-matched arm that imposes the exact residual on the enlarged measure removes the support
mismatch and is left with $\mathrm{OptErr}$ alone; and the surrogate arm adds back only
$\mathrm{SurrFitErr}$. The disaster-region residuals in Tables~\ref{tab:bmdc} and
\ref{tab:irbc} order themselves accordingly; the three arms are named and specified in
Section~\ref{sub:protocol}. We do not estimate $\TV$ directly, which is not
tractable between high-dimensional measures; the held-out disaster-region residual, together
with the density ratio $B$ of Theorem~\ref{thm:offpath}, is its operational proxy throughout
the experiments.

\noindent In words: a counterfactual is only as accurate as the coverage that reaches it, and
no more inaccurate than the in-coverage error the audit can measure.

\begin{theorem}[Exact off-ergodic residual bound]\label{thm:offpath}
Let $\eta$ be any test measure (an impulse-response, disaster, or large-shock
distribution) absolutely continuous with respect to the coverage measure $\muK$, with
$d\eta/d\muK\le B$. Then the exact equilibrium residual on $\eta$ is controlled by the
two quantities the solver drives down on the coverage measure, the surrogate residual
and the continuation-approximation error:
\[
\E_\eta\!\bigl[\|R^{H}\|^2\bigr]
\;\le\;
2B\,\E_{\muK}\!\bigl[\|R^{L}\|^2\bigr]
\;+\;
2B\,L_R^2\,\E_{\muK}\!\bigl[\|\wm-\Qpi\|^2\bigr],
\]
where $R^{H}(x)=\rwm(x,\nnp(x),\Qpi(x))$ is the exact residual and
$R^{L}(x)=\rwm(x,\nnp(x),\wm(x))$ the surrogate residual.
\end{theorem}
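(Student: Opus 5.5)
The proof has two steps: a pointwise bound on the coverage support, then a change of measure from $\muK$ to $\eta$. Fix $x$. The exact and surrogate residuals differ only in the continuation slot, $R^{H}(x)=\rwm(x,\nnp(x),\Qpi(x))$ and $R^{L}(x)=\rwm(x,\nnp(x),\wm(x))$. By Assumption~\ref{ass:regular}, $\rwm$ is $L_R$-Lipschitz in $Q$, uniformly in the other arguments. So I would write $R^{H}=R^{L}+(R^{H}-R^{L})$ and bound
\[
\|R^{H}(x)-R^{L}(x)\|\le L_R\,\|\Qpi(x)-\wm(x)\|.
\]
The elementary inequality $\|u+v\|^2\le 2\|u\|^2+2\|v\|^2$ then gives the pointwise bound
\[
\|R^{H}(x)\|^2\le 2\|R^{L}(x)\|^2+2L_R^2\|\wm(x)-\Qpi(x)\|^2
\]
for every $x$ in the state space. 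This is where the two factors of $2$ in the statement come from.

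Next I would move to $\eta$. Since $\eta\ll\muK$ with Radon--Nikodym derivative $d\eta/d\muK\le B$, any nonnegative measurable $f$ satisfies
\[
\E_\eta f=\E_{\muK}\Bigl[f\,\frac{d\eta}{d\muK}\Bigr]\le B\,\E_{\muK}f.
\]
I would apply this to the right-hand side of the pointwise bound. First integrate that bound against $\eta$, then transport each nonnegative term to $\muK$ at the cost of the factor $B$. Linearity then delivers
\[
\E_\eta\|R^{H}\|^2\le 2B\,\E_{\muK}\|R^{L}\|^2+2BL_R^2\,\E_{\muK}\|\wm-\Qpi\|^2,
\]
which is the claim.

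The main obstacle is bookkeeping rather than analysis. I must check that the functions involved are measurable and integrable, so that the Radon--Nikodym step is legitimate. Boundedness of $\rwm$ and $\Qpi$ on the compact state space gives this (Assumption~\ref{ass:regular}, as already used for Proposition~\ref{prop:exist}), and $\wm$ is continuous. I must also check that the Lipschitz constant applies uniformly in the continuation argument, including through the Fischer--Burmeister blocks. Those blocks are globally Lipschitz (Proposition~\ref{prop:fb}), and the Euler blocks are affine in $Q$, so $L_R$ is finite and read off directly.

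Unlike Proposition~\ref{prop:decomp}, no projection or total-variation step is needed. Domination replaces the $2\|\rwm\|_\infty^2\,\TV$ transport term with the multiplicative factor $B$. That is also why the bound degenerates, to $B=\infty$, on a single measure-zero impulse-response path.
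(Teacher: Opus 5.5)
Your proof is correct and is essentially the paper's own argument: the pointwise Lipschitz bound $\|R^{H}-R^{L}\|\le L_R\|\wm-\Qpi\|$, the elementary inequality giving $\|R^{H}\|^2\le 2\|R^{L}\|^2+2L_R^2\|\wm-\Qpi\|^2$, and then integration against $\eta$, using the density bound $d\eta/d\muK\le B$ on the nonnegative right-hand side. One small quibble: Proposition~\ref{prop:fb} is not what supplies $L_R$, since its constant $1+\sqrt2$ is a Lipschitz constant in the complementarity arguments, which typically do not involve the continuation; this does not affect the proof, because $L_R$ is simply assumed in Assumption~\ref{ass:regular}.
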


\noindent Theorem~\ref{thm:offpath} is the formal content of the headline: it bounds the
exact off-path residual, the quantity we report, by the surrogate residual that
training minimizes and the continuation error that the audit measures, amplified only by
the coverage density bound $B$. Lemma~\ref{lem:offpath} is the supporting step that
isolates the continuation-error contribution. A counterfactual experiment that stays
within the imagined support ($B<\infty$) inherits the trained accuracy; one that leaves
it ($B=\infty$) is exactly where a self-confirming solver is silently uncertified. For the
off-path experiments that matter here, the impulse-response and branch distributions
dominated by the coverage measure, this density-ratio bound is the operative one and is
sharper than the uniform total-variation term in Proposition~\ref{prop:decomp}; the
latter we keep only as a worst-case fallback when no density ratio is available, and it
becomes uninformative precisely when $\eta$ is singular with respect to $\muK$, the case
in which a Wasserstein control would be the natural substitute (an open direction).

\noindent On any measure $\mu$, the exact equilibrium error is bounded by the surrogate
(training) residual plus the continuation-approximation error,
$\E_\mu[\|R^{H}\|^2]\le 2\,\E_\mu[\|R^{L}\|^2]+2L_R^2\,\E_\mu[\|\wm-\Qpi\|^2]$
(Proposition~\ref{prop:surrgap}, Appendix~\ref{app:proofs}); driving both to zero drives
the exact residual to zero. This is what licenses training on $R^{L}$ while reporting
$R^{H}$, and it separates the two ways a surrogate can hurt: a poorly fit continuation
(second term), or a policy that has learned to satisfy a wrong surrogate residual
(first term small, second large), which the audit of Section~\ref{sub:surrogate} is
designed to detect.

\paragraph{Auxiliary results.} Four further results, stated and proved in
Appendix~\ref{app:proofs}, support the construction without bearing the headline claims.
Proposition~\ref{prop:amort} (amortization) shows that replacing exact quadrature by a surrogate
query saves a per-continuation cost of $\Theta(n_q(N))=\Theta(N)$ exact evaluations, linear in the
shock dimension, the compute counterpart to the accuracy bounds above; we rest the scaling claim
on this hardware-independent evaluation count, not on timings.
Proposition~\ref{prop:actcond} (identification) shows that when an action moves the transition
measure rather than the next state, a state-only continuation integrates the margin away
($\partial_a\wm\equiv0$) while an action-conditioned surrogate restores it, the surrogate required
to be accurate only at the value level; this is the identification content behind the
action-conditioning ladder of Appendix~\ref{sub:bmdcp}: an in-scope component of the architecture,
required wherever an action moves the measure, though not part of the coverage-accuracy headline. Proposition~\ref{prop:fb} (complementarity) writes occasionally-binding
constraints in Fischer--Burmeister form and places them inside the same $L^2$ theory without
smoothing, feasibility and complementary-slackness violations remaining distinct diagnostics not
recoverable from the stacked residual. Proposition~\ref{prop:girf} (impulse-response stability)
bridges the one-step residual the theory controls to the multi-step generalized impulse response
the experiments report, the horizon amplification set by the model's own stability modulus
$\lambda$.

\subsection{Finite imagination: a coverage-confirmed fixed point on the coverage measure}

\noindent In words: at finite coverage the solver computes a coverage-confirmed equilibrium,
exact on the states it imagines but not yet rational expectations.

\begin{theorem}[Fixed-point characterization of finite-coverage equilibria]\label{thm:br}
Fix $\kappa$ and a (Polyak) target $\bar\theta$ with $\muK=\muK(\bar\theta)$ held
at the target. Define the policy-selection correspondence and the projection map
\begin{equation*}
\Theta_\kappa(\widehat W):=\argmin_{\theta\in\Theta}\E_{\muK}\bigl[\|\rwm(x,\nnp(x),\widehat W(x))\|^2+\omega\,\calK(x,\nnp(x))\bigr],
\qquad
T_\kappa(\widehat W):=\Pi^{L^2(\muK)}_{\calF_m}\bigl(Q^{\pi_{\theta(\widehat W)}}\bigr),
\end{equation*}
where $\theta(\widehat W)$ is a measurable selector (Berge's maximum theorem with
Kuratowski--Ryll-Nardzewski selection, under Assumption~\ref{ass:regular}, on the admissible
parameter set $\Theta_{\mathrm{ad}}:=\{\theta\in\Theta:\nnp(x)\in\A(x)\ \text{for all }x\in\X\}$,
assumed non-empty and compact, with $\theta\mapsto\nnp$ continuous from $\Theta_{\mathrm{ad}}$
into $C(\X;\A)$ in the supremum norm, both maintained properties of the fixed network
parametrization). If, in
addition, the policy best response $\Theta_\kappa(\widehat W)$ is
single-valued (for instance under strict convexity of the penalized policy
objective, or an entropic regularization of the selection), then, on the finite-dimensional
sieve $\calF_m$ of Assumption~\ref{ass:nested}, the set
$\calB_\kappa:=\{\widehat W\in\calF_m:\|\widehat W\|_{L^2(\muK)}\le\|g\|_\infty\}$ is
compact and convex and $T_\kappa$ maps it into itself (projection onto a convex set containing
the origin is non-expansive and $\|Q^{\pi}\|_{L^2(\muK)}\le\|Q^{\pi}\|_\infty\le\|g\|_\infty$),
so by Brouwer's theorem $T_\kappa$ admits a fixed point in $\calB_\kappa$. Any fixed point $\widehat W_\star$ of $T_\kappa$ yields
$(\pi_{\theta(\widehat W_\star)},\widehat W_\star)$, an exact $\kappa$-equilibrium
world model (exact in the sense that both the world arm and the policy arm are solved at their
population optimum on $\muK$, not that the structural residual is zero; cf.\
Definition~\ref{def:kewm}), and, under the same single-valued best response, conversely every
exact $\kappa$-EWM arises this way. Here
$\muK=\muK(\bar\theta)$ is held at the Polyak target throughout, so
$\Pi^{L^2(\muK)}_{\calF_m}$ acts on a fixed Hilbert space; the self-consistency
requirement $\bar\theta=\theta(\widehat W_\star)$ is imposed by the outer Polyak loop
and is not part of the fixed-point statement. Such a fixed point is a coverage-confirmed fixed point: a
projection-best-response fixed point on the imagined measure $\muK$, the perception
best-in-class on $\muK$ and the policy consistent with it, limited by coverage rather than by a
restricted perception class. At zero coverage ($\muK=\muerg$), and provided the outer Polyak loop has also reached
self-consistency $\bar\theta=\theta(\widehat W_\star)$ so that the frozen measure is the
fixed-point policy's own ergodic measure, $\muerg=\mu^{\pi_{\theta(\widehat W_\star)}}$, and
provided in addition the perception gap is closed on that path
($\eta_\kappa(\theta)=0$, so $\widehat W_\star=Q^{\pi_{\theta(\widehat W_\star)}}$ and $R^H=0$
$\muerg$-a.s.), it is a
self-confirming equilibrium on the policy-induced measure in the strict sense of
\citet{fudenberg1993selfconfirming} and \citet[Ch.~6]{sargent1999conquest}, because there the
imagined measure is the one the agent's own behavior generates and the belief is confirmed on the
events that behavior visits; without that outer consistency
the theorem is a fixed-measure projection-best-response result, and for $\kappa>0$ the measure is
the modeler's, so the object is coverage-confirmed rather than self-confirming in that strict
sense. When the perception class
$\calF_m$ is itself restricted, it coincides instead with the restricted-perceptions
equilibrium of \citet{BranchEvans2006RPE}, the special case $\TV(\muK,\muerg)\to0$.
\end{theorem}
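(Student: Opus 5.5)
The plan is to prove Theorem~\ref{thm:br} in four steps: the well-definedness and regularity of the selector, continuity of $T_\kappa$ on $\calB_\kappa$, Brouwer, and then the converse and the two interpretive specializations.

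\emph{First step: continuity of the best response.} Fix $\kappa$ and the frozen measure $\muK=\muK(\bar\theta)$, so that $L^2(\muK)$ is a fixed Hilbert space and $\calF_m$ is a finite-dimensional closed convex subset of it (Assumption~\ref{ass:nested}). On $\Theta_{\mathrm{ad}}$ define the objective
\[
J(\theta,\widehat W):=\E_{\muK}\bigl[\|\rwm(x,\nnp(x),\widehat W(x))\|^2+\omega\,\calK(x,\nnp(x))\bigr].
\]
I will show $J$ is jointly continuous on $\Theta_{\mathrm{ad}}\times\calB_\kappa$. In $\theta$, continuity follows from sup-norm continuity of $\theta\mapsto\nnp$, together with continuity and boundedness of $\rwm$ and $\calK$ on the compact domain (Assumption~\ref{ass:regular}) and dominated convergence. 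In $\widehat W$, the $L_R$-Lipschitz property gives $\bigl|\|\rwm(\cdot,a,q)\|^2-\|\rwm(\cdot,a,q')\|^2\bigr|\le 2\|\rwm\|_\infty L_R\|q-q'\|$; integrating and applying Cauchy--Schwarz yields Lipschitz continuity in the $L^2(\muK)$ norm. Since $\Theta_{\mathrm{ad}}$ is compact, Berge's maximum theorem makes $\Theta_\kappa(\cdot)$ nonempty, compact-valued and upper hemicontinuous, and Kuratowski--Ryll-Nardzewski supplies a measurable selector $\theta(\widehat W)$. Under the single-valuedness hypothesis, an upper hemicontinuous singleton-valued correspondence is a continuous function, so $\widehat W\mapsto\theta(\widehat W)$ is continuous.

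\emph{Second step: continuity of $T_\kappa$ and the self-map property.} Next I show that $\theta\mapsto Q^{\pi_\theta}$ is continuous into $L^2(\muK)$. On the finite node set $\calE$ the continuation is a finite weighted sum of $g(\Gamma(x,\nnp(x),\varepsilon'),\pi_\theta(\Gamma(\cdot)))$. With $g$ and $\Gamma$ continuous on compacts and $\nnp$ sup-norm continuous in $\theta$, this gives uniform, and hence $L^2$, continuity; the main technical care is that $\pi_\theta$ is uniformly continuous on the compact $\X$. The projection $\Pi_{\calF_m}$ is $1$-Lipschitz, so $T_\kappa$ is continuous as a composition. For the self-map property, $\calF_m$ is convex and contains $0$, so I use $\|\Pi q\|\le\|q\|$ as stated, together with $\|Q^\pi\|_{L^2(\muK)}\le\|g\|_\infty$. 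Compactness of $\calB_\kappa$ holds because it is a closed bounded subset of the finite-dimensional $\calF_m$; it is convex as the intersection of convex sets. Here I will identify $\muK$-a.s.\ equivalence classes so that the $L^2$ seminorm is a norm on the quotient. Brouwer then yields a fixed point $\widehat W_\star$.

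\emph{Third step: equivalence and specializations.} Suppose $\widehat W_\star=T_\kappa(\widehat W_\star)$. Then $\theta_\star=\theta(\widehat W_\star)$ minimizes the policy arm given $\widehat W_\star$, and $\widehat W_\star$ is the minimizer of the world arm given $\theta_\star$, which is exactly Definition~\ref{def:kewm}. Conversely, if a pair solves both arms, single-valuedness forces $\theta=\theta(\widehat W)$, and uniqueness of the projection (Proposition~\ref{prop:exist}) forces $\widehat W=T_\kappa(\widehat W)$. For the $\kappa=0$ clause, combining outer consistency $\bar\theta=\theta_\star$ with $\eta_\kappa=0$ gives $\widehat W_\star=Q^{\pi_\star}$ a.s., so $R^L=R^H$. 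Minimality of the policy arm then gives $R^H=0$ a.s., on the condition that the minimum value is zero; I will read this as part of the hypothesis $R^H=0$, since the statement includes it. This is the Fudenberg--Levine/Sargent notion restricted to the policy's own support. The restricted-perceptions remark follows by reading the same fixed point with binding $\calF_m$.

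\emph{Main obstacle.} The delicate step is the continuity of $\widehat W\mapsto\theta(\widehat W)$ in the first step. It genuinely needs single-valuedness, because otherwise Berge gives only an upper hemicontinuous correspondence. Without single-valuedness I would fall back on Kakutani applied to $\widehat W\mapsto\Pi_{\calF_m}\bigl(\mathrm{conv}\{Q^{\pi_\theta}:\theta\in\Theta_\kappa(\widehat W)\}\bigr)$, but I do not pursue that route here.
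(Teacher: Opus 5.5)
Your proposal is correct and takes essentially the same route as the paper's proof. Both arguments obtain a selector from Berge plus Kuratowski--Ryll-Nardzewski, get continuity of that selector from single-valuedness, and get continuity of $\theta\mapsto Q^{\pi_\theta}$ on the finite node set together with non-expansiveness of the projection; the norm bound then gives the self-map on $\calB_\kappa$, Brouwer gives existence, and the converse follows from the projection identity and single-valuedness.

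Your explicit Lipschitz/Cauchy--Schwarz argument for joint continuity of the policy objective is slightly more careful than the paper's treatment. So is your observation that $R^H=0$ at $\kappa=0$ needs the minimized policy-arm value to be zero, not merely $\eta_\kappa=0$.
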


\noindent At zero coverage this coverage-confirmed fixed point reduces to the self-confirming
pathwise gap of Section~\ref{sec:gap}; the binding limitation throughout is coverage rather than
the perception class, the perception-limited self-justified equilibrium of
\citet{KUBLER2023105707,10.1093/jeea/jvaf062} (Section~\ref{sec:gap}) being the complementary
restriction that EWM does not address.

\begin{remark}[Population object, not an optimization guarantee]
Theorem~\ref{thm:br} characterizes the idealized population object: a fixed
point of the projection--best-response map over a closed convex perception class, under
a single-valued best response. The neural network implementation replaces these convex metric
projections with non-convex network fits computed by SGD. The
theorem does not assert that the descent converges to this fixed point, nor that the
trained network attains the population projection; it describes the equilibrium the
construction targets, not the optimization that approximates it. Read for a generic
deep-learning solver, it is best understood as a statement about convex sieve classes,
to which the network is the practical, non-convex surrogate. The fixed point is moreover
characterized with $\muK$ held at the Polyak target; a fixed-point theory for the full
coupled map that moves the policy, the continuation, and the coverage measure jointly is
left open.
\end{remark}

\subsection{Enriched imagination: rational expectations in the limit}

\noindent In words: as coverage fills the reachable set, the solver's fixed points become
rational-expectations equilibria.

\begin{theorem}[Conditional consistency of accumulation-point policies]\label{thm:consist}
Let $(\pi^\kappa,\wm^\kappa,\muK,\calF_{m_\kappa})_{\kappa\ge1}$ be exact
$\kappa$-equilibrium world models indexed by a joint schedule of coverage reach $\kappa$ and
perception capacity $m_\kappa$, both nondecreasing. Suppose
Assumptions~\ref{ass:regular} and \ref{ass:nested} hold and, along the homotopy,
(i) the realized perception gap vanishes, $\eta_\kappa(\theta^\kappa)\to0$, a condition imposed
along the joint coverage-capacity schedule and not implied by the capacity limit
$m_\kappa\to\infty$ alone, since the continuation target $Q^{\pi^\kappa}$ and the evaluation
measure $\muK$ both move with $\kappa$;
(ii) the surrogate-fit error vanishes,
$\E_{\muK}[\|\hat q^{\pi^\kappa}_\kappa(x)-\wm^\kappa(x)\|^2]\to0$; (iii) the achieved
policy residual vanishes,
$\mathrm{OptErr}_\kappa:=\E_{\muK}[\|\rwm(x,\pi^\kappa(x),\wm^\kappa(x))\|^2]\to0$;
(iv) the coverage measures converge weakly to a limit (the coverage-reach limit):
$\muK\Rightarrow\mu_\infty$ (weak convergence; \citealp{billingsley1999}), whose support is the
coverage-certified region $\calR:=\operatorname{supp}\mu_\infty$ of Assumption~\ref{ass:nested}
(full support there holding by that definition); and (v) the
admissible policies along the homotopy
share a common modulus of continuity and take values in the compact $\A$, so the
class is relatively compact in $C(\X;\A)$ and closed under uniform limits
\citep[the Arzel\`a--Ascoli theorem,][Theorem~7.25]{rudin1976}. Then for any accumulation subsequence
$\pi^{\kappa_j}\to\pi^\infty$ uniformly,
\[
\rwm\bigl(x,\pi^\infty(x),Q^{\pi^\infty}(x)\bigr)=0
\qquad\text{for every }x\in\calR
\]
(equivalently, $\mu_\infty$-almost surely; continuity of the limiting residual and full support of
$\mu_\infty$ on $\calR$ upgrade the almost-sure statement to every point of the reachable region).
Because $\mu_\infty$ has full support on the coverage-certified region $\calR$, $\pi^\infty$
satisfies the exact equilibrium condition everywhere that region reaches: on its own ergodic path
and on every audited off-ergodic region (the disaster and impulse-response sets), not merely on the
path it realizes. It is thus a rational-expectations equilibrium on the certified region $\calR$;
enriching the coverage design expands $\calR$ toward the full structurally reachable set, of which
this is the constructive limit. The result is stated for the structural model under the maintained
quadrature rule (Assumption~\ref{ass:regular}, with $\calE$ the finite node set); the integration
error of that rule relative to the continuous-shock model is a separate numerical-approximation
error, controllable by refining the rule, and is not part of this statement.
\end{theorem}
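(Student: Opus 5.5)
The plan is to prove that the exact residual of the limit policy vanishes $\mu_\infty$-almost surely, and then to upgrade this to every point of $\calR$ by continuity and full support.

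\textbf{Step 1: the exact residual on $\muK$ vanishes along the homotopy.} Specialise Proposition~\ref{prop:decomp} to the test measure $\mu=\muK$. The coverage term then drops out, since $\TV(\muK,\muK)=0$. This gives
\[
\E_{\muK}\bigl[\|\rwm(x,\pi^\kappa(x),Q^{\pi^\kappa}(x))\|^2\bigr]\le 3\bigl(\mathrm{OptErr}_\kappa+L_R^2\eta_\kappa(\theta^\kappa)+\mathrm{SurrFitErr}_\kappa\bigr).
\]
By hypotheses (i)--(iii), the right-hand side tends to zero. Write $f_\kappa(x):=\|\rwm(x,\pi^\kappa(x),Q^{\pi^\kappa}(x))\|^2$, so that $\E_{\muK}f_\kappa\to0$.

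\textbf{Step 2: uniform convergence of the residual along the subsequence.} Let $f_\infty$ be the analogous function for $\pi^\infty$. The claim is $f_{\kappa_j}\to f_\infty$ uniformly on $\X$. Under Assumption~\ref{ass:regular} the shock set $\calE$ is a finite node set with weights. Hence $Q^{\pi}(x)=\sum_{e\in\calE}w_e\,g(\Gamma(x,\pi(x),e),\pi(\Gamma(x,\pi(x),e)))$ is a finite sum of compositions of continuous functions on compacta.

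Uniform convergence $\pi^{\kappa_j}\to\pi^\infty$ gives uniform convergence of the inner arguments $\Gamma(x,\pi^{\kappa_j}(x),e)$. Uniform continuity of $g$ and $\Gamma$ on compact sets then carries this through. The outer term $\pi^{\kappa_j}(x')$ evaluated at a moving argument $x'$ is handled by the triangle inequality: $\|\pi^{\kappa_j}(x'_j)-\pi^\infty(x'_\infty)\|\le\|\pi^{\kappa_j}-\pi^\infty\|_\infty+\omega(\|x'_j-x'_\infty\|)$, where $\omega$ is the common modulus of continuity from hypothesis (v). So $Q^{\pi^{\kappa_j}}\to Q^{\pi^\infty}$ uniformly.

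Finally, $\rwm$ is $L_R$-Lipschitz in its continuation slot and uniformly continuous on the compact domain in its action slot. Together with boundedness of $\rwm$, this yields $\|f_{\kappa_j}-f_\infty\|_\infty\to0$.

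\textbf{Step 3: pass the limit inside the integral.} Split
\[
\E_{\mu_\infty}f_\infty\le\|f_\infty-f_{\kappa_j}\|_\infty+\E_{\mu_{\kappa_j}}f_{\kappa_j}+\bigl|\E_{\mu_\infty}f_\infty-\E_{\mu_{\kappa_j}}f_\infty\bigr|.
\]
The first term tends to zero by Step~2 and the second by Step~1. For the third, $f_\infty$ is bounded and continuous, so the weak convergence in hypothesis (iv) sends it to zero. Hence $\E_{\mu_\infty}f_\infty=0$, so $f_\infty=0$ holds $\mu_\infty$-almost surely.

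\textbf{Step 4: from almost surely to every point of $\calR$.} Suppose $f_\infty(x_0)>0$ for some $x_0\in\calR=\operatorname{supp}\mu_\infty$. Continuity gives an open neighbourhood $U$ of $x_0$ on which $f_\infty>f_\infty(x_0)/2$. Because $x_0$ lies in the support, $\mu_\infty(U)>0$, contradicting $\E_{\mu_\infty}f_\infty=0$. So the residual vanishes at every point of $\calR$.

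\textbf{Main obstacle.} I expect the difficulty to sit in Step~2, namely uniform convergence of $Q^{\pi^{\kappa_j}}$. It needs the equicontinuity in hypothesis (v), because the policy is evaluated at next-period states that themselves move with $j$. It also needs joint continuity of $\Gamma$ and $g$ from Assumption~\ref{ass:regular}. Both ingredients are available, and the finite quadrature avoids any dominated-convergence argument.

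A second, subtler point is that the moving measures $\muK$ must be compared with the fixed function $f_\infty$ rather than with $f_\kappa$. The decomposition in Step~3 isolates exactly this, so that weak convergence is applied only to one fixed bounded continuous function. Proposition~\ref{prop:decomp} is used only with $\mu=\muK$, so no total-variation control is needed.
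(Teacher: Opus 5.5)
Your proposal is correct and follows the paper's proof essentially step for step. Like the paper, you apply Proposition~\ref{prop:decomp} at $\mu=\muK$ so the coverage term vanishes, get uniform convergence of the residual from the finite quadrature rule and the common modulus in (v), use the same three-term split so that weak convergence is applied only to the fixed bounded continuous $f_\infty$, and upgrade from $\mu_\infty$-a.s.\ to every point of $\calR$ by continuity and full support; the only cosmetic difference is that you treat $\rwm$ slot by slot ($L_R$-Lipschitz in $Q$, uniformly continuous in the action) where the paper simply invokes uniform continuity of $\rwm$ on the compact $\X\times\A\times\mathcal{Q}_0$.
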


\noindent Theorems~\ref{thm:br} and \ref{thm:consist} are the
$\SCE\!\to\!\RE$ transition stated precisely. A finite-$\kappa$ solver certifies
a coverage-confirmed fixed point on the measure it imagines (self-confirming in the strict sense
only at $\kappa{=}0$); as imagination expands
so that condition~(iv) holds, the imagined support fills the coverage-certified region $\calR$,
accumulation points satisfy the exact residual everywhere that region reaches,
i.e.\ rational expectations. The
homotopy is the constructive bridge and $\kappa$ indexes the coverage reach; here weak
convergence, not total-variation convergence, is the operative limit mode, so as in
Proposition~\ref{prop:decomp} the held-out residual on dominated audit laws, not $\TV$ itself,
is the progress diagnostic. This is the operational,
computational sense in which disciplined imagination upgrades a self-confirming
equilibrium to rational expectations; it is not a behavioral theory of how agents
form expectations.

\begin{remark}[What the limit argument does and does not use]
Weak convergence of the coverage measures, condition~(iv), does not by itself
control the exact residual: the integrand changes with $\kappa$. The proof of
Theorem~\ref{thm:consist} therefore couples weak convergence with two further
ingredients, both supplied by the maintained assumptions: the residual converges
uniformly in the state along the policy subsequence, which uses the
compactness of the admissible policy class (condition~(v), Arzel\`a--Ascoli) and
the continuity of $g,\Gamma,\rwm$ on the compact state space; and the conclusion is
extended from $\mu_\infty$-almost-everywhere to the whole reachable region using
the continuity of the limiting residual and the full support of $\mu_\infty$.
Dropping any of these weakens the conclusion from ``everywhere reachable'' to
``$\mu_\infty$-almost everywhere'' or removes it altogether. The qualifier that the modulus is
common in condition~(v) is essential here: it is an equicontinuity hypothesis on the whole
admissible policy family, not merely continuity of each policy separately, and is what makes the
continuation-comparison modulus uniform along the moving homotopy. In the implementation the
practical counterparts of this hypothesis are architectural: bounded $\tanh$ activations and the
gradient-norm clipping of Table~\ref{tab:hyper} keep the trained policies' Lipschitz constants
controlled along the schedule, which is how the equicontinuity invoked here is maintained in
practice rather than assumed away; unbounded growth of weights with capacity or training length
would be the failure mode the hypothesis rules out.
\end{remark}

\paragraph{The distributional case.} The same support logic applies when the state is itself a
distribution. In heterogeneous-agent (mean-field) economies the path support is the realized
population path and coverage means off-equilibrium populations; the off-path object, the
continuation at counterfactual cross-sections, is then a genuine economic primitive, the
value the master equation already defines, rather than a numerical nicety. The distributional
version of this reading, including the master-equation interpretation and the
never-learned exact push-forward, is developed in Section~\ref{sub:bewley}.

\section{Numerical Experiments}\label{sec:exp}

We demonstrate EWM through a sequence of test cases that isolate the main pathologies of classical
deep-learning solvers and then scale them to richer economies. The Brock--Mirman laboratory of
Section~\ref{sub:worked} isolated the coverage gap against a transparent benchmark; this section
reports the two production experiments, each a deliberate
stress test of one of the two services EWM adds to a conventional solver: (A) integration, amortizing the costly continuation
expectation with an audited surrogate while imposing the equilibrium conditions on a broader,
model-generated set of states; and (B) perception, compressing a high-dimensional or
distribution-valued state into a learned summary the policy can act on. Across both, the through-line
is reliability. A classical pathwise solver computes a self-confirming solution: accurate only along
the path it simulates, fragile to the random seed, and silently uncertified in the rare, off-ergodic
regions. We show that EWM visibly lifts these pathologies, converging from arbitrary starts to a
solution that holds off the simulated path and not only along it. We solve both ab initio,
assuming no prior knowledge of the solution and using no pre-training; where a published or analytical
solution exists we compare against it, but the solver never relies on one.

The first economy is an international real business cycle (IRBC) model with irreversible investment
and a rare, persistent disaster, a workhorse of quantitative macroeconomics. The disaster and the
binding irreversibility constraint create a genuinely hard, off-ergodic region a pathwise solver
rarely visits. We choose the IRBC because its state grows meaningfully with the number of interacting
countries: we solve it for up to $N{=}32$ countries (a $65$-dimensional state), with nothing in the
method capping $N$, to see how EWM performs in a highly nonlinear, high-dimensional equilibrium. This
is where we study the coverage construction and the continuation surrogate (the world component)
head-on, together with robustness across random initializations and the compute it costs. The second
economy is a heterogeneous-agent Bewley economy, a workhorse for questions about inequality and about
who bears aggregate risk. Its aggregate state is the entire cross-sectional distribution of wealth, a
continuous, in-principle infinite-dimensional object with no closed form, which we represent
numerically as a fine histogram. Here the EWM component under test is the distributional encoder and
its JEPA training objective (Section~\ref{sub:variants}), layered on top of the coverage and surrogate
machinery the IRBC already exercises. The two economies are different in kind, not in difficulty: one
grows the state through a discrete, increasing number of agents, the other through a continuous
distribution.

Section~\ref{sub:protocol} fixes the common setup for these experiments, the solvers compared, the
accuracy measure, and the compute accounting; Section~\ref{sub:irbc} then studies systematically how
EWM performs on the rare-disaster IRBC model, and Section~\ref{sub:bewley} does the same on the Bewley
economy.

\subsection{Experimental setup}\label{sub:protocol}

Both economies are solved ab initio from many independent random seeds (ten unless stated). The only
reported error is the exact held-out residual $R^{H}$ of \eqref{eqn:residual}, the relative error in
the model's own equilibrium conditions on states the policy never trained on, including the rare and
counterfactual regions; cost is a count of the expensive exact-quadrature evaluations. Success is
defined in advance and separately for each experiment: a seed counts only if its resulting policy
passes that experiment's exact held-out certificate, which is verified stationarity with the
disaster-region residual below tolerance in the IRBC exercise, and the RE-internal stationary
cross-sectional audit in the heterogeneous-agent (Bewley) exercise, so what we report is reliable
success from arbitrary seeds rather than one tuned run. To separate the coverage lever from the
surrogate we use the three-rung ladder of Section~\ref{sub:worked}: \texttt{DEQN-path-exact} (the
pathwise baseline, which exposes the coverage gap), \texttt{DEQN-coverage-exact} (the surrogate-free
control), and \texttt{EWM-coverage-surrogate} (the full construction, coverage plus the audited
continuation surrogate). Full metric definitions, the seed and compute protocol, the reproducibility
and provenance statement, and the proposition-to-experiment map are in Appendix~\ref{app:protocol}.

\subsection{International real business cycle with rare disasters}\label{sub:irbc}

\paragraph{Empirical contract.}
The IRBC model is the paper's headline economy. It tests the three properties the abstract promises
at once: off-path certification in the hard region, robustness across random seeds, and affordability
as the state dimension grows. Its two frictions, irreversible investment and a rare, persistent
disaster, carve out an off-ergodic region the simulated path almost never visits. There the
irreversibility constraint binds with a kink that a smooth on-path policy cannot extrapolate to, and
that is exactly where a global rational-expectations solve is genuinely tested. Read through
Section~\ref{sub:coverage}, the experiment makes the sieve from self-confirming equilibrium to
rational expectations quantitative. The pathwise baseline settles into a self-confirming solution:
accurate on its own path, but an order of magnitude worse in the disaster it never imagines. EWM
instead imposes the same exact residual on the disaster coverage and certifies the result by the
exact held-out residual the policy never trained on. Two controls isolate the mechanism: a pathwise
baseline that exposes the self-confirming gap, and a surrogate-free coverage control that imposes the
exact residual on at least as much off-path mass as EWM, isolating the coverage lever. All reported numbers are exact held-out residuals $R^{H}$. The four results below
follow the abstract's order; the supporting amortization scaling, disaster-pricing certificate, and
surrogate-capacity homotopy are in Appendices~\ref{app:irbc_supp} and~\ref{app:highn}.

\paragraph{The model.}
We solve an asymmetric $N$-country real business cycle model with irreversible investment and a
rare, persistent disaster. The state $x=(k_1,\dots,k_N,\,z_1,\dots,z_N,\,d)$ collects capital and
log-productivity (total factor productivity, TFP) per country and a binary disaster indicator
$d\in\{0,1\}$; output is $Y_j=a_j k_j^{\zeta}$ with disaster-scaled productivity $a_j=e^{z_j}(1-b\,d)$
and a Barro-style depreciation jump ($\delta(0){=}0.01\!\to\!\delta(1){=}0.06$). The equilibrium
policy is a neural network $a=\nnp(x)=\nnet_\theta(x)$ returning the per-country controls
(next-period capital $k'_j$, investment $i_j$, the irreversibility multiplier $\mu_j\ge0$) and the
common marginal utility $\lambda>0$ on the world resource constraint, the same learned map as in the
Brock--Mirman example (Section~\ref{sub:worked}), now vector-valued. The equilibrium conditions are
the standard first-order conditions of the IRBC model, extended to a setting with rare disasters; the
derivation for the regular model is in \citet{brumm2017using} and the references therein. Dropping
time subscripts (a prime denotes next period), the equilibrium is the system
\begin{align}
\text{Euler}_j:&\quad
\lambda\,(1+\phi g_j)-\mu_j-\beta\,\Qpi_j(x)=0, &&j=1,\dots,N,
\label{eqn:irbc_euler}\\
\text{Clearing:}&\quad \textstyle\sum_{j}\bigl(Y_j-i_j-\mathrm{AC}_j-c_j\bigr)=0,
\label{eqn:irbc_arc}\\
\text{KKT}_j:&\quad i_j+\mu_j-\sqrt{i_j^2+\mu_j^2}=0, &&j=1,\dots,N,
\label{eqn:irbc_fb}
\end{align}
where $g_j=k'_j/k_j-1$ is the investment gap, $\mathrm{AC}_j=\tfrac{\phi}{2}k_j g_j^2$ the convex
adjustment cost, consumption follows the intratemporal rule $c_j=(\lambda/\tau_j)^{-\gamma_j}$, the
last line is the Fischer--Burmeister encoding of the irreversibility complementarity
$i_j\ge0,\ \mu_j\ge0,\ i_j\mu_j=0$, and
\begin{equation}
\Qpi_j(x)=\E\!\left[\lambda'\bigl(\mathrm{MPK}'_j+1-\delta(d')+\mathrm{adj}'_j\bigr)
-\bigl(1-\delta(d')\bigr)\mu'_j \;\middle|\; x,a\right]
\label{eqn:irbc_Q}
\end{equation}
is the per-country continuation, the expected discounted marginal value of an extra unit of installed
capital, taken over next-period TFP ($z'_j=\rho z_j+\sigma_j\varepsilon'_j+\sigma_a\varepsilon'_a$)
and the two-state disaster chain ($\Pr(d'{=}1\mid d{=}0)=p_d$). Equations
\eqref{eqn:irbc_euler}--\eqref{eqn:irbc_fb} are an instance of the general residual
\eqref{eqn:residual}; \eqref{eqn:irbc_Q} is the expensive continuation the audited surrogate $\wm$
stands in for. The network (identical across all solver arms), the calibration, and the
framework-to-IRBC object map are in Appendix~\ref{app:hyper} (Tables~\ref{tab:hyper}
and~\ref{tab:irbc_map}).

\paragraph{Constructing EWM here.}
DEQN and EWM share the residual \eqref{eqn:irbc_euler}--\eqref{eqn:irbc_fb}, an identical policy
head, and softplus feasibility, so the admissibility penalty is off ($\omega\calK\equiv0$) and the
policy is identified by the structural residual alone. EWM differs from DEQN in exactly two places,
each an instance of Section~\ref{sec:framework}: where the residual is imposed, and how the
continuation is evaluated.

The first is the coverage measure. EWM imposes the residual on
$\muK=\rho_1\muerg+\rho_2\mu^{\mathrm{stress}}_\kappa+\rho_3\mu^{\mathrm{local}}_\kappa$ of
\eqref{eqn:muK} rather than on the ergodic path $\muerg$. This measure is the policy's own simulated
path, augmented with a fixed fraction of those states with the disaster forced on ($d{:=}1$, the
off-ergodic region a pathwise solve almost never visits) and a further fraction given a small
repaired perturbation in $(k,z)$. Every component lives on states the exact transition $\Gamma$ can
reach, and the repair bounds the simulation only, never the residual; the replicable specification
is in Appendix~\ref{app:irbc_cov_spec}.

The second is the continuation. EWM evaluates the expensive expectation \eqref{eqn:irbc_Q} inside
the policy loss \eqref{eqn:Lpolicy}--\eqref{eqn:Lworld} with the audited surrogate $\wm$
(Section~\ref{sub:surrogate}) in place of exact quadrature. The surrogate is trained on a
stop-gradient target equal to the exact $\Qpi$ from the $4(N{+}1)$-node monomial rule on the true
$\Gamma$, and is audited on held-out states by the gap $R^{H}-R^{L}$, which is observed to stay nonnegative
($R^{H}\le R^{L}$) in every reported run: the learned continuation is conservative and is never
exploited to lower the exact residual. We report this ordering as an audited observation, not a
structural guarantee; were it to reverse, the off-path bound of Theorem~\ref{thm:offpath} would
still control the certified residual. EWM never learns the structural transition $\Gamma$. Across the warm-started stages the coverage support is held
fixed; what advances is the surrogate's fidelity and width (Appendix~\ref{app:irbc_cov_spec}).
Algorithm~\ref{alg:ewm} collects the procedure. In short, EWM is DEQN with the same exact residual
imposed on $\muK$ rather than $\muerg$ and the continuation amortized by an audited surrogate; this
is what closes the self-confirming gap and keeps the expectation affordable as $N$ grows.

\paragraph{The headline comparison.}
Table~\ref{tab:irbc} and Figure~\ref{fig:irbc_seed_basin} report the main IRBC comparison. We run it
on the country-modular shared-weight architecture at $N{=}2$ and
$N{=}4$, ten seeds each, $3500$ episodes, $p_d{=}0.001$. Every entry is the exact $R^{H}$ on held-out
states, the median over seeds (Table~\ref{tab:irbc}). Figure~\ref{fig:irbc_seed_basin} shows the same
runs one seed at a time: each point plots a seed's on-path exact residual (horizontal axis) against
its disaster-region exact residual (vertical axis), with the $45^\circ$ line of equal on- and
off-path accuracy and filled markers for the seeds that pass verified stationarity. A seed on the
diagonal is as accurate in the disaster as on its own path; a seed above it is accurate on-path but
not in the disaster. The two reliable arms collapse onto the diagonal and fill in; the pathwise
baseline sits an order of magnitude above it and stays open, so low on-path error is not evidence of
a reliable disaster solution.

\paragraph{Pathwise accuracy is not off-path certification.}
The pathwise baseline \texttt{DEQN-path-exact} exhibits the self-confirming gap and
does not converge. Its disaster-region residual runs an order of magnitude above its on-path
residual, and zero of ten seeds pass verified stationarity at either $N$
(Figure~\ref{fig:irbc_seed_basin}). The failure is
structural, not a budget artifact: re-running the $N{=}2$ baseline at $10{,}000$ episodes,
nearly three times the canonical budget, does not repair it. At most one seed in ten scrapes the
convergence tolerance, and it does so at a disaster residual ($1.4\times10^{-2}$) still an order of
magnitude above the certified arms. The disaster residual stays an order of magnitude above the
on-path residual, and the gap between them is unchanged, so the larger budget does not move the
self-confirming failure at all.

\paragraph{Coverage closes the disaster residual; the surrogate buys reliability and cost.}
The surrogate-free coverage control isolates the coverage lever, and shows that
coverage is necessary but on its own neither cheap nor fully reliable. The
\texttt{DEQN-coverage-exact} control imposes the exact residual on the disaster coverage with no
surrogate, on at least as much off-path mass as EWM (a conservative control, not an easier support;
Appendix~\ref{app:irbc_cov_spec}); it cuts the disaster residual by an order of
magnitude and reaches $8/10$ ($N{=}2$) and $9/10$ ($N{=}4$) verified, but at the largest
in-loop exact-evaluation budget of any arm ($B_{\mathrm{policy}}$ rising from $301$ to $502$
million across $N{=}2\!\to\!4$).

\texttt{EWM-coverage-surrogate} then matches that accuracy at the lowest in-loop compute and
the best robustness. Its disaster residual equals the coverage arm's to within the seed
scatter, while verified stationarity rises to $9/10$ ($N{=}2$) and $10/10$ ($N{=}4$), and it
does so at less than half the baseline's policy-loop budget ($B_{\mathrm{policy}}$ of $77$ and
$129$ million at $N{=}2,4$), because the surrogate replaces the in-residual quadrature. The
surrogate is trained at a separable cost ($B_{\mathrm{world}}=206$ and $344$ million), and the
per-query speed-up this buys grows with the country dimension, from $43\times$ at $N{=}2$
to $53\times$ at $N{=}4$ (Table~\ref{tab:irbc_scaling}). The two budgets are separable by
design: $B_{\mathrm{world}}$ is paid in the world arm, outside the policy gradient step.
Because the continuation target moves with the policy, the surrogate is refit by alternation: its
exact target is evaluated once per episode and reused across the inner policy steps
(Algorithm~\ref{alg:ewm}), and each stage's surrogate warm-starts the next. Here $B_{\mathrm{policy}}$
is the exact cost paid inside every policy step. Even on the combined exact budget
$B_{\mathrm{policy}}+B_{\mathrm{world}}$, EWM is no more expensive than the surrogate-free coverage
arm ($77{+}206$ versus $301$ million at $N{=}2$, $129{+}344$ versus $502$ at $N{=}4$), and it reaches
the best accuracy and robustness at that total cost.

In economic terms the reliability result is the consequential one. The disaster-IRBC is the
class of model used for rare-disaster asset pricing and for constrained policy analysis, and
obtaining a single trustworthy global solution of it today is largely a matter of reseeding and
warm-starting by hand. EWM removes that manual step: a solve that succeeds from none of ten random
starts becomes one that succeeds from nearly all, so the resulting
counterfactuals are usable rather than seed-dependent.

Throughout, the audited self-confirming-gap ratio satisfies $R^{H}/R^{L}\le1$ in every reported
run: the learned continuation is conservative, never exploited to manufacture a residual the exact
model would not confirm. Read
against Proposition~\ref{prop:decomp}, the three disaster-residual rows are its three
controllable terms: the pathwise baseline carries the full $\mathrm{CoverageErr}$; the surrogate-free
coverage control removes it, leaving $\mathrm{OptErr}$; and EWM adds back only the small, audited
surrogate-fit term $\mathrm{SurrFitErr}$, so the disaster residual orders
\texttt{coverage-exact}$\,\le\,$\texttt{EWM}$\,\ll\,$\texttt{path}. Both coverage arms thus
reach the certified disaster floor (order $10^{-3}$, against the pathwise $10^{-2}$); what
distinguishes EWM is that it reaches that floor from nearly every seed and at about one-third of the
coverage control's policy-step exact-evaluation budget. At business-cycle scale the decisive EWM
properties are therefore the rise from zero to near-universal verified convergence and the lower
in-loop cost, neither of which the surrogate-free control attains.

\begin{table}[t!]\centering\small

\setlength{\tabcolsep}{4pt}
\begin{tabular}{llcccccc}
\toprule
$N$ & Arm & on-path $R^{H}$ & disaster $R^{H}$ & disaster $R^{H}_{p99}$ & verified & $B_{\mathrm{policy}}$ & $B_{\mathrm{world}}$ \\
\midrule
$2$ & \texttt{DEQN-path-exact}        & $2.5\times10^{-3}$ & $1.9\times10^{-2}$ & $3.7\times10^{-2}$ & $0/10$  & $172$ & --- \\
    & \texttt{DEQN-coverage-exact}    & $1.3\times10^{-3}$ & $1.1\times10^{-3}$ & $5.6\times10^{-3}$ & $8/10$  & $301$ & --- \\
    & \texttt{EWM-coverage-surrogate} & $1.2\times10^{-3}$ & $1.4\times10^{-3}$ & $5.8\times10^{-3}$ & $9/10$  & $77$  & $206$ \\
\addlinespace
$4$ & \texttt{DEQN-path-exact}        & $2.6\times10^{-3}$ & $1.8\times10^{-2}$ & $3.2\times10^{-2}$ & $0/10$  & $287$ & --- \\
    & \texttt{DEQN-coverage-exact}    & $1.5\times10^{-3}$ & $1.0\times10^{-3}$ & $4.7\times10^{-3}$ & $9/10$  & $502$ & --- \\
    & \texttt{EWM-coverage-surrogate} & $1.5\times10^{-3}$ & $1.4\times10^{-3}$ & $5.6\times10^{-3}$ & $10/10$ & $129$ & $344$ \\
\bottomrule
\end{tabular}
\caption{International real business cycle, robustness and accuracy at $N{=}2$ and $N{=}4$
(country-modular architecture, ten seeds, $3500$ episodes, $p_d{=}0.001$). The columns are: $N$, the
number of countries; Arm, the solver configuration; on-path $R^{H}$, the exact held-out residual on
the on-path region; disaster $R^{H}$, its median over the disaster-region held-out set; disaster
$R^{H}_{p99}$, the 99th percentile of that disaster residual; verified, the verified-stationarity
rate; $B_{\mathrm{policy}}$, the exact-evaluation budget inside policy training (millions); and
$B_{\mathrm{world}}$, the separable surrogate-training budget paid outside the policy gradient loop,
with dashes for the arms that use no surrogate. All
errors are the exact $R^{H}$ of \eqref{eqn:residual} on held-out states, median over all ten seeds
(including
unconverged ones, so the pathwise baseline's residual is a median over runs that complete but do
not verify); the disaster columns are over the disaster-region held-out set. ``verified'' is the
verified-stationarity rate, the share of seeds whose policy is time-invariant ($\sup$-norm change
$<10^{-3}$ on a fixed held-out evaluation set sampled from the coverage distribution) with disaster-region $R^{H}$ below the convergence tolerance
(Appendix~\ref{app:protocol}); $B_{\mathrm{policy}}$ is the exact-evaluation budget inside policy
training (millions). The pathwise baseline never converges ($0/10$ at both $N$); coverage repairs
accuracy but is the most expensive; EWM matches it at the lowest $B_{\mathrm{policy}}$ and the
highest verified rate, plus the separable $B_{\mathrm{world}}$ shown in the last column; its
surrogate audit gives $R^{H}/R^{L}\le1$ in every run (conservative, never exploited; observed, not
guaranteed).}\label{tab:irbc}
\end{table}

All reported IRBC numbers are exact held-out $R^{H}$ from the ab initio ten-seed protocol of
Section~\ref{sub:protocol} (full provenance in Appendix~\ref{app:protocol}).

\begin{figure}[t!]\centering
\includegraphics[width=\textwidth]{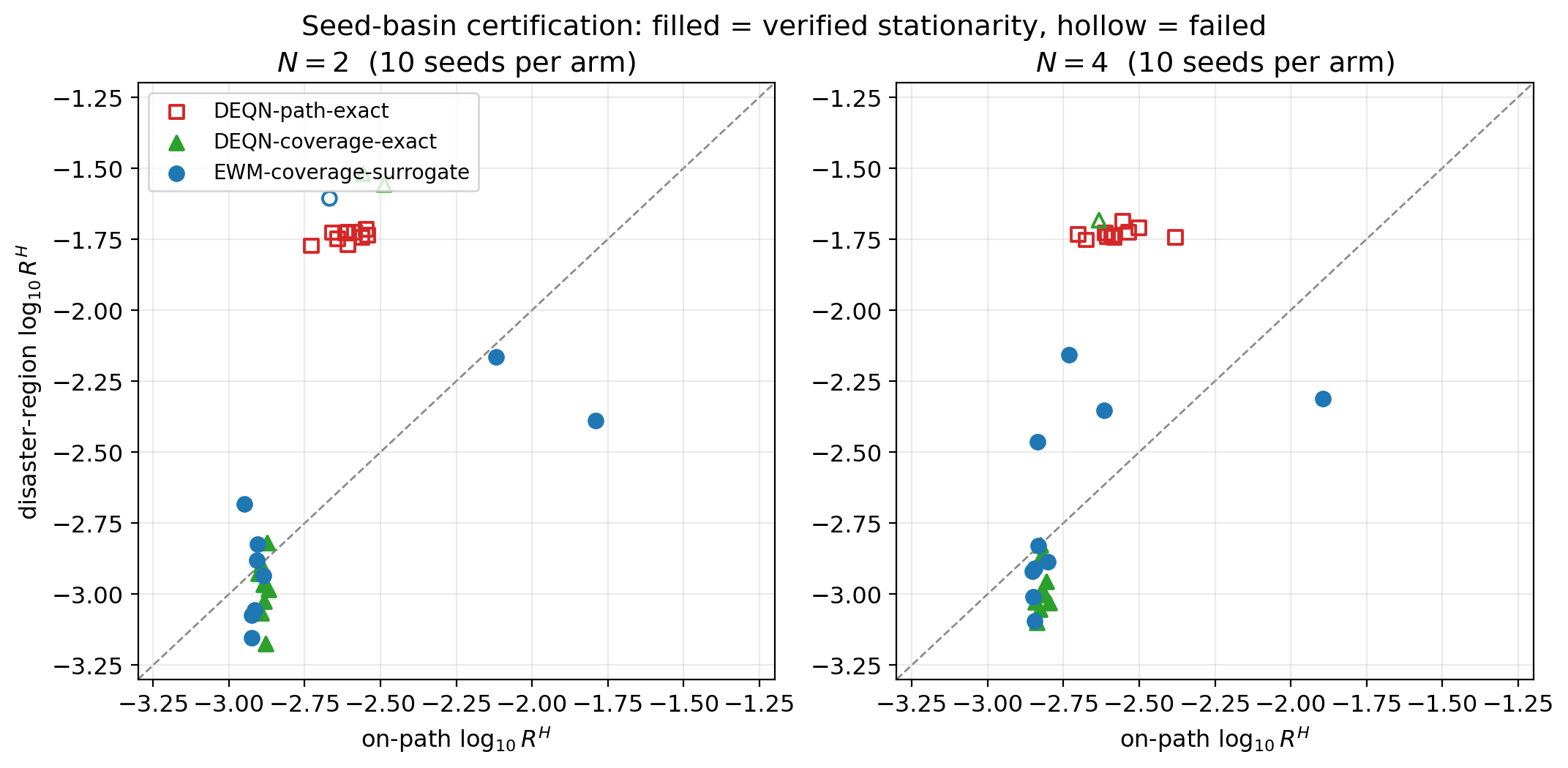}
\caption{Seed-basin certification at $N{=}2$ and $N{=}4$ (ten seeds per arm). Each point is one
seed: the on-path exact residual against the disaster-region exact residual, with the $45^\circ$
line shown and filled markers for seeds that pass verified stationarity. The pathwise baseline sits
an order of magnitude above the diagonal and never verifies; the coverage and surrogate arms
collapse onto the diagonal and verify in eight to ten seeds of ten. Low on-path residual is thus not
evidence of a reliable disaster solution. The pathwise verified rate is $0/10$ at both the
$1.0\times10^{-2}$ and the $1.5\times10^{-2}$ disaster tolerance, so the separation is a property of
the seed distribution, not of the tolerance choice.}\label{fig:irbc_seed_basin}
\end{figure}

\paragraph{The certificate changes the disaster object.}
The reliability gap is not a numerical abstraction; it changes the disaster price an economist reads
off the solution. A policy uncertified in the disaster region can deliver the wrong price of disaster
risk; a policy certified on the disaster coverage does not.

We price the one-period Arrow claim on the disaster state off each policy's own marginal utility.
Every solve whose disaster $R^{H}$ reaches the certified floor prices it in a narrow band
($1.42$--$1.47$ at $N{=}2$, $1.42$--$1.44$ at $N{=}4$), a consensus that holds across both coverage
technologies. The pathwise baseline, which reaches the floor in none of ten seeds, prices below the
band in every seed. The disaster price is therefore trustworthy exactly when the disaster residual is
certified low, which is the object the off-path audit delivers and a pathwise solve cannot;
Appendix~\ref{app:irbc_supp} gives the full pricing exhibit in Figure~\ref{fig:irbc_pricing}, which
plots the disaster Arrow-claim price against the held-out disaster residual, one point per seed and
arm.

The disaster price inherits whatever certificate the solve carries. What EWM adds is making that
certificate affordable and seed-robust: it reaches the certified floor at the lowest in-loop cost and
keeps reaching it as the state dimension grows, where the surrogate-free coverage control certifies
only at several times the in-loop cost (Appendix~\ref{app:highn}) and no external reference solve
exists.

\paragraph{Amortization scales the certificate.}
The continuation surrogate removes the repeated exact continuation evaluations from the policy loop,
which is what keeps the solve affordable as the model grows. State, policy, residual, and coverage
all scale linearly in $N$, and so does the exact continuation: the disaster-crossed monomial rule for
\eqref{eqn:irbc_Q} uses $4(N{+}1)$ exact next-state evaluations per queried state, the dominant
dimension-dependent exact cost paid inside the policy loop. The audited surrogate replaces these
repeated evaluations by one learned query, so the number of exact continuation evaluations avoided
per policy step is the node count $4(N{+}1)$ itself: it grows linearly in $N$ under the monomial rule
(the $\Theta(N)$ amortization of Proposition~\ref{prop:amort}), and faster for any richer quadrature
rule used to define the numerical model. At $N{=}32$ this is
$132$ fewer exact continuation evaluations per step, just over two orders of magnitude, realized as a
$43\times$ to $70\times$ wall-clock saving from $N{=}2$ to $N{=}32$. Table~\ref{tab:irbc_scaling} in
Appendix~\ref{app:irbc_supp} reports the full scaling, with one row per $N$ and the per-query node
count, parameter counts, and wall-clock speed-up in its columns.

At large state dimension EWM still reaches rational-expectations accuracy, and it is the only arm
that does so from every reported seed and at the lowest in-loop budget. We push
\texttt{EWM-coverage-surrogate} to $N{=}8$,
$16$, and $32$ (state dimension $17$, $33$, and $65$): at all three the held-out disaster-region
residual reaches $R^{H}\approx3\times10^{-3}$ and every seed passes verified stationarity (two seeds
at $N{=}8$, five each at $N{=}16$ and $N{=}32$), at a near-constant modular parameter count
($\approx2.9\times10^4$). Accuracy is flat as the state dimension doubles twice,
$17\!\to\!33\!\to\!65$ (median disaster $R^{H}$
$3.1\times10^{-3}\!\to\!3.3\times10^{-3}\!\to\!3.0\times10^{-3}$), the signature of a solver whose
cost, not whose accuracy, absorbs the dimension. The two controls do not keep up: the pathwise
baseline retains its self-confirming signature at every dimension (on-path
$R^{H}\approx3\times10^{-3}$ but disaster-tail $\approx1.8\times10^{-2}$, verified $0/1$), and the
surrogate-free coverage control, while it reaches the same floor at every dimension, does so at
roughly four times EWM's in-loop budget, and not from every seed at the intermediate $N{=}16$ ($3/5$,
against $3/3$ at $N{=}8$ and $N{=}32$). EWM is thus the
only configuration that reaches rational-expectations accuracy from every seed into a
$65$-dimensional state where no independent reference solve is feasible, accuracy certified by the
held-out exact $R^{H}$, verified stationarity, and the surrogate audit (Table~\ref{tab:irbc_highn_main};
the full three-arm certification is in Appendix~\ref{app:highn}, Table~\ref{tab:irbc_highn}).
These high-dimensional models carry the economics of interest: questions of international risk sharing
and how a rare disaster in one region spills across many are precisely the cases a pathwise solver
leaves uncertified, and EWM delivers a certified global solve of them.

\begin{table}[t!]\centering\small
\setlength{\tabcolsep}{6pt}
\begin{tabular}{lccccc}
\toprule
$N$ & state dim & verified & disaster $R^{H}$ & $B_{\mathrm{policy}}$ & exact-node saving \\
\midrule
$8$  & $17$ & $2/2$ & $3.1\times10^{-3}$ & $232$ & $36\times$  \\
$16$ & $33$ & $5/5$ & $3.3\times10^{-3}$ & $439$ & $68\times$  \\
$32$ & $65$ & $5/5$ & $3.0\times10^{-3}$ & $851$ & $132\times$ \\
\bottomrule
\end{tabular}
\caption{High-dimensional certification of \texttt{EWM-coverage-surrogate} at $N{=}8$, $16$, and
$32$ (state dimension $17$, $33$, $65$; country-modular architecture, $3500$ episodes). The columns
are: $N$, the number of countries; state dim, the state dimension; verified, the
verified-stationarity count of seeds; disaster $R^{H}$, the held-out disaster-region exact residual;
$B_{\mathrm{policy}}$, the in-loop policy exact-evaluation budget; and exact-node saving, the
per-query continuation evaluations $4(N{+}1)$ the audited surrogate eliminates. Every seed
passes verified stationarity (two seeds at $N{=}8$, five each at $N{=}16$ and $N{=}32$) and the
held-out disaster-region $R^{H}$ stays at its floor as the state dimension doubles twice; the
exact-node saving is the per-query continuation evaluations $4(N{+}1)$ the audited surrogate
eliminates. At these dimensions no external reference solve is feasible, so rational expectations is
certified by the exact held-out $R^{H}$, verified stationarity, and the surrogate audit, not against
an outside solution. The full three-arm comparison (with the surrogate-free and pathwise controls and
the separable $B_{\mathrm{world}}$) is in Appendix~\ref{app:highn},
Table~\ref{tab:irbc_highn}.}\label{tab:irbc_highn_main}
\end{table}

\subsection{Heterogeneous agents: the Bewley economy as a distributional microscope}\label{sub:bewley}

Heterogeneous-agent models are the workhorse of modern macroeconomics for questions about
inequality and about how the distribution of wealth both shapes and responds to aggregate shocks.
They are also the hardest case for any rational-expectations solver, because what a household must
anticipate is not a handful of prices but the behavior of the entire wealth distribution, and that
distribution shifts when the aggregate regime changes. This subsection is a microscope on one new
object, the learned summary of that distribution. It makes the object legible and exactly
certified, the way Brock--Mirman makes coverage visible: among policies that all pass the
same exact equilibrium audit, it asks which summary recovers the correct cross-section. A poorly
chosen summary leaves the solver in a self-confirming equilibrium now in the space of
distributions, accurate on the populations it has already seen and miscalibrated on the ones a
regime change produces, the same off-path failure this paper studies, carried into the
cross-section.

\paragraph{Relation to the rational-expectations burden in heterogeneous-agent models.}
We do not dispute the informational burden that rational expectations places on these economies:
the aggregate state is the cross-sectional distribution, and prices need not be Markov in any
low-dimensional aggregate alone \citep{Moll2026EJ}. EWM accepts this diagnosis; its contribution is
computational. The policy conditions on a low-dimensional learned perception of the distribution,
while the population is still advanced by the exact structural law of motion and the candidate
solution is audited against the exact full-distribution rational-expectations residual. The encoder
is thus a modeler's representation device, not a behavioral claim about what households observe or
compute.

The reading is sharpest in mean-field terms. The continuation $U(x,\mu)$, defined for every
population $\mu$, is the master equation
\citep{lasry2007mean,cardaliaguet2019master,gu2024masterequations}, the rational-expectations
object: the household anticipates the equilibrium response to any population, not only the one
realized on the equilibrium path. A simulation-based solver enforces consistency only along the
realized path $\{\mu_t^\star\}$, certifying the master equation on the ergodic manifold and leaving
it under-determined off it, the distributional analogue of the self-confirming gap. Here the off-path
object is a genuine economic primitive, the value at counterfactual populations the master equation
already defines, so coverage is the difference between solving the right functional equation and the
wrong one. Coverage over off-equilibrium populations, stressed post-shock cross-sections built by the
exact distributional transition, is the world model imagining off that manifold, and the
off-equilibrium residual again measures the gap. A strategic mean-field-game version with partial
observability is left as a sequel.

This is where the world model's second service, perception, becomes decisive
(Section~\ref{sub:econ}). On the previous models the policy reads a short list of numbers and the
only approximation is the size of the network, the function-space sieve. The heterogeneous-agent
model forces a second choice: its aggregate state is the cross-sectional wealth distribution $\mu$
itself, and before the policy can act the modeler must pick a finite-dimensional summary $h(\mu)$,
the distributional sieve. This is the central computational difficulty in solving
heterogeneous-agent macroeconomics with aggregate risk under rational expectations
\citep{Moll2026EJ}: the continuation is a functional of the whole population, the master-equation
object above, and that population must be encoded to make the problem tractable. The Krusell--Smith
resolution \citep{krusell1998} is a hand-crafted sieve, a short list of moments; EWM instead
instantiates the learned encoder $h_\psi(\mu)$ of Section~\ref{sub:econ}
\citep{MaesLeLidec2026LeWM}, trained jointly with the policy, while leaving the structural object
untouched; exactly where the encoder enters, and what stays exact, is made precise below.
Replacing the hand-chosen moments with a learned, permutation-invariant summary is itself an
established idea, the generalized moments of \citet{han2023deepham} and the symmetry-exploiting
encoders of \citet{kahou2021exploiting}; what is new here is the off-path discipline imposed on it,
the exact full-distribution residual enforced on the off-equilibrium populations the simulated path
never reaches.\footnote{EWM is a modeler's computational technology that operationalizes the
rational-expectations benchmark for this model class; it does not claim that households compute
$h_\psi$, exactly as in the relation to rational expectations set out in Section~\ref{sec:intro}.}

\paragraph{The model.} We solve the published continuum-of-agents economy of
\citet{azinovicDEEPEQUILIBRIUMNETS2022} (their Appendix~A.5.1), a Bewley--Aiyagari economy
\citep{bewley1986stationary,Aiyagari} in which aggregate risk takes the form of two uncertainty
regimes rather than a rare disaster, so a regime change reshapes the whole wealth distribution, and
with it who is exposed to the shock, without any rare-disaster device. A continuum of
ex-ante-identical households trades a single bond $a$ in unit net supply, subject to the borrowing
constraint $a'\ge a_{\min}$, at the market-clearing price $p$. A household with bond holding $a$ draws
an idiosyncratic labor shock $e$, while the economy draws an aggregate shock $z$ (a finite-state
Markov chain indexing the two uncertainty regimes), so labor income is $w(z,e)$ and consumption
$c=w(z,e)+a-p\,a'$. Preferences are recursive Epstein--Zin (relative risk aversion $\sigma{=}8$,
intertemporal elasticity $1/\rho{=}0.5$, discount $\beta{=}0.95$). With $x=(a,e,z,\mu)$ the individual
state, $x'$ its successor, and $\chi$ the risk-adjusted certainty equivalent, the recursive value and
the household's optimality conditions are
\citep[A.19--A.25 of][]{azinovicDEEPEQUILIBRIUMNETS2022}
\begin{align}
V(x) &= \max_{a'\ge a_{\min}}\Big((1-\beta)\,c^{1-\rho}+\beta\,\chi\big(V(x')\big)^{1-\rho}\Big)^{\frac{1}{1-\rho}},
\qquad \chi\big(V(x')\big)=\E\big[\,V(x')^{1-\sigma}\,\big]^{\frac{1}{1-\sigma}},\label{eqn:bewley_value}\\
p\,c^{-\rho} &= \beta\,Q^{H}(x)+\nu,\qquad 0\le \nu \;\perp\; (a'-a_{\min})\ge 0,\label{eqn:bewley_resid}
\end{align}
the Euler equation, whose continuation $Q^{H}(x)=\E\big[c'^{-\rho}\big(V(x')/\chi(V(x'))\big)^{\rho-\sigma}\big]$
carries the recursive-utility adjustment $(V'/\chi)^{\rho-\sigma}$, together with the borrowing
complementarity (multiplier $\nu\ge0$). The economy closes as a functional rational-expectations
equilibrium \citep[FREE, Definition~A.2 of][]{azinovicDEEPEQUILIBRIUMNETS2022}: equilibrium functions
$\{V,a',\nu,p\}$ of the full state at which \eqref{eqn:bewley_value}--\eqref{eqn:bewley_resid} hold and
the bond market clears under the exact distributional transition
\citep[A.26--A.34 of][]{azinovicDEEPEQUILIBRIUMNETS2022},
\begin{equation}
\textstyle\int a'(x)\,d\mu = 1,\qquad \mu' = \Phi(\mu,a',z),
\label{eqn:bewley_free}
\end{equation}
where $\Phi$ is the exact Young push-forward of the population, never learned and never forecast. The
reported error $R^{H}$ stacks the Bellman, Euler, market-clearing, and complementarity errors into the
weighted-sum loss \eqref{eqn:Rsum} of their squares, exactly the DEQN loss of \citet[their
A.41]{azinovicDEEPEQUILIBRIUMNETS2022} (the complementarity in Fischer--Burmeister form,
Section~\ref{sub:worked}); $Q^{H}$ is evaluated exactly by enumeration under $\Phi$. The equilibrium
has no closed form, and $\mu$ is solved for numerically as a fine histogram of $\ledgerBewleyFullMuDim$
bins. As everywhere in this paper, $R^{H}$ is the only equilibrium error reported; the off-equilibrium
coverage states are regime-transient cross-sections, push-forwards of the ergodic population under a
regime change, the off-manifold imagination of the mean-field reading.

\paragraph{What is approximated, and what EWM changes.}
The DEQN solver approximates the four FREE functions by two networks
\citep[A.35--A.36]{azinovicDEEPEQUILIBRIUMNETS2022}, the price and the stacked
value--policy--multiplier map,
\begin{equation}
\hat p(z,\mu)\ \approx\ p(z,\mu),
\qquad
[\hat V,\hat a',\hat\nu](a,e,z,\mu)\ \approx\ [V,a',\nu](a,e,z,\mu),
\label{eqn:bewley_approx}
\end{equation}
each conditioning on an aggregate state that contains the entire cross-sectional distribution
$\mu$, in practice the raw $\ledgerBewleyFullMuDim$-bin histogram. That argument is the only place the
distribution enters the solve, and it is the only place EWM changes anything: it substitutes the
learned, permutation-invariant encoder $h_\psi(\mu)$ for the raw $\mu$,
\begin{equation}
\hat p(z,h_\psi(\mu)),
\qquad
[\hat V,\hat a',\hat\nu](a,e,z,h_\psi(\mu)).
\label{eqn:bewley_approx_ewm}
\end{equation}
Everything structural is untouched (Figure~\ref{fig:bewley_pivot}): the residual
\eqref{eqn:bewley_resid} is still imposed on the full $\mu$, the continuation $Q^{H}$ is still
evaluated exactly under the exact Young push-forward $\Phi$, and the exact $R^{H}$ on the full
population is still the only reported error. The encoder is a substitution inside the argument
of the approximated functions, not a change to the equations they must satisfy; it is the one place
the JEPA-trained world model enters this economy.

\begin{figure}[t]\centering
\resizebox{0.95\textwidth}{!}{
\begin{tikzpicture}[>=Stealth, font=\small,
  box/.style={rectangle, rounded corners=2pt, draw=blue!55!black, thick, fill=blue!5,
     minimum width=2.9cm, minimum height=1.1cm, align=center},
  enc/.style={rectangle, rounded corners=2pt, draw=red!60!black, thick, fill=red!6,
     minimum width=2.2cm, minimum height=1.0cm, align=center},
  res/.style={rectangle, rounded corners=2pt, draw=black!55, thick, fill=black!4,
     minimum width=3.1cm, minimum height=1.1cm, align=center},
  lbl/.style={font=\footnotesize, align=center},
  ar/.style={->, thick}]
\node[box] (agg) at (0,0)    {aggregate state\\$(z,\mu)$,\ $\mu$ a wealth\\distribution};
\node[box] (net) at (7.6,0)  {FREE networks\\(A.35--A.36):\ $\hat p,\hat V,\hat a',\hat\nu$};
\node[res] (res) at (12.9,0) {structural residual\\$R^{H}$ on the full $\mu$\\under exact $\Phi$};
\node[enc] (enc) at (3.8,-2.0) {encoder $h_\psi$};
\draw[ar] (agg) -- node[lbl, above]{DEQN: raw histogram $\mu$\\($\ledgerBewleyFullMuDim$ bins)} (net);
\draw[ar] (agg.south) |- (enc.west);
\draw[ar] (enc.east) -| node[lbl, below, pos=0.5, xshift=4mm]{EWM: embedding $h_\psi(\mu)$} (net.south);
\draw[ar] (net) -- (res);
\end{tikzpicture}}
\caption{What is approximated, and where the encoder enters. The Bewley solve approximates the
equilibrium (FREE) functions by networks (A.35--A.36 of
\citealp{azinovicDEEPEQUILIBRIUMNETS2022}) that condition on the aggregate state $(z,\mu)$. DEQN
feeds the raw wealth histogram $\mu$ to those networks; EWM instead feeds the learned embedding
$h_\psi(\mu)$. Only the argument is compressed: the structural residual $R^{H}$ is still
imposed on the full distribution $\mu$, and the transition $\Phi$ is exact and never learned.}
\label{fig:bewley_pivot}
\end{figure}

\paragraph{Reproduction anchor.}
Before restricting perception we confirm the substrate. Run as a plain DEQN on the raw histogram,
EWM reproduces the published Appendix-A.5 solve of \citet{azinovicDEEPEQUILIBRIUMNETS2022}, its exact
Euler, Bellman, and market-clearing residuals matching the published bounds to the same order. The
economy and the exact transition $\Phi$ are faithful before the encoder is introduced.

\paragraph{Bounded perception, full-distribution audit.}
The household policy is computationally restricted to read a compact summary $h(\mu)$, while
the modeler audits the resulting policy against the exact full-distribution rational-expectations
residual. This is a restricted-perception policy class audited against rational expectations, not a
bounded-rationality equilibrium concept: the reference distribution never enters training, and the
only equilibrium error reported is the exact $R^{H}$ on the full $\mu$ under the exact, never-learned $\Phi$. The
obstacle is not the economics but the representation of the state: the continuation $Q^{H}$ in
\eqref{eqn:bewley_resid} is exact and $\Phi$ is known in closed form, but the policy must condition on
the entire population $\mu$, a point in a $\ledgerBewleyFullMuDim$-dimensional (and in general
arbitrarily high-dimensional) simplex no network can read afresh at every residual evaluation and
still scale. This is the perception service of the world model (Section~\ref{sub:econ}), here
instantiated on the wealth distribution: a LeWorldModel encoder \citep{MaesLeLidec2026LeWM}
$h_\psi(\mu)\in\mathbb{R}^{\ledgerBewleyEWMSieveDim}$ maps $\mu$ to a finite embedding the policy can act
on, one that retains the wealth statistics equilibrium depends on (Figure~\ref{fig:linear_probe}
verifies this with a linear probe), trained jointly with the policy by a distributional JEPA
objective (predict $h_\psi(\mu')$ from $h_\psi(\mu)$ under the exact $\Phi$) with an anti-collapse
regularizer (SIGReg) that keeps the compact world-view from degenerating to an uninformative constant
(mechanics in Appendix~\ref{app:encoder}).

What we use from the world model here is therefore only the encoder, the representation of the
population: the continuation is the exact $Q^{H}$ of \eqref{eqn:bewley_resid}, so no learned value
ever enters the residual, and the encoder only sets which compression of $\mu$ the policy reads. This
is the complement of the IRBC of Section~\ref{sub:irbc}, where the world
model instead supplied the learned continuation $\wm$ that amortized the expectation: of the two
services of Section~\ref{sub:econ}, the business-cycle model exercises integration, the
heterogeneous-agent economy perception. Perception is decisive precisely where the state is itself a
distribution; the world model's perception service turns an otherwise intractable conditioning
problem into a tractable one while the economics, the residual and the transition, stays exact. As in
Section~\ref{sub:econ}, $h_\psi$ plays by discovery the role the Krusell--Smith moments play by hand,
but is discovered from the model's own transition, so it finds the coordinates of $\mu$ the dynamics
actually move; Appendix~\ref{app:encoder} works a reducible case in which the minimal sufficient
statistic is known in closed form and the encoder recovers it.

\paragraph{Three sieves, one certificate.}
This is a perception experiment, not a residual horse race: among policies that all pass the same
exact RE-internal audit at the same budget, the question is which summary of the distribution the
policy reads recovers the correct cross-section.
We compare three choices of the summary $h(\mu)$, holding everything else fixed: the same economy,
the same policy architecture except its aggregate input, the same exact residual
\eqref{eqn:bewley_resid} imposed on the full $\mu$, the same exact distributional transition $\Phi$,
the same off-ergodic coverage configuration, the same class of random initializations, and
evaluation of the converged policy. The arms differ only in the aggregate object the policy reads.
The \texttt{full-mu-raw} arm feeds in the entire histogram, all $\ledgerBewleyFullMuDim$ numbers,
with no summary at all. The \texttt{ks-moments} arm feeds in $\ledgerBewleyKSMomentsDim$ statistics
chosen by the modeler in the tradition of \citet{krusell1998}: mean wealth, the variance of wealth,
the share of households at the borrowing limit, the Gini coefficient, and the top wealth share. The
\texttt{ewm-sieve} arm feeds in a learned embedding of dimension $\ledgerBewleyEWMSieveDim$ the solver
discovers for itself, the world-model encoder $h_\psi(\mu)$ above. The raw histogram is a far
higher-dimensional input than the learned embedding, a compression the encoder pays for once rather than
at every residual evaluation (the basis of the scaling result below).

We fix the definition of ``solved'' before any result. The converged policy passes an exact,
three-part RE-internal certificate if its held-out relative Euler and complementarity residual is
below $\ledgerBewleyCertEulerThreshold$, the stationary-law $L_1$ drift under the exact $\Phi$ is
below $\ledgerBewleyCertDriftThreshold$, and the policy is feasible and non-degenerate (positive
consumption, $\nu\ge0$, and stationary mass at the borrowing constraint below
$\ledgerBewleyCertConstraintMassThreshold$). The thresholds are loose, RE-internal bars taken from
this project's existing conventions, not tuned on the reference grid. The certificate is RE-internal:
it uses the exact residual and the exact law of motion, but it does not use the reference
stationary distribution. The reference distribution enters only as an external validation
diagnostic, the $L_1$ distance of a run's stationary cross-section to an independently computed
reference (throughout, $L_1$ is the total-variation distance normalized to $[0,1]$).

\paragraph{The learned embedding selects the right cross-section among audited policies.}
Table~\ref{tab:bewley} reports the three sieves under this certificate over
$\ledgerBewleyNumSeeds$ random initializations at $\ledgerBewleyBudgetEpisodes$ training episodes.
The experiment separates certification from selection: all three sieves are certified internally by the
same RE-internal audit, but only the learned embedding selects the reference-consistent stationary
cross-section.

\begin{table}[t!]\centering\small
\setlength{\tabcolsep}{6pt}
\begin{tabular}{lcccc}
\toprule
Sieve & dim. & mean $R^{H}$ & constraint mass & $L_1$ to reference \\
\midrule
\texttt{full-mu-raw}
& \ledgerBewleyFullMuDim
& \ledgerBewleyFullMuEulerMean
& \ledgerBewleyFullMuConstraintMass
& \ledgerBewleyFullMuLoneRefMedian{} (\ledgerBewleyFullMuLoneRefMin--\ledgerBewleyFullMuLoneRefMax) \\
\texttt{ks-moments}
& \ledgerBewleyKSMomentsDim
& \ledgerBewleyKSMomentsEulerMean
& \ledgerBewleyKSMomentsConstraintMass
& \ledgerBewleyKSMomentsLoneRefMedian{} (\ledgerBewleyKSMomentsLoneRefMin--\ledgerBewleyKSMomentsLoneRefMax) \\
\texttt{ewm-sieve}
& \ledgerBewleyEWMSieveDim
& \ledgerBewleyEWMSieveEulerMean
& \ledgerBewleyEWMSieveConstraintMass
& \ledgerBewleyEWMSieveLoneRefMedian{} (\ledgerBewleyEWMSieveLoneRefMin--\ledgerBewleyEWMSieveLoneRefMax) \\
\bottomrule
\end{tabular}
\caption{Three distributional sieves under the same exact RE-internal certificate. The columns are:
Sieve, the policy-input sieve; dim., its input dimension; mean $R^{H}$, the held-out exact
Euler and complementarity residual; constraint mass, the share of mass at the borrowing constraint;
and $L_1$ to reference, the total-variation distance of the stationary cross-section to an
independently computed reference (median, with min and max in parentheses). The policy input
changes across rows; the economy, exact residual \eqref{eqn:bewley_resid}, exact transition,
coverage design, and training budget are held fixed. The certificate (held-out
Euler/complementarity residual, stationary-law drift under the exact transition, and
feasibility/non-collapse) is passed by all three arms at this budget, so it is \emph{constant by
design} and is reported once in the text rather than as a column; it is not a discriminator here.
The final column is not part of the certificate: it is an external validation diagnostic
against an independently computed reference stationary distribution, and it is where the sieves
separate.}\label{tab:bewley}
\end{table}

The numbers bear this out. On the external validation diagnostic, the $L_1$ distance of the stationary
distribution to an independently computed reference, the \texttt{ewm-sieve} embedding sits closest, the
\texttt{ks-moments} sieve markedly farther (about $\ledgerBewleyKSToEWMLoneRatio$ times the
encoder's distance) even though its mean Euler residual is comparable to or lower than the
encoder's, and the \texttt{full-mu-raw} arm, which carries all the information, collapses to
essentially the same wrong cross-section as the hand moments
($L_1\approx\ledgerBewleyRawCollapseLoneRef$). The implication is sharp: a low residual is not
sufficient to identify the correct stationary distribution. Passing the equilibrium audit does not by
itself buy the correct population; the summary the policy reads is what selects, among audited
solutions, the one that recovers it, and among the three tested representations only the learned
embedding does so. The raw-histogram outcome suggests that the relevant limitation is not the
information set itself but the numerical conditioning of the high-dimensional representation, an input
the policy cannot resolve at every residual evaluation.

The learned embedding does not win by being the only arm that passes the equilibrium certificate; all
three do. It wins because, among policies that pass the same RE-internal audit, it is the one that
recovers the economically relevant cross-section. The hand moments are the cleanest illustration: a
comparable or lower Euler residual yet a cross-section far from the reference, the distributional
signature of a self-confirming solution (low on-path error, wrong population). The raw histogram
shows that information content alone is not enough; conditioning is. The learned embedding is a
perception device, a compact, self-discovered summary the policy can resolve, not a learned
simulator: the transition $\Phi$ and the residual $R^{H}$ remain exact, and the reference
distribution remains validation-only. That a learned summary should outperform a fixed moment list
is itself consistent with the generalized moments of \citet{han2023deepham} and the symmetry
encoders of \citet{kahou2021exploiting}; what is specific here is the off-path discipline carried
into distribution space, with the exact full-distribution residual as the judge. The embedding must also be
large enough. We fix its dimension by a capacity sweep, not by tuning to the reference: we raise the
dimension until the policy both passes the residual audit and reproduces the cross-section, and report
the smallest dimension that does so, $\ledgerBewleyEWMSieveDim$.

\paragraph{The embedding carries decision-relevant shape and scales.}
The learned embedding earns its place on two counts. First, it keeps the decision-relevant cross-section: a
linear probe recovers the dispersion of wealth that prices and the continuation load on, the variance
and the Gini coefficient, from the trained embedding (Figure~\ref{fig:linear_probe}), so the policy reads
the shape of the distribution that equilibrium does not pin down, not merely its level.
Second, the learned summary is what lets the solver handle large populations at all. Because the
distribution is compressed into its embedding once, rather than re-read in full at every equilibrium
check, the per-step cost grows only about linearly in the number of asset bins with the encoder
(fitted exponent $\ledgerBewleyEncoderExponent$), against super-linearly with the raw histogram
($\ledgerBewleyHistogramExponent$). In practice the encoder resolves up to $\ledgerBewleyEncoderMax$
bins, an order of magnitude beyond the $\ledgerBewleyHistMemWall$ at which the raw histogram exhausts
memory. This is exactly the cost that has made rational-expectations solutions of these models so
demanding \citep{Moll2026EJ}: the encoder imposes the same exact equilibrium condition on far larger
populations, and it does so without ever learning or approximating the transition law. The fine
resolution is not a numerical luxury. The households a coarse histogram blurs together, those at the
borrowing constraint and in the upper tail, are precisely the ones whose responses drive marginal
propensities to consume and the incidence of aggregate shocks across the distribution.

\begin{figure}[t]\centering
\includegraphics[width=0.82\textwidth]{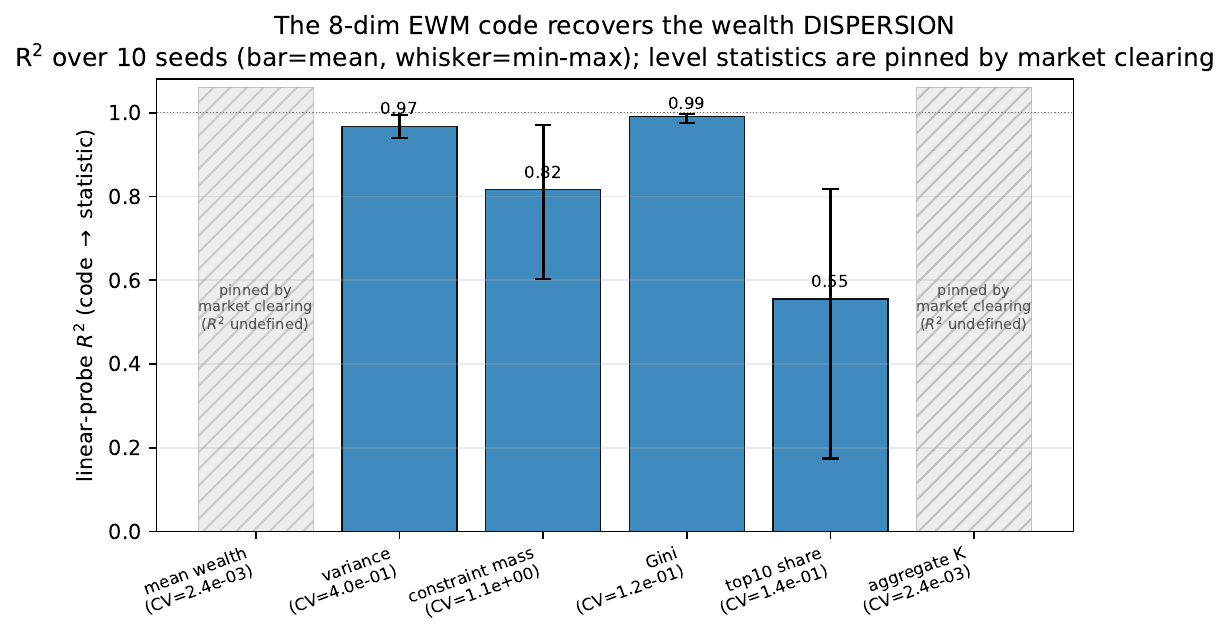}
\caption{The learned embedding keeps the decision-relevant cross-section. From the trained
$\ledgerBewleyEWMSieveDim$-dimensional EWM embedding $z=h_\psi(\mu)$ a linear probe is fit to each wealth
statistic and scored by held-out $R^2$ (bars are the across-seed mean, whiskers the min--max). The
embedding recovers the \emph{dispersion} of wealth that prices and the continuation load on: the variance
($R^2=\ledgerBewleyProbeVarRtwo$) and the Gini coefficient ($R^2=\ledgerBewleyProbeGiniRtwo$), with
the constraint mass and the top-$10\%$ share partially recovered. Mean wealth and aggregate capital
are not probed (hatched): market clearing pins them to the same value across seeds, so their $R^2$
is undefined. The level of the distribution is fixed by equilibrium; what the embedding must carry
is its shape, and it does. An analytically reducible case in which the recovered statistic is
known in closed form is in Appendix~\ref{app:encoder}.}
\label{fig:linear_probe}
\end{figure}

\paragraph{The experiments, together.} Across the three economies the discipline is one and the
same, the exact residual imposed on model-generated coverage and audited off-path, carried from a
scalar capital stock up to a full wealth distribution. Brock--Mirman makes it
visible; the IRBC scales it in aggregate dimension, where it
turns a solver that converges from none of ten random starts into one that converges from nearly all
and stays affordable to a $65$-dimensional state; and the Bewley economy carries it into the space
of distributions, where the world model's perception service supplies the summary and the learned
embedding is the only one that recovers the decision-relevant cross-section the moments and the raw
histogram miss, while scaling the exact audit to populations a histogram cannot hold. The
function-space sieve of the earlier models and the distributional sieve here are both sieves in
Grenander's sense, finite-dimensional approximations to distinct infinite-dimensional ideals, and
EWM supplies a learned instance of each under one unchanged structural residual.

\section{Conclusion}\label{sec:conclusion}

Deep-learning-based solvers have made it possible to compute dynamic stochastic economic models in state
spaces that were previously out of reach. But a global approximator is not, by itself, a global
certificate. A pathwise residual solver checks the model's equilibrium conditions only on the states
generated by its own current policy. It can therefore return a solution that is accurate on its simulated
path while remaining untested after rare shocks, near occasionally binding constraints, in the tails of
the distribution, and along the counterfactual paths for which the model is used. This paper interprets
that failure as \emph{computational self-confirmation}: the solution is confirmed on the data its own
policy generates, but not necessarily where economic analysis later reads it.

Equilibrium World Models address this problem by changing where the model's own equilibrium restrictions
are enforced and by making that broader enforcement computationally feasible. They do not change the
economic model, learn the transition law, or replace rational expectations by a learned forecasting rule.
Instead, they let the solver rehearse ordinary, rare, stressed, locally perturbed, and counterfactual
states generated or advanced through the model's maintained law of motion. The same exact equilibrium
residual is then imposed on this broader coverage region and evaluated on held-out states. Learned world
components enter only as computational devices: a continuation surrogate amortizes expensive expectations,
an action-conditioned continuation carries policy margins when actions move future risks, and a
distributional encoder compresses high-dimensional population states. None of these objects defines
equilibrium; the final certificate remains the model's exact residual.

The theory formalizes this distinction. A purely pathwise solver computes the limiting case of zero
additional coverage, a computational analogue of a self-confirming equilibrium. At finite coverage the
object is a coverage-confirmed fixed point, with the model's restrictions certified on the region the
coverage design deliberately audits. As coverage expands and the approximation, continuation, and
optimization errors vanish, accumulation points satisfy the exact equilibrium conditions on the
coverage-certified region. The result is therefore a conditional certificate, not an unconditional
guarantee: it validates the policy where the held-out exact residual is evaluated and makes no claim for
regions outside the coverage design.

The experiments show why this distinction matters. In the rare-disaster Brock--Mirman laboratory, the gain
comes from enforcing the same equilibrium conditions in disaster and post-disaster states rather than only
in normal times, and coverage reduces disaster-region residuals by about an order of magnitude. In the
endogenous-protection variant the relevant continuation must be conditioned on the action itself, because
the action changes the probability law of future disasters and a state-only continuation averages away the
margin the policy controls. In the international real business cycle model the pathwise solver has low
on-path residuals but fails the broader certification test from all reported random starts, whereas EWMs
converge from nearly all starts, remain certified in the disaster region, and reduce per-query continuation
evaluations by up to two orders of magnitude as the number of countries grows. In the heterogeneous-agent
economy the same principle extends to distributions: the policy reads a low-dimensional learned
representation of the wealth distribution, while the population is advanced by the exact distributional law
and the audit remains the exact full-distribution rational-expectations residual.

The broader lesson is that the states that matter most for macroeconomic and financial policy are often the
states that ordinary simulations visit least. A solver certified only on its own typical path does not
certify the crises, tails, constraints, and counterfactuals economists ask the model to analyze.
Equilibrium World Models make the certificate follow the question. They are disciplined dreamers: the
solver rehearses states beyond its own simulated experience, but every rehearsal stays inside the
maintained structural model and every policy is judged by the model's own equilibrium residual.

The approach also makes its own limits explicit. Coverage design is now part of the computation, and the
scope of the coverage sets the scope of the certificate. Future work should make that design more adaptive,
extend the audit to richer counterfactual objects and to continuous-time environments, and develop sharper
diagnostics for the regions where audit domination by the coverage measure fails. The aim is not to replace
structural economic discipline with machine learning, but to use modern learning methods to enforce that
discipline where it matters most.

\clearpage
\appendix
\section{Proofs and supporting results}\label{app:proofs}

This appendix provides the formal assumptions and the complete proofs for
Section~\ref{sec:theory}. We first collect the assumptions referenced throughout that section. We
then restate and prove the four standard projection and change-of-measure results relocated here for
readability, Propositions~\ref{prop:exist}, \ref{prop:mono}, and \ref{prop:surrgap} and
Lemma~\ref{lem:offpath}. The proofs of the main-text Propositions~\ref{prop:decomp},
\ref{prop:amort}, \ref{prop:fb}, and \ref{prop:girf} and of Theorem~\ref{thm:offpath} follow, and
finally the two main Theorems~\ref{thm:br} and \ref{thm:consist}, whose proofs are deferred from
Section~\ref{sec:theory}.

\subsection{Assumptions}\label{app:assumptions}

\begin{assumption}[Regular structural environment]\label{ass:regular}
$\X$ and $\A$ are compact metric spaces; for each $x$, the admissible set
$\A(x)\subseteq\A$ is non-empty with closed graph; the transition map $\Gamma$ is
continuous, hence uniformly continuous, on the compact set $\X\times\A\times\calE$
(so the induced transition kernel is weakly continuous in $(x,a)$); the kernel $g$
is continuous and bounded, so exact continuations take values in a compact set
$\mathcal{Q}\subset\R^{d_Q}$, with $Q^\pi(x)\in\mathcal{Q}$. Fix moreover a compact convex set
$\mathcal{Q}_0\subset\R^{d_Q}$ containing $\mathcal{Q}$ and the origin, and constrain every
surrogate and every sieve element used below to take values in $\mathcal{Q}_0$. The projected
continuations are finite-dimensional sieve elements and need not take values in the smaller exact
range $\mathcal{Q}$, so the residual must remain defined where the projection lands. The residual
$\rwm:\X\times\A\times\mathcal{Q}_0\to\R^r$ is continuous in all arguments and
$L_R$-Lipschitz in its continuation argument $Q\in\mathcal{Q}_0$. Being continuous on the compact
product domain $\X\times\A\times\mathcal{Q}_0$, $\rwm$ is in particular bounded,
$\|\rwm\|_\infty<\infty$. When the feasibility penalty of the policy objective is active
($\omega>0$), $\calK:\X\times\A\to\R_{+}$ is likewise continuous; in the runs of
Section~\ref{sec:exp} the penalty is off, $\omega\calK\equiv0$, and the clause is vacuous. The extension is automatic for the models we solve: the structural
residuals of Section~\ref{sec:exp} are one algebraic expression valid on all of $\R^{d_Q}$, so
$L_R$ and $\|\rwm\|_\infty$ are simply read off on the larger compact $\mathcal{Q}_0$.
\end{assumption}

\noindent Two clarifications make these primitives match the implementation. First, $\calE$ is
the finite set of quadrature nodes (Gauss--Hermite or monomial) on which the expectation is
actually evaluated, not the full support of the underlying innovation; it is therefore compact
by construction, and the underlying shock may be Gaussian or otherwise unbounded without
disturbing any compactness used below. The continuation operator
$\Qpi(x)=\sum_k w_k\,g\bigl(y_k,\nnp(y_k)\bigr)$ with $y_k=\Gamma(x,\nnp(x),\varepsilon_k)$ is then a finite weighted
sum of continuous functions, hence continuous in $(x,a)$ and weakly continuous as a kernel,
which is all the limit arguments require. The rule is an expectation rule, with nonnegative
weights summing to one ($w_k\ge0$, $\sum_k w_k=1$, as for the Gauss--Hermite rule used here), so
$\Qpi$ is a convex combination of values of $g$ and inherits the bound $\|\Qpi\|_\infty\le\|g\|_\infty$
that the self-map argument of Theorem~\ref{thm:br} uses; a signed cubature rule would replace this
constant by $(\sum_k|w_k|)\,\|g\|_\infty$, still finite, leaving that argument intact. Second, where the residual involves marginal utility
(the Brock--Mirman Euler $1/c-\beta Q$), boundedness of $\rwm$ needs consumption bounded away
from zero, $c\ge\underline c>0$, not merely $c>0$. This is not an extra assumption: a continuous
policy with softplus-positive consumption on the compact domain $\X$ attains a strictly positive
minimum, so $\underline c>0$ follows from the maintained compactness, continuity, and positivity,
and $1/c\le 1/\underline c$ is bounded; the implementation additionally enforces a floor
$c\ge\underline c$ directly in the architecture.

\noindent In the structural models of Section~\ref{sec:exp} the continuation enters the
residual affinely, so $L_R$ is not merely finite but exact, read off the equations rather
than estimated. In the marginal-cost form, $\rwm$ is linear in $Q$ with
$\partial\rwm/\partial Q=-\beta$ (the Brock--Mirman Euler $1/c-\beta Q$ and the international
real business cycle Euler \eqref{eqn:irbc_euler}), so $L_R=\beta$, the discount factor and a
contraction $L_R<1$. In the consumption-multiplied form $c(\beta Q+\mu)-1$ (the rare-disaster
Brock--Mirman family of Section~\ref{sub:bmdc} and Appendix~\ref{sub:bmdcp}),
$\partial\rwm/\partial Q=\beta c$, so $L_R=\beta\,\bar c$ with $\bar c$ the consumption bound
on the evaluation domain, and the protection block of \eqref{eq:alpha_foc} has the tighter
$L_R=\beta p_0$. The continuation-amplification constant in Proposition~\ref{prop:decomp},
Lemma~\ref{lem:offpath}, Proposition~\ref{prop:surrgap}, and Theorem~\ref{thm:offpath} is
therefore the discount factor (times a bounded consumption factor in the multiplicative case),
an exact constant, not a fitted one.

\begin{assumption}[Nested perception classes and coverage]\label{ass:nested}
The construction advances two distinct levers, which we index by two distinct symbols
throughout: the coverage reach $\kappa$, the support and mass of the coverage measure
$\muK$, and the perception capacity $m$, the width of the perception class $\calF_m$. The
perception classes $\{\calF_m\}_{m\ge1}$ are nested in capacity
($\calF_m\subseteq\calF_{m'}$ for $m\le m'$), each a finite-dimensional
closed convex set of continuous, $\mathcal{Q}_0$-valued functions on the state space (a linear,
spline, or polynomial sieve intersected with the pointwise constraint $q(x)\in\mathcal{Q}_0$, so
$\calF_m\subset C(\X;\mathcal{Q}_0)$), identified with its image in $L^2(\muK;\R^{d_Q})$ and
containing the origin; the trained network is the practical nonconvex parametrization of this
idealized class. We evaluate each class element at its continuous representative, so the pointwise
value $\widehat W(x)$ inside the residual is well-defined and lies in $\mathcal{Q}_0$, where
$\rwm$ is defined. At each fixed coverage measure $\muK$ we assume either that the finite sieve is
$L^2(\muK)$-separated (equivalently, the Gram matrix of the chosen finite basis on
$\operatorname{supp}\muK$ is positive definite), or that the projection operator is equipped with a
fixed continuous-representative tie-breaker; thus the $L^2(\muK)$ projection used below is a
single-valued map into continuous representatives. Let $S_0$ denote the reference set from which
the coverage homotopy starts: for a fixed baseline this is $\operatorname{supp}\muerg$, while along
a moving homotopy one may take $S_0=\overline{\bigcup_j\operatorname{supp}\mu^{\pi^{\kappa_j}}}$.
Along the homotopy both levers advance, stage $j$ carrying reach $\kappa_j$ and capacity $m_j$; as
reach grows the coverage measures converge weakly, $\muK\Rightarrow\mu_\infty$, to a limiting
coverage measure $\mu_\infty$. Let $\calR:=\operatorname{supp}\mu_\infty\subseteq\X$ be the
coverage-certified region: the closure of the domain the coverage schedule certifies in the limit,
the ergodic support $\operatorname{supp}\muerg$ together with the bounded structural perturbations
and stressed/disaster seeds the schedule imposes, each rolled forward finitely under $\Gamma$.
Full support, $\operatorname{supp}\mu_\infty=\calR$, holds by this definition, so the consistency
hypothesis (condition~(iv) of Theorem~\ref{thm:consist}) is the achievable statement that the
coverage design is rich enough that its limit places mass throughout the region it is meant to
certify, rather than the unachievable demand that the coverage measure fill the maximal forward
orbit from $S_0$ under all admissible actions $a_t\in\A(x_t)$, $\varepsilon_{t+1}\in\calE$
(for a policy-specific reading, the orbit under $a_t=\pi^\infty(x_t)$). Enriching the coverage
design enlarges $\calR$ toward that maximal $\Gamma$-reachable set; the theorem certifies the exact
residual on whichever $\calR$ the schedule reaches, not on states no coverage schedule visits. Capacity growth is used in Theorem~\ref{thm:consist} only through the realized perception
condition $\eta_\kappa(\theta^\kappa)\to0$; a primitive sufficient condition is density of
$\bigcup_m\calF_m$ in the relevant continuation family
$\{Q^\pi:\pi\text{ admissible in the compact policy class}\}$ under the moving evaluation
measures, not density in all of $L^2(\mu_\infty;\R^{d_Q})$, which would be incompatible with the
compact $\mathcal{Q}_0$-valued range. The existence and projection results
(Propositions~\ref{prop:exist}, \ref{prop:mono} and Theorem~\ref{thm:br}) hold the coverage
measure $\muK$ and the capacity $m$ fixed at a reference stage, so that the inner product and the
class are fixed.
\end{assumption}

\begin{assumption}[Unique invariant measure]\label{ass:invariant}
For every admissible policy $\pi$, the chain $x'=\Gamma(x,\pi(x),\varepsilon')$
has a unique invariant measure $\mu^\pi$ (a Doeblin minorization on the induced
kernel suffices \citep[Ch.~11]{stokeylucas1989}), so the ergodic training measure
$\muerg=\mu^{\nnp}$ of \eqref{eqn:pathloss} is well defined. (The consistency argument does not
separately require the ergodic measure to be dominated by $\mu_\infty$: full support of
$\mu_\infty$ on $\calR$ already certifies the limiting policy on the whole reachable region,
which contains the realized path.)
\end{assumption}

\begin{assumption}[Audit domination]\label{ass:audit}
Every audit measure $\eta$ on which the off-path residual is evaluated or bounded is
absolutely continuous with respect to the coverage measure, $\eta\ll\muK$, with
Radon--Nikodym derivative bounded uniformly across admissible $\pi$ by a constant,
$d\eta/d\muK\le B$. This is the density bound invoked in Lemma~\ref{lem:offpath} and
Theorem~\ref{thm:offpath}; it is a separate object from the weak-convergence coverage
condition (condition~(iv) of Theorem~\ref{thm:consist}, supplied by
Assumption~\ref{ass:nested}), and the two are not interchangeable.
\end{assumption}

\noindent Audit domination is a substantive restriction, not a technicality. A single
deterministic impulse-response path or disaster branch is a measure-zero set and is
not absolutely continuous with respect to $\muK$, so $B=\infty$ and the bound of
Theorem~\ref{thm:offpath} is vacuous on it. The same caution applies to a finite
deterministic grid or finite held-out sample: as an atomic empirical measure it is singular
with respect to a continuous coverage measure, so it is dominated only when it is itself drawn
from a $\muK$-absolutely-continuous law, or when $\muK$ assigns positive mass to the audited
atoms. In practice the off-path residual is reported on a tube or distribution around such a
path, sampled from the coverage support, which restores domination. In the experiments the held-out audit sets
(the disaster-region grid, the impulse-response tube) are finite Monte Carlo samples from a
population audit distribution that is, by construction, absolutely continuous with respect to
$\muK$ (it is a tube or region law supported where $\muK$ has mass); domination is a property of
that population law, not of the finite sample, and under it the reported residuals fall under
Theorem~\ref{thm:offpath}. Where domination genuinely fails, a Wasserstein control rather than
a density ratio is the appropriate substitute, which we leave open.

\subsection{Supporting projection results}\label{app:supporting}

\begin{proposition}[Existence and uniqueness of the projected world model]\label{prop:exist}
Under Assumptions~\ref{ass:regular}--\ref{ass:nested}, for every admissible policy
$\pi$ and every $\kappa$ the continuation $\Qpi$ is bounded and lies in
$L^2(\muK;\R^{d_Q})$, and there exists a unique $L^2(\muK)$ projection, represented by the continuous representative
specified in Assumption~\ref{ass:nested},
\[
\hat q^{\pi}_\kappa:=\Pi^{L^2(\muK)}_{\calF_m}\Qpi
=\argmin_{q\in\calF_m}\E_{x\sim\muK}\bigl[\|q(x)-\Qpi(x)\|^2\bigr],
\]
delivered by the Hilbert projection theorem \citep[Theorem~3.16]{bauschkecombettes2011} on the
closed convex set $\calF_m\subseteq L^2(\muK;\R^{d_Q})$ (Assumption~\ref{ass:nested}).
\end{proposition}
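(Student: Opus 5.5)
The plan has three steps: show that $\Qpi$ is a bounded measurable element of $L^2(\muK;\R^{d_Q})$; check that $\calF_m$, viewed in that Hilbert space, is nonempty, closed and convex; then invoke the Hilbert projection theorem and choose the continuous representative fixed in Assumption~\ref{ass:nested}.

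\textbf{Step 1: membership of $\Qpi$.} Under Assumption~\ref{ass:regular} and the quadrature convention, $\Qpi(x)=\sum_k w_k\,g\bigl(y_k,\pi(y_k)\bigr)$ with $y_k=\Gamma(x,\pi(x),\varepsilon_k)$. This is a finite weighted sum of compositions of continuous maps, hence continuous and Borel measurable on the compact space $\X$. Since $w_k\ge0$ and $\sum_k w_k=1$, it is a convex combination of values of $g$, so $\|\Qpi\|_\infty\le\|g\|_\infty$; for a signed rule the bound becomes $(\sum_k|w_k|)\|g\|_\infty$. Because $\muK$ is a probability measure, $\E_{\muK}\|\Qpi\|^2\le\|g\|_\infty^2<\infty$, so $\Qpi\in L^2(\muK;\R^{d_Q})$.

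\textbf{Step 2: $\calF_m$ as a subset of $L^2$.} By Assumption~\ref{ass:nested}, $\calF_m$ is finite-dimensional, convex, contains the origin, and consists of bounded ($\mathcal{Q}_0$-valued) continuous functions, so it embeds in $L^2(\muK)$. Its image stays convex, since the quotient map to $\muK$-equivalence classes is linear. Closedness needs a short argument, because the assumption states closedness in the sup-norm or parameter sense.
\begin{itemize}
\item The class is a compact convex subset $C$ of a finite-dimensional linear sieve $V$, cut out by the pointwise constraint $q(x)\in\mathcal{Q}_0$ with $\mathcal{Q}_0$ compact. In coefficient space $C$ is closed and bounded, hence compact.
\item The quotient map $V\to L^2(\muK)$ is linear on a finite-dimensional space, hence continuous. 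The image of the compact set $C$ is therefore compact, and so closed.
\item This step does not need the Gram-matrix separation, so it also covers the tie-breaker branch of Assumption~\ref{ass:nested}.
\end{itemize}

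\textbf{Step 3: projection and uniqueness.} Apply the Hilbert projection theorem (\citealp[Theorem~3.16]{bauschkecombettes2011}) to the nonempty closed convex set $\calF_m\subset L^2(\muK;\R^{d_Q})$. It gives a unique minimizer of $q\mapsto\E_{\muK}\|q-\Qpi\|^2$ as an $L^2$ equivalence class. Passing to functions:
\begin{itemize}
\item Under the separation clause, the map from coefficients to classes is injective. The minimizing class then corresponds to exactly one element of $\calF_m$, and that element is its continuous representative.
\item Otherwise, the fixed tie-breaker selects one continuous representative.
\end{itemize}
In both cases $\hat q^\pi_\kappa$ is single-valued, and uniqueness holds $\muK$-a.s., as stated. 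The only step needing care is Step 2, the closedness in $L^2$, which depends on finite dimensionality and the compact range $\mathcal{Q}_0$; Steps 1 and 3 are routine.
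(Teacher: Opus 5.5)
Your proof is correct and follows the paper's route. Boundedness of $g$ places $\Qpi$ in $L^2(\muK;\R^{d_Q})$, the Hilbert projection theorem on the nonempty closed convex class $\calF_m$ gives a unique minimizing class, and the representative convention of Assumption~\ref{ass:nested} turns that class into a continuous function. The paper instead takes $L^2(\muK)$-closedness of $\calF_m$ directly from Assumption~\ref{ass:nested}; your derivation of it (a compact coefficient set mapped continuously and linearly into $L^2(\muK)$) and your explicit measurability check via continuity are sound extra detail, not a different argument.
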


\begin{proof}
By Assumption~\ref{ass:regular} the kernel $g$ is bounded, so $\Qpi$ is bounded and
lies in $L^2(\muK;\R^{d_Q})$. By Assumption~\ref{ass:nested} the class $\calF_m$ is a
non-empty closed convex subset of the Hilbert space $L^2(\muK;\R^{d_Q})$, so the Hilbert
projection theorem delivers a unique $L^2(\muK)$ minimizer; the representative convention in
Assumption~\ref{ass:nested} turns this equivalence class into the continuous function
$\hat q^{\pi}_\kappa=\Pi^{L^2(\muK)}_{\calF_m}\Qpi$ used in pointwise residual evaluations.
\end{proof}

\begin{proposition}[Monotonicity at a fixed measure]\label{prop:mono}
Fix an evaluation measure $\mu$ and a policy $\pi$. Under
Assumption~\ref{ass:nested}, $m\le m'$ implies
$\inf_{q\in\calF_{m'}}\E_\mu[\|q-\Qpi\|^2]
\le\inf_{q\in\calF_m}\E_\mu[\|q-\Qpi\|^2]$: enlarging the perception class
weakly reduces the in-class approximation error at a fixed measure. When
the coverage measure itself varies with $\kappa$, this is the class-approximation
component only; the change of measure is governed separately by the coverage gap
of Proposition~\ref{prop:decomp}.
\end{proposition}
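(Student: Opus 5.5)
This follows from nestedness of the feasible sets in an infimum: a minimum over a larger set can only be smaller or equal. Fix the evaluation measure $\mu$ and the policy $\pi$. The objective $J(q):=\E_\mu[\|q(x)-\Qpi(x)\|^2]$ is then a fixed functional on candidate world models. It is well defined and finite for every $q\in\calF_m\cup\calF_{m'}$. The reason is that each element of $\calF_{m'}$ takes values in the compact set $\mathcal{Q}_0$ by Assumption~\ref{ass:nested}, and $\Qpi$ is bounded by $\|g\|_\infty$ under Assumption~\ref{ass:regular}, as in Proposition~\ref{prop:exist}. So both infima are over subsets of $[0,\infty)$ and exist as finite numbers.

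The key step is to invoke the nesting $\calF_m\subseteq\calF_{m'}$ for $m\le m'$ from Assumption~\ref{ass:nested}. For any $q\in\calF_m$ we have $q\in\calF_{m'}$. Hence
\[
\inf_{q'\in\calF_{m'}}J(q')\le J(q).
\]
Taking the infimum over $q\in\calF_m$ on the right-hand side gives the claim. I would note explicitly that neither convexity nor the projection theorem is needed. Existence of minimizers (Proposition~\ref{prop:exist}) is also unnecessary, although when $\mu=\muK$ the inequality can equivalently be read as one between the projection errors $J(\hat q^{\pi}_\kappa)$ at capacities $m$ and $m'$.

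The only real point of care is the scope caveat. The argument uses that $\mu$ and $\pi$ are held fixed, so that $J$ is one and the same functional on both classes. I would close by remarking that if the measure changes with $\kappa$, or the policy $\theta$ changes with it, then the two infima are taken of different functionals. In that case the comparison fails in general, and that discrepancy is what the $\mathrm{CoverageErr}$ term of Proposition~\ref{prop:decomp} accounts for. No step is a genuine obstacle; the main thing to get right is that the identification of $\calF_m$ with its image in $L^2(\mu)$ preserves the inclusion. It does, because the inclusion holds at the level of continuous representatives before passing to equivalence classes.
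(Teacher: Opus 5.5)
Your proof is correct and is the paper's argument: $\calF_m\subseteq\calF_{m'}$ for $m\le m'$ (Assumption~\ref{ass:nested}), so the infimum of the same fixed functional over the larger class is no larger. The finiteness and continuous-representative remarks are harmless extras, since an infimum is monotone under set inclusion even in $[0,\infty]$. Your closing aside slightly overreaches, however: $\mathrm{CoverageErr}$ accounts only for a change of measure, not for a change of the policy $\theta$ with $\kappa$.
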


\begin{proof}
For $m\le m'$, Assumption~\ref{ass:nested} gives
$\calF_m\subseteq\calF_{m'}$, so the infimum of
$\E_\mu\|q-\Qpi\|^2$ over the larger class $\calF_{m'}$ is no larger than over
$\calF_m$.
\end{proof}

\begin{proof}[Proof of Proposition~\ref{prop:decomp}]
Write $f(x):=\|\rwm(x,\nnp(x),\Qpi(x))\|^2$, bounded by $\|\rwm\|_\infty^2$. By the
total-variation convention of Section~\ref{sec:theory},
\[
\E_{\mu}[f]\;\le\;\E_{\muK}[f]+2\|\rwm\|_\infty^2\,\TV(\mu,\muK)
=\E_{\muK}[f]+\mathrm{CoverageErr}.
\]
On $\muK$, split the residual at the current policy into three pieces,
\[
\rwm(x,\nnp,\Qpi)
=\underbrace{[\rwm(x,\nnp,\Qpi)-\rwm(x,\nnp,\hat q^{\nnp}_\kappa)]}_{A}
+\underbrace{[\rwm(x,\nnp,\hat q^{\nnp}_\kappa)-\rwm(x,\nnp,\wm)]}_{B}
+\underbrace{\rwm(x,\nnp,\wm)}_{C},
\]
and use $\|A+B+C\|^2\le 3(\|A\|^2+\|B\|^2+\|C\|^2)$. By the $L_R$-Lipschitz property
of $\rwm$ in its continuation argument (Assumption~\ref{ass:regular}),
$\|A\|^2\le L_R^2\|\Qpi-\hat q^{\nnp}_\kappa\|^2$ and
$\|B\|^2\le L_R^2\|\hat q^{\nnp}_\kappa-\wm\|^2$. Taking $\E_{\muK}$ and using that
$\hat q^{\nnp}_\kappa$ is the $L^2(\muK)$-projection of $\Qpi$ (so
$\E_{\muK}\|\Qpi-\hat q^{\nnp}_\kappa\|^2=\eta_\kappa(\theta)$),
\[
\E_{\muK}[f]\le 3\bigl(\mathrm{OptErr}+L_R^2\eta_\kappa(\theta)
+L_R^2\E_{\muK}\|\hat q^{\nnp}_\kappa-\wm\|^2\bigr)
=3(\mathrm{OptErr}+\mathrm{ApproxErr}_\kappa+\mathrm{SurrFitErr}).
\]
Combining the two displays gives the bound.
\end{proof}

\begin{lemma}[Surrogate-induced off-ergodic discrepancy]\label{lem:offpath}
Let $\eta$ be a measure generated by an impulse-response or large-shock
experiment with $\eta\ll\muK$ and $d\eta/d\muK\le B$. Then
\[
\E_\eta\!\bigl[\|\rwm(x,\nnp(x),\wm(x))-\rwm(x,\nnp(x),\Qpi(x))\|^2\bigr]
\;\le\;
L_R^2\,B\,\E_{\muK}\!\bigl[\|\wm(x)-\Qpi(x)\|^2\bigr].
\]
\end{lemma}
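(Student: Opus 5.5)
The plan is a two-step argument: first a pointwise Lipschitz bound, then a change of measure from $\eta$ to $\muK$ using the density bound. Fix the current policy $\nnp$ and write
\[
D(x):=\|\rwm(x,\nnp(x),\wm(x))-\rwm(x,\nnp(x),\Qpi(x))\|^2 .
\]
Both continuation arguments lie in $\mathcal{Q}_0$: the surrogate does so by the constraint imposed in Assumption~\ref{ass:regular}, and $\Qpi(x)\in\mathcal{Q}\subseteq\mathcal{Q}_0$. The policy and state arguments are common to the two terms. The $L_R$-Lipschitz property of $\rwm$ in its continuation argument (Assumption~\ref{ass:regular}) therefore gives the pointwise bound
\[
D(x)\le L_R^2\,\|\wm(x)-\Qpi(x)\|^2 \qquad\text{for every }x\in\X .
\]
This step uses no structure of $\eta$.

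Second, integrate the pointwise bound against $\eta$ and transfer to $\muK$. Set $h(x):=\|\wm(x)-\Qpi(x)\|^2$. This function is nonnegative and measurable, since $\wm$ is continuous and $\Qpi$ is a finite weighted sum of continuous functions. It is also bounded, by $\operatorname{diam}(\mathcal{Q}_0)^2$, so every integral below is finite. Because $\eta\ll\muK$, the Radon--Nikodym theorem gives $\E_\eta[h]=\E_{\muK}\bigl[h\,\tfrac{d\eta}{d\muK}\bigr]$. Since $h\ge0$ and $d\eta/d\muK\le B$ holds $\muK$-a.s., this is at most $B\,\E_{\muK}[h]$. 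Combining with the first step,
\[
\E_\eta[D]\le L_R^2\,\E_\eta[h]\le L_R^2\,B\,\E_{\muK}[h],
\]
which is the claim.

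No step is really an obstacle; the only points requiring care are bookkeeping. The first is that the Lipschitz bound is applied only where $\rwm$ is defined, which the $\mathcal{Q}_0$-valuedness convention in Assumption~\ref{ass:regular} guarantees. The second is that the density bound is used only as an almost-sure inequality against a nonnegative integrand. Nonnegativity is what lets $d\eta/d\muK\le B$ be replaced by $B$ without absolute values. The order of the two steps is immaterial: applying the change of measure first, to $D$ itself, and then using the Lipschitz bound under $\muK$ yields the same constant $L_R^2B$.
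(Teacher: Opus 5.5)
Your proof is correct and follows the paper's argument exactly. You apply the pointwise bound from the $L_R$-Lipschitz property of $\rwm$ in its continuation argument, integrate against $\eta$, and change measure via $\E_\eta[h]=\E_{\muK}[h\,d\eta/d\muK]\le B\,\E_{\muK}[h]$ for the nonnegative integrand $h=\|\wm-\Qpi\|^2$; your extra bookkeeping on $\mathcal{Q}_0$-valuedness and measurability is consistent with Assumption~\ref{ass:regular} and only makes explicit what the paper leaves implicit.
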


\begin{proof}
Pointwise, $\|\rwm(x,\nnp,\wm)-\rwm(x,\nnp,\Qpi)\|^2\le L_R^2\|\wm-\Qpi\|^2$ by the
Lipschitz property. Integrating against $\eta$ and changing measure by the density
bound $d\eta/d\muK\le B$,
\[
\E_\eta\!\bigl[\|\rwm(x,\nnp,\wm)-\rwm(x,\nnp,\Qpi)\|^2\bigr]
\le L_R^2\,\E_{\muK}\!\Bigl[\tfrac{d\eta}{d\muK}\,\|\wm-\Qpi\|^2\Bigr]
\le L_R^2 B\,\E_{\muK}\|\wm-\Qpi\|^2.
\]
\end{proof}

\begin{proposition}[Surrogate--exact residual gap]\label{prop:surrgap}
Under Assumption~\ref{ass:regular} ($\rwm$ is $L_R$-Lipschitz in its continuation
argument), $\|R^{H}(x)-R^{L}(x)\|\le L_R\,\|\wm(x)-\Qpi(x)\|$ pointwise, and hence on
any measure $\mu$
\[
\E_\mu[\|R^{H}\|^2]\;\le\;2\,\E_\mu[\|R^{L}\|^2]\;+\;2L_R^2\,\E_\mu[\|\wm-\Qpi\|^2].
\]
\end{proposition}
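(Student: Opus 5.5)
The plan is a two-step argument: a pointwise Lipschitz estimate, followed by the elementary inequality $\|u+v\|^2\le 2\|u\|^2+2\|v\|^2$ integrated against an arbitrary measure $\mu$. Nothing about the coverage measure or the policy class is needed. The statement is pointwise in $x$ and then integrated, so it holds for every $\mu$ on which both right-hand expectations are finite. Under Assumption~\ref{ass:regular} this is automatic, since $\rwm$ is bounded and both $\wm$ and $\Qpi$ take values in the compact set $\mathcal{Q}_0$.

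First, I would fix $x\in\X$ and note that $R^{H}(x)$ and $R^{L}(x)$ share the same first two arguments, $x$ and $\nnp(x)$, and differ only in the continuation slot: $\Qpi(x)$ versus $\wm(x)$. Both continuation values lie in $\mathcal{Q}_0$. The true continuation does because $\Qpi(x)\in\mathcal{Q}\subseteq\mathcal{Q}_0$. The surrogate does because Assumption~\ref{ass:regular} constrains every surrogate to be $\mathcal{Q}_0$-valued. Hence both points lie in the domain where $\rwm$ is $L_R$-Lipschitz in $Q$. The Lipschitz property then gives the pointwise bound
\[
\|R^{H}(x)-R^{L}(x)\|\le L_R\|\Qpi(x)-\wm(x)\|
\]
directly; this is the first claim.

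Second, I would write $R^{H}=R^{L}+(R^{H}-R^{L})$ and apply $\|u+v\|^2\le2\|u\|^2+2\|v\|^2$, which follows from convexity of the squared norm or from $2\langle u,v\rangle\le\|u\|^2+\|v\|^2$. Substituting the pointwise Lipschitz bound for $\|v\|^2$ gives
\[
\|R^{H}(x)\|^2\le 2\|R^{L}(x)\|^2+2L_R^2\|\wm(x)-\Qpi(x)\|^2
\]
for every $x$. Both sides are measurable: the maps are compositions of continuous functions with the continuous policy and surrogate, and $\Qpi$ is a finite weighted sum over the quadrature nodes. Integrating against $\mu$ and using monotonicity and linearity of the integral then yields the stated inequality.

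There is no real obstacle here. The only point requiring care is the domain check in the first step, namely that the surrogate's value lies where the Lipschitz constant applies. This is exactly why Assumption~\ref{ass:regular} restricts surrogates to the compact set $\mathcal{Q}_0$ rather than the exact range $\mathcal{Q}$, and I would cite that restriction explicitly.
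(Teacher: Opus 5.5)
Your proposal is correct and follows the paper's proof: the pointwise Lipschitz bound in the continuation argument, then $\|R^{H}\|^2\le 2\|R^{L}\|^2+2\|R^{H}-R^{L}\|^2$, then integration against $\mu$. Your explicit check that $\wm(x)\in\mathcal{Q}_0$ is a useful addition the paper leaves implicit; the finiteness caveat is unnecessary, since the integrands are nonnegative and the inequality holds in $[0,\infty]$ regardless.
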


\begin{proof}
Pointwise $\|R^{H}-R^{L}\|\le L_R\|\wm-\Qpi\|$ by the Lipschitz property; with
$\|R^{H}\|^2\le 2\|R^{L}\|^2+2\|R^{H}-R^{L}\|^2$ and integrating against $\mu$ gives
the stated bound.
\end{proof}

\begin{proposition}[Amortization scaling of the continuation]\label{prop:amort}
Suppose the exact continuation $\Qpi(x)=\E[g(x',\nnp(x'))\mid x]$ is evaluated by the
fixed quadrature rule with $n_q(N)$ nodes, each node costing one exact transition and one
policy forward pass, while a surrogate query $\wm(x)$ is a single forward pass whose cost
is independent of $n_q(N)$. The comparison is in exact-evaluation counts; any scaling of the
network forward pass with the input dimension is common to the surrogate and to the policy
evaluations inside the quadrature, and is not part of the $n_q(N)$ amortization factor. Then the per-state cost of evaluating the continuation exactly,
relative to one surrogate query, is $\Theta(n_q(N))$: the exact rule touches all $n_q(N)$
nodes, whereas the surrogate replaces the whole expectation with one forward pass. For the
monomial rule used in the IRBC, $n_q(N)=4(N+1)=\Theta(N)$, so the amortization available per
continuation evaluation grows linearly in the shock/country dimension $N$.
\end{proposition}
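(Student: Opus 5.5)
The plan is a direct operation count, phrased so that the $\Theta$ statement is about exact-evaluation units rather than wall-clock time. First, fix the unit of cost: one ``exact evaluation'' is one call to the transition $\Gamma(x,\nnp(x),\varepsilon_k)$ followed by one policy forward pass $\nnp(y_k)$ and one evaluation of the kernel $g(y_k,\nnp(y_k))$. By the quadrature convention of Assumption~\ref{ass:regular}, the exact continuation is the finite weighted sum
\[
\Qpi(x)=\sum_{k=1}^{n_q(N)} w_k\,g\bigl(y_k,\nnp(y_k)\bigr),\qquad y_k=\Gamma(x,\nnp(x),\varepsilon_k).
\]
Evaluating it therefore requires exactly $n_q(N)$ such units, plus one shared forward pass $\nnp(x)$ and $n_q(N)$ scalar multiply--adds. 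This gives matching upper and lower bounds of order $n_q(N)$: the rule, as defined, touches every node. Nonnegative weights are not needed here, only that no weight is identically omitted.

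Second, bound the surrogate side. A query $\wm(x)$ is one forward pass of a fixed network. By hypothesis its cost does not depend on $n_q(N)$, and any dependence on the input dimension is common to both sides. This is exactly the normalization stated in the proposition. Concretely, the forward pass of $\nnp$ inside each node and the forward pass of $\wm$ are both counted as $c(N)$ up to constants, for the same $c(N)$. I would make this explicit by dividing both costs by $c(N)$, so that the ratio is
\[
\frac{n_q(N)\,(c_\Gamma+c(N))+O(n_q(N))}{c(N)} .
\]
Under the stated convention, in which the network scaling is common and excluded, this ratio is $\Theta(n_q(N))$. Equivalently, in pure evaluation counts it is $n_q(N)$ exact evaluations against zero.

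Third, specialize to the IRBC. Check from the disaster-crossed monomial rule of Section~\ref{sub:irbc} that the node count is $n_q(N)=4(N+1)$: a $2(N+1)$-point monomial rule in the $N+1$ Gaussian innovations (the $N$ idiosyncratic shocks plus the common one), crossed with the two-state disaster chain. Then $4(N+1)=\Theta(N)$ is immediate, so the saving per continuation is linear in $N$. The only real obstacle is the second step, namely justifying that the comparison is fair when the forward-pass cost itself grows with $N$. I would handle it exactly as the statement does, by treating the exact-evaluation count as the measured quantity and noting that the factor $c(N)$ cancels or is shared. I would not attempt any claim about wall-clock time.
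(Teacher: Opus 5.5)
Your proposal is correct and is essentially the paper's own proof: an accounting identity that counts $n_q(N)$ exact evaluations (one transition and one policy pass per node) against a single surrogate forward pass independent of $n_q(N)$, and then reads off $n_q(N)=4(N+1)$ from the $2(N+1)$-node monomial rule crossed with the two disaster states. One small slip: ``$n_q(N)$ exact evaluations against zero'' is the per-query saving (a difference), not the ratio; the ratio is $\Theta(n_q(N))$ only because the surrogate query is charged a $\Theta(1)$ cost, as in your division by $c(N)$ and in the paper.
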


\begin{proof}
An accounting identity. By hypothesis one exact continuation query evaluates the fixed
quadrature rule at $n_q(N)$ nodes, each node costing one exact transition and one policy forward
pass, so its cost is $\Theta(n_q(N))$ exact evaluations; one surrogate query is a single forward
pass, cost $\Theta(1)$ independent of $n_q(N)$. The ratio is $\Theta(n_q(N))$. For the monomial
rule used in the IRBC, $n_q(N)=4(N+1)=\Theta(N)$, giving the stated linear scaling.
\end{proof}

\begin{proposition}[Complementarity and occasionally-binding constraints]\label{prop:fb}
Write the model's KKT conditions in Fischer--Burmeister form
\citep{fischer1992special},
$\phi(a,b)=a+b-\sqrt{a^2+b^2}=0$, which encodes $a\ge0$, $b\ge0$, $ab=0$. The map
$\phi$ is globally Lipschitz with constant at most $1+\sqrt2$, so the
complementarity block of $\rwm$ is Lipschitz in its own argument $(a,b)$. This
constant is distinct from the continuation-argument constant $L_R$ that governs the
continuation-argument bounds of Proposition~\ref{prop:decomp}, Lemma~\ref{lem:offpath},
and Proposition~\ref{prop:surrgap} (equal to the discount factor $\beta$ in the marginal-cost
form, $\beta\bar c$ otherwise; see the remark after Assumption~\ref{ass:regular}). The complementarity block
remains continuous and Lipschitz in its own arguments across the kink; regularity of the
continuation map $\Qpi$ does not follow from this Lipschitz property but from the
maintained assumptions on the transition, the policy, and the continuation primitives,
which is what keeps those projection and consistency arguments valid when the constraint
binds. Occasionally-binding constraints, the irreversible-investment kink
of the IRBC and the borrowing constraint of the heterogeneous-agent model, thus
enter the exact residual without smoothing and are covered by the same $L^2$ theory;
the kink affects differentiability of the residual, not the validity of the
projection and consistency arguments. The Lipschitz property gives a forward bound
(a small constraint error implies a small Fischer--Burmeister residual), not its
inverse. We do not establish a finite-error inverse map from Fischer--Burmeister
residuals to separate primal feasibility, dual feasibility, and complementary-slackness
violations; these are therefore distinct diagnostics, not recoverable from the stacked
residual alone, and a complete constraint audit reports them separately rather than reading
them off $R^{H}$.
\end{proposition}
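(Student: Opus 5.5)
The proof rests on writing $\phi(u)=\langle(1,1),u\rangle-\|u\|_2$ for $u=(a,b)\in\R^2$ and treating the two pieces separately. The first piece is linear with gradient norm $\|(1,1)\|_2=\sqrt2$. The second is the Euclidean norm, which is $1$-Lipschitz by the reverse triangle inequality. For $u,v\in\R^2$ this gives
\[
|\phi(u)-\phi(v)|\le\bigl|\langle(1,1),u-v\rangle\bigr|+\bigl|\|u\|_2-\|v\|_2\bigr|\le(\sqrt2+1)\,\|u-v\|_2 ,
\]
which is the stated global constant. The estimate holds for all $u,v$, including pairs on opposite sides of the kink at the origin, so it needs no differentiability. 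It also gives continuity of $\phi$ across the kink.

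Next I verify the encoding claim. If $a\ge0$, $b\ge0$ and $ab=0$, say $b=0$, then $\phi=a-|a|=0$. Conversely, $\phi(a,b)=0$ means $a+b=\sqrt{a^2+b^2}\ge0$. Squaring gives $2ab=0$, so one of $a,b$ vanishes. The other then equals its own absolute value and is therefore nonnegative.

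For the complementarity block of $\rwm$, the entries are $\phi(i_j,\mu_j)$ (IRBC) or $\phi(\nu,a'-a_{\min})$ (Bewley). I compose $\phi$ with the coordinate maps extracting these pairs from $(x,a)$. Those maps are affine (or continuous on the compact $\X\times\A$), so the block is continuous, and it is Lipschitz in its own arguments with the same constant. Crucially, these entries do not depend on the continuation argument $Q$. Hence the $Q$-Lipschitz constant $L_R$ of Assumption~\ref{ass:regular} is determined by the Euler blocks alone ($\beta$, or $\beta\bar c$, as in the remark after that assumption) and is untouched by the kink.

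It follows that the stacked residual still satisfies every hypothesis that Propositions~\ref{prop:exist}, \ref{prop:decomp}, \ref{prop:surrgap}, Lemma~\ref{lem:offpath} and Theorem~\ref{thm:br} actually use:
\begin{itemize}
\item continuity of $\rwm$ on the compact domain;
\item boundedness of $\rwm$;
\item Lipschitz continuity of $\rwm$ in $Q$.
\end{itemize}
None of those arguments differentiates $\rwm$, so the kink affects only smoothness, not validity. Regularity of $\Qpi$ comes from continuity of $\Gamma$, $g$ and the policy under the finite quadrature rule, not from $\phi$.

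The remaining statements are scope disclaimers rather than claims requiring proof. The Lipschitz bound only gives the forward direction (small constraint error implies small $\phi$). I would not attempt a quantitative inverse bound separating primal feasibility, dual feasibility and slackness; the proposition explicitly leaves this unasserted, and I would simply note that.

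The only step needing care is keeping the two Lipschitz constants distinct: the $(a,b)$-constant $1+\sqrt2$ and the continuation constant $L_R$. Once one observes that the FB block is $Q$-free, this is immediate.
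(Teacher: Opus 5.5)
Your proof is correct, but it reaches the constant $1+\sqrt2$ by a different route from the paper. The paper computes the gradient off the origin, $\nabla\phi=(1-a/r,\,1-b/r)$, shows in polar coordinates that $\|\nabla\phi\|^2=3-2\sqrt2\sin(\vartheta+\pi/4)\le(1+\sqrt2)^2$, and then appeals to a further fact: $\phi$ is locally Lipschitz and differentiable off the single point $0$ of the convex set $\R^2$, so the supremum of the gradient norm bounds the global Lipschitz constant. You instead write $\phi(u)=\langle(1,1),u\rangle-\|u\|_2$ and apply Cauchy--Schwarz and the reverse triangle inequality. This gives the bound for every pair $u,v$ at once, including segments through the kink, with no differentiability and no almost-everywhere-gradient-to-Lipschitz step; the paper passes over that step rather briskly, and your argument avoids it entirely. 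What the paper's computation buys is sharpness: it shows the gradient norm actually equals $1+\sqrt2$ along $a=b<0$, so the constant cannot be improved. Your decomposition gives only the upper bound, though you could recover sharpness in one line from $\phi(t,t)=(2+\sqrt2)t$ for $t<0$. You also prove two things the paper's proof leaves implicit. First, the equivalence $\phi(a,b)=0\iff a\ge0,\ b\ge0,\ ab=0$, which the paper only cites. Second, the Fischer--Burmeister block does not depend on $Q$, so it cancels in $\|\rwm(x,a,Q_1)-\rwm(x,a,Q_2)\|$ and $L_R$ is determined by the Euler blocks alone; this is the precise reason the two constants are distinct.
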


\begin{proof}
Off the origin the gradient of $\phi(a,b)=a+b-\sqrt{a^2+b^2}$ is
$\bigl(1-a/r,\,1-b/r\bigr)$ with $r=\sqrt{a^2+b^2}$, whose Euclidean norm is at most
$1+\sqrt2$ (writing $a=r\cos\vartheta$, $b=r\sin\vartheta$ gives
$\|\nabla\phi\|^2=3-2\sqrt2\,\sin(\vartheta+\tfrac\pi4)\le 3+2\sqrt2=(1+\sqrt2)^2$). At the
origin $\phi$ is non-differentiable, but it is locally Lipschitz on $\R^2$ and
differentiable off that single point; since $\R^2$ is convex, the supremum of the
gradient norm over the full-measure differentiable set bounds the global Lipschitz
constant, so $\phi$ is globally Lipschitz with constant $1+\sqrt2$.
\end{proof}

\begin{proposition}[Generalized impulse-response stability]\label{prop:girf}
Fix a realization of the shock sequence and let $\{x_h\}_{h=0}^{H}$ be the conditional
impulse-response path it induces under a policy $\pi$, and $\{x_h^\star\}$ the
corresponding rational-expectations path from the same initial state and the same shock
realization; the generalized impulse response \citep{koop1996impulse} averages such conditional
paths over shock draws, and since the bound below holds for every realization whose path stays
on the audited tube, it passes to that average. Suppose a direct
audit verifies that the one-step exact residual is bounded along the path,
$\sup_{0\le h<H}\|R^{H}(x_h)\|\le\delta$, on the impulse-response tube $\calT_H$ around the path
(this pointwise bound is the held-out quantity we report; Theorem~\ref{thm:offpath} controls the
expected residual $\E_\eta\|R^{H}\|^2$ over a tube $\eta$ dominated by the coverage
measure, from which the pointwise statement follows only with high probability, via Markov's
inequality together with a union bound over the $H$ horizons when the supremum $\sup_{0\le h<H}$ is
asserted, not deterministically), and that the policy-induced
one-step map is stable with modulus $\lambda\ge0$, in the sense that
$\|x_{h+1}-x_{h+1}^\star\|\le\lambda\,\|x_h-x_h^\star\|+c\,\|R^{H}(x_h)\|$ for a
constant $c$. Then
\[
\|x_H-x_H^\star\|\;\le\;c\,\delta\sum_{h=0}^{H-1}\lambda^{h}
=\begin{cases}
c\,\delta\,\dfrac{1-\lambda^{H}}{1-\lambda}, & \lambda\neq1,\\[1.2ex]
c\,\delta\,H, & \lambda=1.
\end{cases}
\]
\end{proposition}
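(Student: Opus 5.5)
The plan is to set up a scalar error recursion along the path and unroll it. Write $e_h:=\|x_h-x_h^\star\|$ for $0\le h\le H$. Since both paths start from the same initial state, $e_0=0$. The stability hypothesis gives, for each $0\le h<H$,
\[
e_{h+1}\le\lambda e_h+c\,\|R^{H}(x_h)\|,
\]
and the audited bound $\sup_{0\le h<H}\|R^{H}(x_h)\|\le\delta$ on the tube $\calT_H$ turns this into the linear inequality $e_{h+1}\le\lambda e_h+c\delta$.

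The core step is an induction on $h$. The claim is that $e_h\le c\delta\sum_{j=0}^{h-1}\lambda^{j}$, with the empty sum equal to $0$ at $h=0$. The base case is $e_0=0$. For the inductive step, because $\lambda\ge0$ multiplication by $\lambda$ preserves the inequality, so
\[
e_{h+1}\le\lambda\, c\delta\sum_{j=0}^{h-1}\lambda^{j}+c\delta=c\delta\sum_{j=0}^{h}\lambda^{j}.
\]
Setting $h=H$ gives the stated bound. The closed form then follows from the finite geometric sum: it equals $(1-\lambda^H)/(1-\lambda)$ when $\lambda\neq1$, and $H$ when $\lambda=1$.

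It remains to pass from a fixed shock realization to the generalized impulse response, which is an average of differences of such conditional paths over shock draws. The argument is as follows.
\begin{itemize}
\item The bound above holds realization by realization, for every draw whose path stays on the audited tube $\calT_H$.
\item Norms are convex, so Jensen's inequality (or the triangle inequality for integrals) bounds the norm of the averaged difference by the average of the norms.
\item Each of those norms is bounded by the same deterministic constant, so the average is too.
\end{itemize}

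I expect this last step to be the main obstacle. The pointwise hypothesis $\sup_h\|R^{H}(x_h)\|\le\delta$ is only known on the tube. Theorem~\ref{thm:offpath} controls only the mean square $\E_\eta\|R^{H}\|^2$, and via Markov's inequality plus a union bound over the $H$ horizons this yields the pointwise bound only with high probability. I would therefore state the averaged conclusion conditional on the event that the path stays on the audited tube. Alternatively, I would simply take the pointwise audit as a hypothesis, as the proposition does, and note that the deterministic recursion itself requires nothing beyond $\lambda\ge0$.
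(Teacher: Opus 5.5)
Your proof is correct and is the paper's argument: with $x_0=x_0^\star$ the paper unrolls $\|x_{h+1}-x_{h+1}^\star\|\le\lambda\|x_h-x_h^\star\|+c\,\|R^{H}(x_h)\|$ with $\|R^{H}(x_h)\|\le\delta$ and sums the geometric series, which is exactly your induction. Like the paper, you tacitly need $c\ge0$ when replacing $c\,\|R^{H}(x_h)\|$ by $c\delta$, alongside the $\lambda\ge0$ you flag; your triangle-inequality step for the averaged response, conditional on staying in the tube, spells out what the paper only asserts in the statement.
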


\begin{proof}
With $x_0=x_0^\star$, unrolling
$\|x_{h+1}-x_{h+1}^\star\|\le\lambda\|x_h-x_h^\star\|+c\|R^{H}(x_h)\|$ and
$\|R^{H}(x_h)\|\le\delta$ gives
$\|x_H-x_H^\star\|\le c\delta\sum_{h=0}^{H-1}\lambda^{h}$, which evaluates to the
geometric sum in the statement.
\end{proof}

\begin{proposition}[Identification under measure-moving actions]\label{prop:actcond}
Suppose an action coordinate $a$ enters the continuation only through the transition
measure: write $\Qpi(x,a):=\int g(x',\nnp(x'))\,P(dx'\mid x,a)$ for the continuation as a
function of a freely varying action, so that the state-only continuation of the main text is
its evaluation at the policy's own action, $\Qpi(x)=\Qpi(x,\nnp(x))$, with $P$ depending on $a$
through the known structural law $\Gamma$ (a hazard, a regime-switching probability), and the
integrand $g$ not depending on $a$. It is the two-argument object $\Qpi(x,a)$ that is
differentiated below. Assume the map $a\mapsto P(\cdot\mid x,a)$ is differentiable in total
variation: there is a signed measure $\partial_a P(\cdot\mid x,a)$ with
$\|P(\cdot\mid x,a+h)-P(\cdot\mid x,a)-h\,\partial_a P(\cdot\mid x,a)\|_{\mathrm{TV}}=o(h)$ and
$\|\partial_a P(\cdot\mid x,a)\|_{\mathrm{TV}}<\infty$ uniformly over $\X\times\A$, with
$M_a:=\sup_{(x,a)\in\X\times\A}\|\partial_a P(\cdot\mid x,a)\|_{\mathrm{TV}}<\infty$, where
$\|\cdot\|_{\mathrm{TV}}$ is the variation norm of a signed measure. This hypothesis covers a
transition with a density $p(x'\mid x,a)$ differentiable in $a$
($\partial_a P=\partial_a p\,dx'$, $\|\partial_a P\|_{\mathrm{TV}}=\int|\partial_a p|\,dx'$) and,
equally, finite-state and regime-switching laws whose mixing weights move with $a$ but which admit
no Lebesgue density; it is the natural form for the disaster-hazard model of
Appendix~\ref{sub:bmdcp}. Differentiation then passes through the integral by the duality
$|\int f\,d(\partial_a P)|\le\|f\|_\infty\,\|\partial_a P\|_{\mathrm{TV}}$ for bounded $f$. A
state-only surrogate $\wm(x)\approx\Qpi(x)$, evaluated at the
policy's own action, has integrated $a$ out before differentiation, so
$\partial\wm/\partial a\equiv0$: the continuation component of the $a$-gradient is absent from the
surrogate residual (current-period payoff terms may still depend on $a$). An action-conditioned surrogate
$\widehat Q(x,a)=\int \widehat v(x')\,P(dx'\mid x,a)$ that carries the exact dependence of
$P$ on $a$ has
$\partial\widehat Q/\partial a=\int \widehat v(x')\,\partial_a P(dx'\mid x,a)$, so it
represents the marginal-benefit term that the state-only object cannot. With the exact
integrand $g$ in place of the learned $\widehat v$ this derivative is exact; with a learned
$\widehat v$ the derivative error is controlled at the value level,
\[
\Bigl|\partial_a\widehat Q(x,a)-\partial_a\Qpi(x,a)\Bigr|
=\Bigl|\int(\widehat v-g)(x')\,\partial_a P(dx'\mid x,a)\Bigr|
\le\|\widehat v-g\|_\infty\,\|\partial_a P(\cdot\mid x,a)\|_{\mathrm{TV}},
\]
where the integrand $g$ is shorthand for $g(x',\nnp(x'))$. The required surrogate accuracy is
thus an $O(1)$ value level amplified only by the uniformly bounded mass $M_a$, so
$|\partial_a\widehat Q(x,a)-\partial_a\Qpi(x,a)|\le M_a\,\|\widehat v-g\|_\infty$ uniformly over
$\X\times\A$, not a direct action-derivative accuracy; a value-level error does not blow up at
the derivative level.
\end{proposition}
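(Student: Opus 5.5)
The plan is to prove the three claims of Proposition~\ref{prop:actcond} in order: first that a state-only surrogate carries no continuation derivative in $a$; then the formula for $\partial_a\widehat Q$ by differentiating under the signed measure; and finally the value-level error bound. Throughout, $a$ is treated as a free argument of the two-argument continuation $\Qpi(x,a)=\int g(x',\nnp(x'))\,P(dx'\mid x,a)$. The key structural fact is that the integrand $g(x',\nnp(x'))$ does not depend on $a$, so all $a$-dependence sits in the kernel $P$.

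\emph{Step 1 (state-only surrogate).} The surrogate $\wm$ is a function on $\X$ alone. Once it is substituted for the continuation in the residual, $\rwm(x,a,\wm(x))$ depends on $a$ only through the explicit current-period slot, so the chain-rule term $\partial_Q\rwm\cdot\partial_a\wm$ vanishes identically. This is a definitional observation. The only care needed is to say that $\Qpi(x)=\Qpi(x,\nnp(x))$ has already composed $a$ with the policy, so a fit to it cannot recover $a\mapsto\Qpi(x,a)$ off the diagonal $a=\nnp(x)$.

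\emph{Step 2 (differentiation under the integral).} For bounded measurable $f$, I will write the difference quotient
\[
\frac{1}{h}\Bigl(\int f\,dP(\cdot\mid x,a+h)-\int f\,dP(\cdot\mid x,a)\Bigr)-\int f\,d\partial_aP(\cdot\mid x,a).
\]
By the duality $|\int f\,d\nu|\le\|f\|_\infty\|\nu\|_{\mathrm{TV}}$ applied to the signed remainder measure, this quantity is bounded by $\|f\|_\infty\,o(h)/h\to0$. This gives $\partial_a\int f\,dP=\int f\,d\partial_aP$. I apply it with $f=\widehat v$ to get the formula for $\partial_a\widehat Q$, and with $f=g(\cdot,\nnp(\cdot))$ to get the exact $\partial_a\Qpi(x,a)$. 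Boundedness of $g$ comes from Assumption~\ref{ass:regular}, and I assume $\widehat v$ is bounded, e.g.\ $\mathcal{Q}_0$-valued.

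\emph{Step 3 (error bound).} Subtracting the two derivative formulas and using linearity of integration against the signed measure $\partial_aP$ gives
\[
\partial_a\widehat Q-\partial_a\Qpi=\int(\widehat v-g)\,d\partial_aP.
\]
The same duality inequality then yields $\|\widehat v-g\|_\infty\|\partial_aP(\cdot\mid x,a)\|_{\mathrm{TV}}$. Taking the supremum over $(x,a)$ and using $M_a<\infty$ gives the uniform bound.

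The main obstacle is the measure-theoretic step in Step 2: making sure the TV-differentiability hypothesis is strong enough to exchange the limit and the integral without assuming a Lebesgue density. This matters because the regime-switching disaster kernel is a mixture of point masses and a Gaussian. The remainder-in-TV formulation handles it directly, so I expect the duality bound to suffice. The remaining subtlety is vector-valued $g$, which I would handle componentwise, absorbing a dimensional constant or using the sup-norm of the Euclidean norm.
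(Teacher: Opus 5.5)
Your proposal is correct and follows the paper's proof essentially step for step. The paper uses the same three moves: the TV remainder plus the duality $|\int f\,d\nu|\le\|f\|_\infty\|\nu\|_{\mathrm{TV}}$ to pass $\partial_a$ through the integral, the trivial $a$-independence of $\wm(x)$, and subtraction followed by the same duality bound and a supremum over $\X\times\A$ to reach $M_a\|\widehat v-g\|_\infty$. Two points you make explicit are only implicit in the paper: you require $\widehat v$ to be bounded, and for vector-valued $g$ taking the sup of the Euclidean norm gives the bound with no dimensional constant.
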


\begin{proof}
Total-variation differentiability lets differentiation pass through the integral: for bounded
measurable $f$, $\partial_a\!\int f\,dP(\cdot\mid x,a)=\int f\,d(\partial_a P(\cdot\mid x,a))$,
the remainder bounded by $\|f\|_\infty$ times the $o(h)$ total-variation remainder. A state-only
$\wm(x)$ does not depend on $a$, so $\partial_a\wm\equiv0$. For the action-conditioned object,
$\partial_a\widehat Q=\int\widehat v\,d(\partial_a P)$; subtracting
$\partial_a\Qpi=\int g\,d(\partial_a P)$ and using the duality
$|\int(\widehat v-g)\,d(\partial_a P)|\le\|\widehat v-g\|_\infty\,\|\partial_a P\|_{\mathrm{TV}}$
gives the stated value-level bound. When $P$ has a density, $\partial_a P$ has density
$\partial_a p$ and $\|\partial_a P\|_{\mathrm{TV}}=\int|\partial_a p|\,dx'$, recovering the
absolutely continuous case.
\end{proof}

\begin{proof}[Proof of Theorem~\ref{thm:offpath}]
By Assumption~\ref{ass:regular}, $\rwm$ is $L_R$-Lipschitz in its continuation
argument, so $\|R^{H}(x)-R^{L}(x)\|\le L_R\|\wm(x)-\Qpi(x)\|$ at every $x$; with the
elementary inequality $\|a\|^2\le 2\|b\|^2+2\|a-b\|^2$ (taking $a=R^{H}$, $b=R^{L}$) this
gives the pointwise bound $\|R^{H}\|^2\le 2\|R^{L}\|^2+2L_R^2\|\wm-\Qpi\|^2$. The
right-hand side is nonnegative, so taking $\E_\eta$ and using $d\eta/d\muK\le B$
(Assumption~\ref{ass:audit}) to pass from $\eta$ to $\muK$ on the right-hand side
yields the claim.
\end{proof}

\begin{proof}[Proof of Theorem~\ref{thm:br}]
By Assumption~\ref{ass:regular} the kernel $g$ is continuous and bounded, so for
every admissible policy $\pi$ the continuation $Q^{\pi}(x)=\E[g(x',\pi(x'))\mid x]$
is bounded and lies in $L^2(\muK;\R^{d_Q})$. By Assumption~\ref{ass:nested} the class
$\calF_m$ is a non-empty closed convex subset of $L^2(\muK;\R^{d_Q})$, so the
Hilbert projection theorem delivers a unique minimizer of
$\widehat W\mapsto\|\widehat W-Q^{\pi}\|_{L^2(\muK)}$ over $\calF_m$, namely
$\Pi^{L^2(\muK)}_{\calF_m}(Q^{\pi})$. Hence the world-model arm
\eqref{eqn:Lworld} is solved exactly at a policy $\pi$ if and only if
\begin{equation}
\widehat W=\Pi^{L^2(\muK)}_{\calF_m}\!\big(Q^{\pi}\big).
\tag{$\ast$}\label{eqn:proj_id}
\end{equation}
For the policy arm, under Assumption~\ref{ass:regular} the objective
$(\widehat W,\theta)\mapsto\E_{\muK}[\|\rwm(x,\nnp(x),\widehat W(x))\|^2+\omega\calK(x,\nnp(x))]$ is
jointly continuous in its two arguments: $\widehat W$ enters continuously through the
$L_R$-Lipschitz continuation argument of $\rwm$, $\theta$ through the continuous parametrization
$\theta\mapsto\nnp$, and the penalty term through the continuity of $\calK$ (vacuous in our runs,
where $\omega\calK\equiv0$), all on the compact $\X\times\Theta$ with the policy-parameter set
$\Theta$ compact, hence Hausdorff. Berge's maximum theorem, applied with $\widehat W$ as the parameter, then
makes the argmin correspondence $\Theta_\kappa(\cdot)$ non-empty, compact-valued and upper
hemicontinuous in $\widehat W$, and the Kuratowski--Ryll-Nardzewski theorem supplies a measurable
selector $\theta(\cdot)$ \citep[Theorems~17.31 and 18.13]{aliprantisborder2006}.
Thus $\theta(\widehat W)$ solves the surrogate-continuation policy arm
\eqref{eqn:Lpolicy} given $\widehat W$.

\textbf{Existence.} Suppose $\Theta_\kappa$ is single-valued; then, $\Theta$ being compact
Hausdorff, the single-valued upper hemicontinuous selector $\theta(\cdot)$ is continuous. Then $T_\kappa=\Pi^{L^2(\muK)}_{\calF_m}\circ Q^{\pi_{\theta(\cdot)}}$
is a composition of continuous maps: $\widehat W\mapsto\theta(\widehat W)$ is
continuous, $\theta\mapsto\nnp$ is continuous in the supremum norm by the
parametrization, and since $g$ is continuous and bounded, dominated convergence makes
$\pi\mapsto Q^{\pi}=\E[g(x',\pi(x'))\mid x]$ continuous from the supremum norm into
$L^2(\muK)$; here no class-wide modulus is needed, the single fixed parametrization
$\theta\mapsto\nnp$ on the compact policy-parameter set $\Theta$ already supplies it, so
$\theta\mapsto \Qpi$ is continuous, and the metric projection onto the
closed convex set $\calF_m$ is non-expansive, hence continuous. Moreover
$T_\kappa$ maps the closed convex set
$\calB:=\{\widehat W\in\calF_m:\|\widehat W\|_{L^2(\muK)}\le\|g\|_\infty\}$ into
itself, since $\|\Pi_{\calF_m}(Q^{\pi})\|_{L^2(\muK)}\le\|Q^{\pi}\|_{L^2(\muK)}
\le\|g\|_\infty$ for every admissible $\pi$ (projection onto a closed convex set
containing the origin, Assumption~\ref{ass:nested}, is norm-non-increasing). For a finite-dimensional closed convex
sieve class $\calF_m$ (for instance a linear sieve, spline, or polynomial class spanning a
fixed finite-dimensional subspace of $L^2(\muK)$), $\calB$ is compact and convex and
Brouwer's theorem \citep{aliprantisborder2006} yields a fixed point. The neural networks of the
implementation are the nonconvex parametrization of such a class (the remark below): the network
manifold is finitely parametrized but is not a convex subset of $L^2(\muK)$, so the theorem is a
statement about the idealized sieve object the network approximates, not about the network class
literally. For an infinite-dimensional $\calF_m$ the same conclusion holds under
the additional hypothesis that $T_\kappa(\calB)$ is relatively compact, for instance when the metric
projection lands in a fixed finite-dimensional subclass, or when $\calF_m$ embeds compactly so
that the smoothing of $Q^{\pi}$ by the bounded continuous kernel $g$ is a compact operation, via
Schauder's theorem; this extension is not needed for any result in the paper.

\textbf{Fixed point $\Rightarrow$ $\kappa$-EWM.} Suppose
$\widehat W_\star=T_\kappa(\widehat W_\star)
=\Pi^{L^2(\muK)}_{\calF_m}(Q^{\pi_{\theta(\widehat W_\star)}})$, and write
$\pi_\star=\pi_{\theta(\widehat W_\star)}$. Then
$\theta(\widehat W_\star)\in\Theta_\kappa(\widehat W_\star)$ solves the policy arm
given $\widehat W_\star$, and by \eqref{eqn:proj_id} the identity
$\widehat W_\star=\Pi_{\calF_m}(Q^{\pi_\star})$ solves the world arm given
$\pi_\star$. The pair $(\pi_\star,\widehat W_\star)$ solves both arms exactly and is
therefore an exact $\kappa$-EWM (Definition~\ref{def:kewm}).

\textbf{$\kappa$-EWM $\Rightarrow$ fixed point.} Let $(\pi,\widehat W)$ be an exact
$\kappa$-EWM. Exactness of the world arm and \eqref{eqn:proj_id} force
$\widehat W=\Pi_{\calF_m}(Q^{\pi})$; exactness of the policy arm puts the
parameter of $\pi$ in $\Theta_\kappa(\widehat W)$, so $\pi=\pi_{\theta(\widehat W)}$
for the chosen selector. Substituting,
$\widehat W=\Pi_{\calF_m}(Q^{\pi_{\theta(\widehat W)}})=T_\kappa(\widehat W)$,
a fixed point.

\textbf{Interpretation.} At such a fixed point two conditions hold jointly on
$\muK$: by \eqref{eqn:proj_id} the perception $\widehat W_\star$ is the best
approximation in the class $\calF_m$ to the continuation $Q^{\pi_\star}$ that
the equilibrium itself generates, and by the policy arm $\pi_\star$ is a best
response to $\widehat W_\star$. This is the structure of a coverage-confirmed
fixed point on $\muK$: beliefs are best-in-class on the measure they are evaluated
against, behavior is consistent with beliefs, and the residual is left unconstrained off
$\muK$. When $\muK=\muerg$ and the perception gap is closed there ($\widehat W_\star=Q^{\pi_\star}$
$\muerg$-a.s.), this is the classical self-confirming equilibrium on the policy's own
ergodic measure (in the sense of \citealp{fudenberg1993selfconfirming} and
\citealp[Ch.~6]{sargent1999conquest}), the belief being confirmed on the events the policy
visits; if the perception gap does not close it is the restricted-perceptions counterpart. For
$\muK\neq\muerg$ it
is the same fixed-point structure on the enlarged, modeler-designed measure, a
coverage-confirmed fixed point. What binds
is coverage, not the perception class, which sets it apart from the restricted-perceptions
equilibrium of \citet{BranchEvans2006RPE}.
\end{proof}

\begin{proof}[Proof of Theorem~\ref{thm:consist}]
By (v) and the Arzel\`a--Ascoli theorem \citep[Theorem~7.25]{rudin1976} the sequence $(\pi^\kappa)$ is relatively
compact in $C(\X;\A)$, so it has a subsequence $\pi^{\kappa_j}\to\pi^\infty$
converging uniformly on $\X$ to an admissible policy $\pi^\infty$ (admissible because
$\A(x)$ has closed graph, Assumption~\ref{ass:regular}, and the convergence is
uniform); fix one such subsequence.

\textbf{Step 1 (the exact residual vanishes on the training measures).} Apply the
residual decomposition of Proposition~\ref{prop:decomp} to the pair
$(\pi^{\kappa_j},\wm^{\kappa_j})$ with the test measure taken to be the training
measure itself, $\mu=\mu_{\kappa_j}$, so that
$\mathrm{CoverageErr}=2\|\rwm\|_\infty^2\,\TV(\mu_{\kappa_j},\mu_{\kappa_j})=0$. Here
$\hat q^{\pi^{\kappa_j}}_{\kappa_j}$ is the $L^2(\mu_{\kappa_j})$-projection of the
current policy's continuation $Q^{\pi^{\kappa_j}}$ and
$\mathrm{OptErr}_{\kappa_j}$ is evaluated at $(\pi^{\kappa_j},\wm^{\kappa_j})$, so
conditions (i)--(iii) apply to the right-hand side verbatim:
\[
\E_{\mu_{\kappa_j}}\!\bigl[\|\rwm(x,\pi^{\kappa_j}(x),Q^{\pi^{\kappa_j}}(x))\|^2\bigr]
\le 3\Bigl(\mathrm{OptErr}_{\kappa_j}+L_R^2\,\eta_{\kappa_j}(\theta^{\kappa_j})
+L_R^2\,\E_{\mu_{\kappa_j}}\bigl[\|\hat q^{\pi^{\kappa_j}}_{\kappa_j}-\wm^{\kappa_j}\|^2\bigr]\Bigr).
\]
The three terms on the right vanish as $j\to\infty$ by conditions (iii), (i) and
(ii) respectively. Writing
$g_j(x):=\|\rwm(x,\pi^{\kappa_j}(x),Q^{\pi^{\kappa_j}}(x))\|^2$,
\begin{equation}
\int g_j\,d\mu_{\kappa_j}\;\longrightarrow\;0. \tag{$\dagger$}\label{eqn:thm2dagger}
\end{equation}

\textbf{Step 2 (the residual is uniformly continuous in the policy).} By
Assumption~\ref{ass:regular}, $\Gamma$ is uniformly continuous and $g$ is uniformly
continuous and bounded on the compact $\X\times\A\times\calE$, with moduli
$\omega_\Gamma,\omega_g$; by (v) the admissible policies share a modulus of
continuity $\omega_\pi$. Fix admissible $\pi_1,\pi_2$ and abbreviate
$y_i=\Gamma(x,\pi_i(x),\varepsilon)$ and $G_i=g(y_i,\pi_i(y_i))$. Then pointwise in
$(x,\varepsilon)$,
\begin{align*}
\|y_1-y_2\| &\le \omega_\Gamma\bigl(\|\pi_1-\pi_2\|_\infty\bigr),\\
\|\pi_1(y_1)-\pi_2(y_2)\|
&\le \|\pi_1(y_1)-\pi_1(y_2)\|+\|\pi_1(y_2)-\pi_2(y_2)\|
\le \omega_\pi\bigl(\|y_1-y_2\|\bigr)+\|\pi_1-\pi_2\|_\infty,\\
\|G_1-G_2\|
&\le \omega_g\Bigl(\|y_1-y_2\|+\|\pi_1(y_1)-\pi_2(y_2)\|\Bigr),
\end{align*}
where the second line uses the common policy modulus (v) at the outer
composition $\pi(\Gamma(\cdot))$. Substituting the first two bounds into the third and using the finite quadrature rule gives
\[
\sup_{x\in\X}\|Q^{\pi_1}(x)-Q^{\pi_2}(x)\|
\le \sum_k w_k\sup_{x\in\X}\|G_{1k}(x)-G_{2k}(x)\|
\le\omega\bigl(\|\pi_1-\pi_2\|_\infty\bigr)
\]
for an explicit modulus $\omega$ assembled from $\omega_g,\omega_\Gamma,\omega_\pi$, with
$\omega(t)\to0$ as $t\to0$ and independent of $x$. Hence
$\pi^{\kappa_j}\to\pi^\infty$ uniformly implies $Q^{\pi^{\kappa_j}}\to Q^{\pi^\infty}$
uniformly; and as $\rwm$ is uniformly continuous on the compact
$\X\times\A\times\mathcal{Q}_0$ and every exact continuation lies in
$\mathcal{Q}\subseteq\mathcal{Q}_0$ (the larger $\mathcal{Q}_0$ is required only for the projected
and surrogate continuations used elsewhere, not for the exact continuations evaluated here), it
follows that
$g_j\to g_\infty:=\|\rwm(\cdot,\pi^\infty(\cdot),Q^{\pi^\infty}(\cdot))\|^2$
uniformly on $\X$. Every $g_j$ and $g_\infty$ is continuous and bounded by
$\|\rwm\|_\infty^2$.

\textbf{Step 3 (pass to the limit).} For the fixed bounded continuous function
$g_\infty$ (continuous as the uniform limit of the continuous $g_j$, Step~2),
\[
\Bigl|\int g_\infty\,d\mu_\infty-\int g_j\,d\mu_{\kappa_j}\Bigr|
\;\le\;
\Bigl|\int g_\infty\,d\mu_\infty-\int g_\infty\,d\mu_{\kappa_j}\Bigr|
\;+\;\|g_\infty-g_j\|_\infty .
\]
The first term tends to $0$ by the weak convergence $\mu_{\kappa_j}\Rightarrow
\mu_\infty$ of (iv), applied to the single bounded continuous $g_\infty$; the
second tends to $0$ by Step~2. With \eqref{eqn:thm2dagger},
\[
\int g_\infty\,d\mu_\infty=\lim_{j\to\infty}\int g_j\,d\mu_{\kappa_j}=0 .
\]

\textbf{Step 4 (conclude).} $g_\infty\ge0$ is continuous and integrates to zero
against $\mu_\infty$, so $g_\infty=0$ $\mu_\infty$-almost surely, i.e.\
$\rwm(x,\pi^\infty(x),Q^{\pi^\infty}(x))=0$ for $\mu_\infty$-a.e.\ $x$. Since
$g_\infty$ is continuous and $\mu_\infty$ has full support on the coverage-certified region $\calR$
by the full-support property of Assumption~\ref{ass:nested}, its zero set is all of $\calR$; hence $\pi^\infty$ satisfies the exact
equilibrium condition everywhere on the certified region the coverage schedule reaches.
\end{proof}

\section{Experimental protocol and architecture}\label{app:protocol}

The arms, accuracy measure, and compute axis summarized in
Section~\ref{sub:protocol} are specified in full here, together with the metrics and
the proposition-to-experiment map that the theory of Section~\ref{sec:theory}
obliges.

We separate three questions that are easy to conflate. \emph{Verification} asks whether the code
evaluates the intended residual: we evaluate the structural residual through the same code path EWM
uses in training at the closed-form Brock--Mirman policy, where it must vanish up to quadrature
precision, and we reproduce the published reference solution. \emph{Validation} asks whether the
learned policy satisfies the exact equilibrium conditions on states it never trained on: the held-out
exact residual $R^{H}$, and, where a closed-form or independent reference exists, the supremum-norm
error against it. \emph{Uncertainty quantification} asks how stable the result is under the randomness
of training: every reported number is a distribution across at least ten independent seeds, summarized
by its median, interquartile range, and worst case. The arms, accuracy measure, and compute axis below
address the three separately.

To identify the mechanism we report a three-rung ladder that isolates coverage from
the surrogate, with one refinement and a benign control:
\begin{enumerate}[leftmargin=1.7em,nosep]
\item \texttt{DEQN-path-exact}: the pathwise baseline, exact residual on the
policy's own simulated path (no coverage, no surrogate). The coverage gap.
\item \texttt{DEQN-coverage-exact}: the exact residual on the enlarged coverage
measure $\muK$, no surrogate. This rung isolates whether coverage alone
suffices, and is the control without which the mechanism cannot be identified.
\item \texttt{EWM-coverage-surrogate}: coverage plus the audited continuation
surrogate, the full construction. The fix.
\end{enumerate}
These three arms are reported at $N{=}2$ and $N{=}4$ in Table~\ref{tab:irbc}, each from a
single consistent ten-seed run, and the scaling of the amortization index $A(\tau)$ with $N$
in Table~\ref{tab:irbc_scaling}. The middle rung, \texttt{DEQN-coverage-exact}, is the
surrogate-free control that isolates whether the learned continuation is decisive. The IRBC coverage
measure is specified in Appendix~\ref{app:irbc_cov_spec}; the warm-started sensitivity exercise of
Figure~\ref{fig:irbc_kappa} instead varies the surrogate width and exact-target anchor share at fixed
coverage reach. The smooth Brock--Mirman model (Section~\ref{sub:worked}) serves as the
benign control that closes the design. The placebo-coverage control, in which coverage states
are drawn from an arbitrary hypercube rather than the model's forward map, isolates whether the
gain comes from the exact transition rather than from sampling more states. Finer ablation arms, a
surrogate-only arm (\texttt{DEQN-path-surrogate}), a routed arm (\texttt{EWM-routed}), and an
exact-audit arm (\texttt{EWM-exact-audit}), isolate secondary effects and are not reported here.

\paragraph{Metrics.} Average Euler errors are not the headline; the economically
relevant quantities live in the tail and off the path. We report the $p_{95}$ and
$p_{99}$ held-out Euler residuals; the disaster- and impulse-response (GIRF)
residuals along structurally generated counterfactual paths
(Proposition~\ref{prop:girf}); residuals at constraint boundaries, conditional on the
constraint binding; the failed-seed frequency (a run is failed if its
held-out $R^{H}$ has not fallen below the convergence tolerance by the episode budget); the
verified-stationarity rate, the fraction of seeds whose trained policy is invariant
on a fixed held-out evaluation set sampled from the coverage distribution, $\sup_{x}\|\pi_{t+\Delta}(x)-\pi_t(x)\|<10^{-3}$ (the supremum taken over that finite held-out set, which beyond a few dimensions is a Monte Carlo sample, not a dense grid)
over $\Delta$ further re-simulations and the disaster-region mean $R^{H}$ below the convergence
tolerance of $1.5\times10^{-2}$; and the surrogate gap $R^{H}-R^{L}$ that audits
the learned continuation. Compute is reported as the number of expensive
exact-quadrature evaluations, separated to avoid hidden cost into $B_{\mathrm{policy}}$
(exact calls inside policy training), $B_{\mathrm{world}}$ (exact calls used to train the
continuation surrogate), and $B_{\mathrm{audit}}$ (exact held-out evaluations), with
total $B_{\mathrm{total}}=B_{\mathrm{policy}}+B_{\mathrm{world}}+B_{\mathrm{audit}}$,
alongside wall-clock; ``compute parity'' means matched $B_{\mathrm{total}}$.

\paragraph{Diagnostics motivated by the theory.} The propositions of
Section~\ref{sec:theory} motivate a battery of diagnostics, reported per model where feasible:
(O1)~the held-out $R^{H}$ attributed to the optimization, perception-class, surrogate-fit, and
coverage terms of Proposition~\ref{prop:decomp}, showing $\mathrm{CoverageErr}$ dominates off-path
for \texttt{DEQN-path-exact}; (O2)~a coverage sweep of off-path $R^{H}$ against the reach $\kappa$,
testing the decrease Theorem~\ref{thm:consist} predicts in the vanishing-error limit; (O3)~the
coverage gap $\TV(\mu,\muK)$ (or density ratio $B$) at each stage, tracking that decrease; (O4)~the
off-ergodic bound of Theorem~\ref{thm:offpath}, off-path $\E_\eta\|R^{H}\|^2$ against both terms
$2B\,\E_{\muK}\|R^{L}\|^2$ and $2B\,L_R^2\,\E_{\muK}\|\wm-\Qpi\|^2$, not the continuation term
alone; (O5)~the surrogate gap $R^{H}-R^{L}$ and continuation error $\|\wm-\Qpi\|$ on held-out states
(Proposition~\ref{prop:surrgap}), with the routing threshold and routed fraction; (O6)~the
generalized impulse-response residual by horizon (Proposition~\ref{prop:girf}) with an estimate of
the stability modulus $\lambda$; (O7)~the seed-basin scatter of on-path against off-path $R^{H}$;
and (O8), as accuracy references, the analytic Brock--Mirman solution and a distributional accuracy
metric for the Bewley economy. The per-arm results for the headline model, and the coverage
configuration that realizes $\kappa$, are reported with that model in Tables~\ref{tab:irbc}
and~\ref{tab:irbc_scaling} of Section~\ref{sub:irbc}.

\paragraph{Reproducibility and provenance.} Every IRBC number reported in
Section~\ref{sub:irbc} is the exact held-out $R^{H}$ from completed ten-seed
runs at $N{=}2$ and $N{=}4$; the surrogate residual
$R^{L}$ enters only the audit. The solver is ab initio, random initialization with the exact
structural residual as the only objective, no external/reference warm start and no reference
solution as a label. Every figure and table regenerates from the logged per-seed CSV, with seeds, budgets,
and coverage weights recorded per row under version control.\footnote{All numerical results
reported in this paper are computed in double-precision (float64) arithmetic.}

\subsection{Networks, training, and architecture}\label{app:arch}\label{app:hyper}

Table~\ref{tab:wm_vs_ewm} contrasts an artificial world model with an equilibrium world model, expanding the brief comparison of Section~\ref{sub:whatis}.

\begin{table}[t!]\centering\small
\begin{tabular}{p{0.24\textwidth}p{0.33\textwidth}p{0.33\textwidth}}
\toprule
 & \textbf{World model (AI)} & \textbf{Equilibrium world model} \\
 & a Dreamer agent learning a game's physics & this paper \\
\midrule
What is modeled
 & the environment's dynamics: the next observation or state
 & the continuation $\Qpi$: next period's marginal utilities, prices, and multipliers \\[3pt]
The transition law
 & learned from observed play, and approximate
 & the map $\Gamma$ is exact, known, and never learned; it is evaluated at the policy's own action $a=\nnp(x)$, the unknown we solve for \\[3pt]
What imagination buys
 & rollouts in a learned latent train the agent on states its play never reached
 & the exact residual is imposed at off-path states the policy's own path never reaches \\[3pt]
Training signal
 & predict observed transitions; reward for the policy
 & quadrature on $\Gamma$ for the surrogate; the exact residual, unsupervised, for the policy \\[3pt]
Discipline on imagined states
 & predictive fit only
 & feasibility, market clearing, optimality, and complementarity, since the states come from the exact $\Gamma$ \\[3pt]
Goal
 & a high-reward control policy
 & an equilibrium policy with small exact residual everywhere the economy can reach \\
\bottomrule
\end{tabular}
\caption{From a world model to an equilibrium world model. The columns are: a left axis naming the
aspect compared; World model (AI), the answer for a Dreamer agent learning a game's physics; and
Equilibrium world model, the answer for this paper. Both rehearse in imagined
states to learn about regions ordinary experience never reaches. The difference is one
of discipline: an artificial world model learns the dynamics and is bound only by
predictive fit, whereas an equilibrium world model takes the transition map $\Gamma$ as exact
and known, learns the equilibrium policy and the continuation that enters its exact residual
but never the dynamics,\protect\footnotemark{} and admits a
rehearsed state only if it satisfies the model's own feasibility, market-clearing, and
optimality conditions.}
\label{tab:wm_vs_ewm}
\end{table}
\footnotetext{The map $\Gamma$ is written in closed form for both blocks of the state: the
exogenous block follows a known law independent of the policy (for example an AR(1)
productivity shock), and the endogenous block a known accounting identity in the action (for
example the capital law $K'=(1-\delta)K+I$). Early in training the endogenous next state that
$\Gamma$ implies can be far from its equilibrium value, but the only approximate input is the
action $a=\nnp(x)$, never $\Gamma$ itself: this is an unconverged policy, not an approximated
transition, and it self-corrects as $\nnp$ converges. A learned world model, by contrast, has
the dynamics themselves wrong, and only more data corrects them.}

\paragraph{Stop-gradient and Polyak-target conventions.} Two conventions used in
Algorithm~\ref{alg:ewm} and the world arm of Section~\ref{sub:discipline}. The stop-gradient operator
$\sg(\cdot)$ is the identity in the forward pass but has Jacobian zero in the backward pass, so
applying it to the regression target lets the surrogate $\wm$ chase a detached estimate of the
continuation rather than co-adapt with it. Where it stabilizes training, the bootstrapped target
$Q^{H}(x;\bar\theta)$ is read at a Polyak average $\bar\theta_t=(1-\tau)\bar\theta_{t-1}+\tau\theta_t$
(small $\tau\in(0,1)$) of the policy parameters, not the surrogate's: because the target is itself a
function of the policy at the next-period nodes, reading it at a slow $\bar\theta$ decouples it from
the live update that chases it and damps the high-frequency oscillations of the bootstrapped fixed
point, the standard target-network device of deep reinforcement learning
\citep{polyak1992acceleration,lillicrap2016continuous}. The same $\bar\theta$ holds the coverage
measure $\muK(\bar\theta)$ fixed within a stage, and the outer loop imposes self-consistency
$\bar\theta=\theta(\widehat W_\star)$ in the theory of Section~\ref{sec:theory}.

For reproducibility, Table~\ref{tab:hyper} collects the network architecture (width, depth,
activation, output transform), the optimizer and learning-rate schedule, the minibatch size
and episode budget, and the exact-quadrature rule used to evaluate $R^{H}$, for the two solved
models. Both arms share the policy network; EWM adds the continuation surrogate $\wm$.
Figure~\ref{fig:architecture} pictures the two for the Brock--Mirman setting side by side: DEQN as a
single policy network trained by one equilibrium loss, and EWM as that same policy plus the
surrogate, each trained by its own loss, with the surrogate entering the policy residual under a
stop-gradient. The coverage schedule that realizes the reach $\kappa$ is given with the construction
in Section~\ref{sub:irbc}.

\begin{figure}[t!]
\centering
\resizebox{\textwidth}{!}{
\begin{tikzpicture}[
  font=\small,
  >={Stealth[length=2.4mm]},
  inn/.style={circle, draw=black!55, fill=black!6, minimum size=5mm, inner sep=0pt, font=\footnotesize},
  outn/.style={circle, draw=black!60, fill=white, minimum size=5.5mm, inner sep=0pt, font=\footnotesize},
  mlpP/.style={rectangle, rounded corners=3pt, draw=blue!55!black, fill=blue!8,
    minimum width=1.9cm, minimum height=1.0cm, align=center, font=\footnotesize},
  mlpW/.style={rectangle, rounded corners=3pt, draw=teal!55!black, fill=teal!14,
    minimum width=1.9cm, minimum height=1.0cm, align=center, font=\footnotesize},
  loss/.style={rectangle, rounded corners=2pt, draw=black!60, fill=black!5, align=center,
    font=\footnotesize, minimum width=2.5cm, minimum height=1.0cm, inner sep=3pt},
  quad/.style={rectangle, rounded corners=2pt, draw=black!50, dashed, fill=black!3,
    align=center, font=\scriptsize, minimum height=0.85cm, inner sep=3pt},
  flow/.style={-{Stealth[length=2.6mm]}, line width=1pt, black!82},
  grad/.style={->, very thick, red!55!black},
  aux/.style={->, dashed, black!60}
]
\node[font=\small, blue!60!black] at (2.7,5.9) {DEQN: one network, one loss};
\node[inn] (dk) at (0,4.3) {$k$};
\node[inn] (dz) at (0,3.5) {$z$};
\node[mlpP] (dm) at (1.7,3.9) {MLP $\nnp$\\[1pt]$32\times2$, $\tanh$};
\node[outn] (da) at (3.7,3.9) {$a$};
\draw[flow] (dk) -- (dm); \draw[flow] (dz) -- (dm); \draw[flow] (dm) -- (da);
\node[loss] (dl) at (6.0,3.9) {equilibrium loss\\$\|\rwm(x,a,\Qpi)\|^2$};
\draw[flow] (da) -- (dl);
\node[quad] (dq) at (3.7,1.8) {$\Qpi$: exact\\quadrature on $\Gamma$};
\draw[aux] (dq) -- (dl) node[pos=0.45, right, font=\scriptsize] {continuation};
\draw[grad] (dl.north) .. controls +(0,0.55) and +(0,0.55) .. (dm.north)
   node[pos=0.5, above, font=\scriptsize, red!55!black] {update $\theta$};
\draw[black!20, dashed] (7.5,-1.7) -- (7.5,6.2);
\begin{scope}[shift={(8.0,0)}]
\node[font=\small, red!58!black] at (3.1,5.9) {EWM: two networks, two losses};
\node[inn] (pk) at (0,4.3) {$k$};
\node[inn] (pz) at (0,3.5) {$z$};
\node[mlpP] (pm) at (1.7,3.9) {policy\\MLP $\nnp$};
\node[outn] (pa) at (3.7,3.9) {$a$};
\draw[flow] (pk) -- (pm); \draw[flow] (pz) -- (pm); \draw[flow] (pm) -- (pa);
\node[loss] (pl) at (6.0,3.9) {policy loss\\$\|\rwm(x,a,\wm)\|^2$};
\draw[flow] (pa) -- (pl);
\draw[grad] (pl.north) .. controls +(0,0.55) and +(0,0.55) .. (pm.north)
   node[pos=0.5, above, font=\scriptsize, red!55!black] {update $\theta$};
\node[inn] (sk) at (0,1.4) {$k$};
\node[inn] (sz) at (0,0.6) {$z$};
\node[mlpW] (sm) at (1.7,1.0) {surrogate\\MLP $\wm$};
\node[outn] (sq) at (3.7,1.0) {$\wm$};
\draw[flow] (sk) -- (sm); \draw[flow] (sz) -- (sm); \draw[flow] (sm) -- (sq);
\node[loss] (sl) at (6.0,1.0) {world loss\\$\|\wm-\Qpi\|^2$};
\draw[flow] (sq) -- (sl);
\draw[grad] (sl.south) .. controls +(0,-0.6) and +(0,-0.6) .. (sm.south)
   node[pos=0.5, below, font=\scriptsize, red!55!black] {update $\psi$};
\draw[aux] (sq.north) .. controls +(0,1.3) and +(-1.0,-1.3) .. (pl.south);
\node[font=\scriptsize, align=left, anchor=west] at (5.55,2.25)
   {$\wm$ stands in for\\$\Qpi$ (stop-grad)};
\node[quad] (sq2) at (3.7,-1.0) {$\Qpi$: exact quadrature\\on sparse anchors};
\draw[aux] (sq2) -- (sl) node[pos=0.5, right, font=\scriptsize] {regression target};
\end{scope}
\end{tikzpicture}}
\caption{Network architecture and training, Brock--Mirman setting (Table~\ref{tab:hyper} lists
the hyperparameters per model). Left, DEQN: a single policy $\nnp$ maps the
state $x=(k,z)$ through a multilayer perceptron (MLP; two $\tanh$ layers of width $32$) to the action $a$; the
continuation $\Qpi$ in the residual is computed by exact Gauss--Hermite quadrature at every
state, and the one equilibrium loss $\|\rwm(x,a,\Qpi)\|^2$ updates the policy weights $\theta$.
Right, EWM: two networks trained by two separate losses. The policy $\nnp$ is updated
by the same equilibrium residual, but with the continuation supplied by a second network, the
surrogate $\wm$, entered under a stop-gradient so the policy loss does not train $\wm$. The
surrogate is updated by its own regression loss $\|\wm-\Qpi\|^2$ against the exact continuation
evaluated by quadrature on a sparse anchor set (a random subsample of the coverage batch); the
held-out exact residual $R^{H}$ then audits it. The structural
model, transition $\Gamma$, and residual $\rwm$ are identical across the two arms; EWM differs
only in adding $\wm$ (of capacity $m$, $\tanh$) and in the coverage measure on which the
residual is imposed. The rare-disaster variant of Section~\ref{sub:bmdc} adds the indicator $d$
to the inputs.}
\label{fig:architecture}
\end{figure}

\begin{table}[t]\centering\small
\setlength{\tabcolsep}{5pt}

\begin{tabular}{lll}
\toprule
 & Brock--Mirman & IRBC ($N$ countries) \\
\midrule
Policy net                  & $32\times2$ & $96\times2$ \\
Surrogate $\wm$ net         & $m\times2$, $m\in\{16,32,64\}$ & $96\times2$ \\
Activation (both nets)      & $\tanh$ & $\tanh$ \\
Policy output transform     & sigmoid ($c$), softplus ($\mu$) & softplus / identity \\
Surrogate output transform  & softplus (positive $\Qpi$) & softplus (positive $\Qpi$) \\
Policy optimizer / LR        & Adam, constant $10^{-3}$ & Adam, cosine $1.2\times10^{-3}$, grad-clip $8$ \\
Surrogate optimizer / LR     & Adam, constant $10^{-3}$ (as policy) & Adam, constant $10^{-3}$ \\
Minibatch / episodes        & $256$ / $3000$ & $256$ / $3500$ \\
Quadrature (rule / nodes) & Gauss--Hermite, $5$ & monomial, $4(N{+}1)$ \\
State dimension           & $2$ ($k,z$); $3$ with $d$ & $2N{+}1$ \\
\bottomrule
\end{tabular}
\caption{Network architecture and training hyperparameters for the two solved models. The columns
are: a left axis listing each architecture or training setting; Brock--Mirman, its value in the
Brock--Mirman model; and IRBC ($N$ countries), its value in the international real business cycle
model. The
first two rows are hidden-layer width\,$\times$\,number of hidden layers; both arms (DEQN, EWM)
share the policy, and EWM adds the continuation surrogate $\wm$ of capacity $m$, a
separate feed-forward network of the listed width that outputs a positive continuation
$\Qpi$ through a softplus. The surrogate shares the policy's activation, minibatch, and
episode budget, and is fit to exact quadrature targets once per episode, after which the
policy takes its inner gradient steps. Its optimizer matches the policy's in the Brock--Mirman
setting (Adam at a constant $10^{-3}$); in the IRBC the policy alone uses the cosine
learning-rate schedule and gradient-norm clipping listed, while the surrogate is trained at a
constant $10^{-3}$. For the IRBC scaling runs the dense policy is replaced by a shared
per-country trunk of the same size, and the quadrature is the disaster-crossed monomial rule
($2(N{+}1)$ monomial nodes over the $N{+}1$ continuous shocks, crossed with the two disaster states,
$=4(N{+}1)$).
The Bewley experiment (Section~\ref{sub:bewley}) keeps a two-layer width-$64$ policy with Adam at a
constant $10^{-3}$ over $8000$ steps, but its learned object is the distributional encoder
$h_\psi$ in place of a continuation surrogate (one hidden layer of width $64$, $\tanh$, output
dimension $d_{\mathrm{latent}}{=}4$), trained jointly with the policy; the continuation there is
evaluated exactly by enumeration over the $2\times2$ aggregate-by-idiosyncratic next states
rather than by quadrature.}\label{tab:hyper}
\end{table}

\begin{table}[t]\centering\small
\setlength{\tabcolsep}{6pt}
\begin{tabular}{lll}
\toprule
Parameter & Brock--Mirman & IRBC ($N$ countries) \\
\midrule
Discount factor $\beta$                          & $0.95$              & $0.99$ \\
Capital share                                    & $\alpha{=}0.36$     & $\zeta{=}0.36$ \\
Depreciation $\delta$                            & $0.10$              & $0.01\!\to\!0.06$ (disaster) \\
EIS $\gamma$                                      & log ($\gamma{=}1$)  & $\gamma_j\in\{0.25,1.0\}$ \\
Adjustment cost $\phi$                           & --- (irreversible only) & $0.50$ \\
TFP persistence $\rho$                            & $0.90$              & $0.95$ \\
TFP innovation s.d.                              & $\sigma{=}0.04$     & $(\sigma_1,\sigma_2){=}(0.004,0.020)$ \\
Aggregate-shock s.d. $\sigma_a$                  & ---                 & $0.006$ \\
Disaster probability $p_d$                       & $0.002$             & $0.001$ \\
Disaster persistence $\Pr(d'{=}1\mid d{=}1)$     & $0.60$              & $0.85$ \\
Disaster TFP collapse                            & $b{=}0.45$          & $0.30$ \\
\bottomrule
\end{tabular}
\caption{Calibration of the two solved models: the Brock--Mirman model with a rare disaster
(Section~\ref{sub:bmdc}) and the international real business cycle model with rare disasters
(Section~\ref{sub:irbc}). The columns are: Parameter, the calibration parameter; Brock--Mirman, its
value in the Brock--Mirman model; and IRBC ($N$ countries), its value in the international real
business cycle model. Brock--Mirman has a single good and log utility, so the cross-country
volatilities and the convex adjustment cost are not applicable; its only friction beyond the disaster
is the irreversibility constraint $i\ge0$. The disaster scales productivity by $(1-b\,d)$ in
Brock--Mirman and by $(1-0.30\,d)$ in the IRBC. The endogenous-protection variant of
Appendix~\ref{sub:bmdcp} keeps this Brock--Mirman calibration but makes the disaster rate a choice,
$p_d(\alpha)=p_0(1-\alpha)$ with $p_0=0.01$ and a protection cost $g(\alpha)=\tfrac12\phi\alpha^2$,
$\phi=0.10$.}
\label{tab:calib}
\end{table}

\section{Brock--Mirman details and auxiliary examples}\label{app:bm}

This appendix collects the material behind the Brock--Mirman worked example of
Section~\ref{sub:bmdc}. Appendix~\ref{app:bmdc_spec} gives the full specification of the
rare-disaster model, its on-path sampling, coverage measure, calibration, networks, and the
three solver arms. Appendix~\ref{sub:bmdcp} then develops a self-contained extension in which
an endogenous protection choice moves the disaster probability itself, the smallest setting in
which action-conditioning becomes load-bearing.

\subsection{Brock--Mirman specification}\label{app:bmdc_spec}

This appendix specifies the Brock--Mirman model with a rare disaster of
Section~\ref{sub:bmdc}. Appendix~\ref{app:bmdc_sampling} gives the on-path sampling, the coverage
measure, the calibration (Table~\ref{tab:calib}), the networks, and the three solver arms;
Appendix~\ref{app:bmdc_cov} explains how the coverage parameters are set.

\subsubsection{Sampling and the coverage measure}\label{app:bmdc_sampling}
The on-path simulation runs $256$ tracks under the current policy, initialized once at the
deterministic steady state $(k_{ss},z{=}0,d{=}0)$ and then carried forward across episodes: each
episode advances the tracks $48$ further steps through the exact $\Gamma$ (continuous AR(1)
productivity and the two-state disaster chain) from where the previous episode left them, never
resetting to the steady state, so the on-path batch tracks the policy's own evolving ergodic law
$\muerg$. Coverage adds two path-derived components. The stress component
takes a fraction $0.40$ of those path states as seeds, forces the disaster on ($d{:=}1$),
displaces capital off its ergodic level by $k\mapsto k\,e^{\,U(-0.6,\,0.3)}$, and rolls each
seed three steps through the exact $\Gamma$ (four states per seed); the local component
perturbs every path state by $k\mapsto k\,e^{\,U(-0.1,\,0.1)}$ at fixed $(z,d)$. Each
component is repaired to the feasible region $k\ge10^{-5}$ before the residual is evaluated
(a divergence guard resets any track with $c<10^{-5}$ or $k>50\,k_{ss}$ back to steady
state); the repair bounds the simulation only, never the residual. The continuation
$\Qpi(x)=\E[(1/c')(\mathrm{MPK}'+1-\delta)]$, with
$\mathrm{MPK}'=\alpha e^{z'}(1-bd')(k')^{\alpha-1}$ the next-period marginal product
of capital, is evaluated by a $5$-node Gauss--Hermite rule over $z'$
crossed with the two-state disaster chain; calibration (Table~\ref{tab:calib}) $\alpha{=}0.36,\ \beta{=}0.95,\
\delta{=}0.10,\ \rho{=}0.90,\ \sigma{=}0.04,\ p_d{=}0.002,\ p_{dd}{=}0.60,\ b{=}0.45$. Policy
and surrogate are each two-hidden-layer ($32$, $\tanh$) networks, trained with Adam at
learning rate $10^{-3}$; this is Algorithm~\ref{alg:ewm} with a single stage and $E{=}6$
inner steps. The three arms differ only as in Table~\ref{tab:bmdc}: \texttt{path-exact}
trains on $\muerg$ with exact quadrature, \texttt{coverage-exact} on $\muerg+$coverage with
exact quadrature at every step, and EWM forms the exact targets once per episode, fits $\wm$,
then trains the policy against $\wm$. We report a single calibration over ten seeds, run to
convergence ($3000$ episodes, each reported policy a Polyak tail-average over the final $500$);
errors are the held-out exact residual ($R^{H}$ on a frozen
grid: the normal region $d{=}0$ near the ergodic set, the disaster region $d{=}1$ at
drawn-down capital), since the model has no closed form at $\delta<1$.

\subsubsection{How the coverage parameters are set}\label{app:bmdc_cov}
The stress and local fractions, the displacement scales, and the roll
horizon are design parameters of the coverage measure, fixed by a coverage criterion (that
the rare and post-disaster region carry enough mass to be sampled at all, with the local
perturbation a small fraction of the ergodic spread so perturbed states stay admissible under
$\Gamma$), and never searched against the reported held-out $R^{H}$, which would be tuning on
the evaluation metric. Increasing any of them widens the reach $\kappa$. This is distinct from the
IRBC warm-started exercise in Figure~\ref{fig:irbc_kappa}, which holds coverage reach fixed and
increases surrogate width and exact-target anchor share.

\subsection{Endogenous protection: action-conditioning when the action moves the measure}\label{sub:bmdcp}

The rare-disaster Brock--Mirman economy of Section~\ref{sub:bmdc} is modified in one
respect. In the normal regime the agent can choose protection $\alpha\in[0,1)$, which lowers the
disaster-entry probability from $p_0$ to
$p_d(\alpha)=p_0(1-\alpha)$ at resource cost
$g(\alpha)=\tfrac12\phi\alpha^2$. The state remains $x=(k,z,d)$, the controls are
$(k',\alpha)$, and investment irreversibility is imposed with the same
Fischer--Burmeister condition as before. The calibration keeps the disaster rare but avoidable:
$p_0=0.01$, $b=0.45$, $p_{dd}=0.6$, and $\phi=0.10$. It implies an interior normal-regime
protection policy that rises with capital, approximately from $0.08$ to $0.34$; at the
representative normal state used for the comparison, the VFI benchmark is
$\alpha^\star=0.188$. In the disaster regime protection has no current effect on the transition,
so the optimum is $\alpha^\star(d{=}1)=0$.

The experiment isolates a margin that is absent from the exogenous-disaster case. Investment
changes the next physical state; protection changes the probability measure over regimes. The
protection first-order condition is
\begin{equation}
\underbrace{u'(c)\,g'(\alpha)}_{\text{marginal cost today}}
\;=\;
\underbrace{\beta\,p_0\,\big(\E[V(x',d'{=}0)]-\E[V(x',d'{=}1)]\big)}_{\displaystyle \beta\,p_0\,\Delta V\ \text{: marginal benefit of a lower hazard}},
\label{eq:alpha_foc}
\end{equation}
where $\Delta V$ is the continuation-value gap between the normal and disaster regimes. This is the
object the solver must represent. It is not pinned down by the Euler equation, which carries
marginal utilities rather than the value level of avoiding a regime. A pathwise Euler residual can
therefore be small while the protection margin remains unidentified.

The distinction between a state-only continuation and an action-conditioned continuation is exact in
this example. A state-only surrogate of the parameterized-expectations form,
$\wm(x)\approx\Qpi(x,\pi(x))$, has already evaluated the continuation at the policy's own action.
Consequently
\begin{equation}
\frac{\partial \wm(x)}{\partial\alpha}\equiv 0.
\label{eq:stateonly_dies}
\end{equation}
It perceives no benefit from lowering the hazard and collapses the protection FOC to
$u'(c)g'(\alpha)=0$. By contrast, an action-conditioned object keeps the regime probabilities under
the control of the current action:
\begin{equation}
\widehat Q(x,\alpha)=\E_{z'}\!\left[\sum_{d'}P(d'\mid d,\alpha)\,\widehat v(k',z',d')\right],
\qquad P(d'{=}1\mid d{=}0,\alpha)=p_0(1-\alpha).
\end{equation}
Since the hazard is structural and known,
\begin{equation}
\frac{\partial \widehat Q}{\partial\alpha}
=\E_{z'}\!\left[\sum_{d'}\frac{\partial P(d'\mid d,\alpha)}{\partial\alpha}\,\widehat v(k',z',d')\right]
=-p_0\big(\widehat v(\cdot,1)-\widehat v(\cdot,0)\big)=p_0\,\Delta\widehat V\;\neq 0.
\label{eq:ac_lives}
\end{equation}
The surrogate is therefore required only to learn the regime-resolved value levels. Its marginal
benefit error is $p_0(\Delta\widehat V-\Delta V)$, bounded by
$2p_0\|\widehat v-V\|_\infty$ in this two-regime case; this is the value-level control formalized
in Proposition~\ref{prop:actcond}. No derivative of a learned value function with respect to
$\alpha$ is needed.

All arms use the exact transition $\Gamma$ and are trained on the same coverage measure, that of
Appendix~\ref{app:bmdc_sampling} under calibration~B (Table~\ref{tab:calib}), so the comparison
isolates action-conditioning rather than coverage. At a state $x=(k,z,d)$, consumption
is $c=e^{z}(1-bd)k^\zeta+(1-\delta)k-k'-g(\alpha)$. The residual consists of the consumption
Euler equation, the Fischer--Burmeister irreversibility condition, and the protection FOC
\eqref{eq:alpha_foc}. Expectations over $z'$ use a seven-node Gauss--Hermite rule, crossed with the
two-state disaster chain. The learned EWM objects are an action-conditioned marginal-utility
continuation $\widehat Q\approx\Qpi$ and a bootstrapped regime-resolved value
$\widehat v\approx V$; both are audited against exact quadrature, and neither replaces the structural
law $\Gamma$. Runs use ten seeds per arm and are scored with the exact-quadrature held-out
protection-FOC residual, together with the VFI gap $|\alpha-\alpha^\star|$ as a small-model
cross-check only. The VFI solution is never used as a training target, label, or warm start, and the
full setup is replicable in \texttt{experiments/bm\_dcp}.

The four arms form an identification ladder. \texttt{deqn-plain} minimizes only the Euler and
irreversibility residuals, so the value of avoiding the disaster is outside its objective and
$\alpha$ is unidentified. \texttt{state-only} adds a learned continuation but averages over the
regime transition before differentiating, so \eqref{eq:stateonly_dies} forces the perceived benefit
of protection to zero. \texttt{exact} uses a regime-resolved value and exact quadrature at every
inner policy step. \texttt{action-conditioned} uses the same regime-resolved margin but amortizes it
through \eqref{eq:ac_lives}, computing exact targets once per episode and using the cheap surrogate
inside the policy steps.

\begin{figure}[t!]\centering
\includegraphics[width=0.92\textwidth]{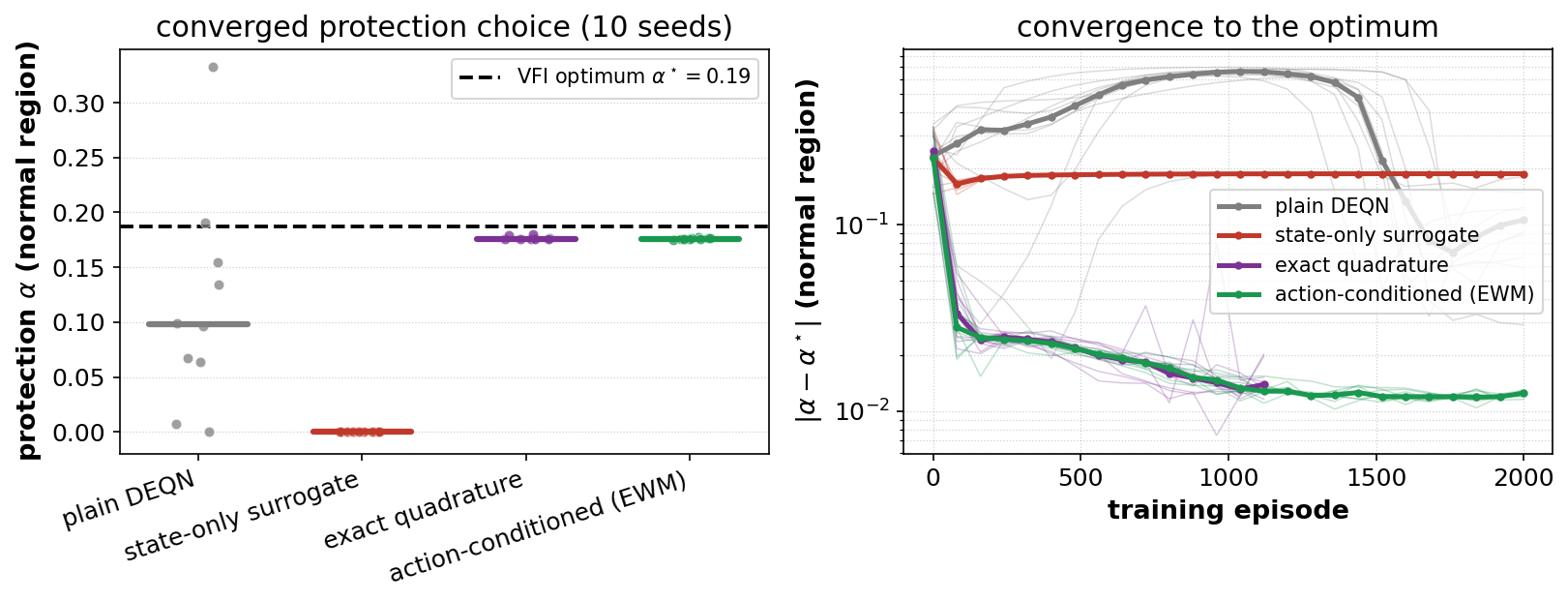}
\caption{Endogenous protection. Left: converged normal-regime protection by arm, with the VFI
benchmark $\alpha^\star=0.188$ dashed. The regime-resolving arms, \texttt{exact} and
\texttt{action-conditioned}, recover the benchmark; \texttt{state-only} collapses to zero because
\eqref{eq:stateonly_dies} removes the regime value gap; \texttt{deqn-plain} is unidentified across
seeds. Right: $|\alpha-\alpha^\star|$ over training. The action-conditioned surrogate reaches the
accuracy of exact quadrature at $4.7\times$ lower exact-evaluation cost (Table~\ref{tab:bmdcp}). Ten
seeds; the VFI solution is an evaluation-only reference.}
\label{fig:bmdcp}
\end{figure}

Table~\ref{tab:bmdcp} gives the numerical ladder. The regime-resolving arms recover the optimal
normal-regime protection: both \texttt{exact} and \texttt{action-conditioned} choose
$\alpha=0.176$, within $0.012$ of the VFI benchmark, and have held-out protection-FOC residuals of
order $10^{-5}$. The state-only arm chooses no protection, exactly as predicted by
\eqref{eq:stateonly_dies}. The plain DEQN has no disciplined protection margin: across ten seeds its
normal-region choice spans $[0,0.33]$, and one seed buys protection inside the disaster where the
optimum is zero. The action-conditioned surrogate matches exact quadrature on the policy margin while
reducing exact evaluations from $11{,}673$ to $2{,}477$ million, a factor of $4.7$.

A final point concerns certification. The state-only surrogate has a
very small residual in its own misspecified terms ($2.9\times10^{-6}$), because after averaging over
regimes it is self-consistent that no protection is valuable. That residual is not a valid
certificate. Only the exact-quadrature protection FOC, evaluated with the regime value gap, exposes
the error. Thus the appendix separates two requirements: when an action moves only the next state,
coverage of the off-ergodic region suffices, the same $\SCE\!\to\!\RE$ device as in
Section~\ref{sub:coverage}, and a state-only world model is enough; when the action moves the
transition measure, the world model must also be action-conditioned. Seed-level robustness of this
ladder at scale is established on the international model of Section~\ref{sub:irbc}.

\begin{table}[t!]\centering\small
\resizebox{\textwidth}{!}{
\begin{tabular}{llccccc}
\toprule
Arm & continuation it carries & $\alpha$(nor) & $|\Delta\alpha|_{\mathrm{nor}}$ &
$|\Delta\alpha|_{\mathrm{dis}}$ & FOC$^{H}$(nor) & $B_{\mathrm{policy}}$ \\
\midrule
\texttt{deqn-plain}        & none (Euler+FB only)               & $[0.00,0.33]$ & $[0.03,0.19]$ & $[0.00,0.41]$ & --- & $3441$ \\
\texttt{state-only}        & state-only continuation $\wm(x)$   & $0.000$ & $0.188$ & $0.000$ & --- & $1239$ \\
\texttt{exact}             & regime value, exact quadrature     & $0.176$ & $0.012$ & $0.001$ & $7.6\times10^{-5}$ & $11673$ \\
\texttt{action-conditioned}& regime-resolved value $\widehat v$ & $0.176$ & $0.012$ & $0.000$ & $3.6\times10^{-5}$ & $2477$ \\
\bottomrule
\end{tabular}}
\caption{Endogenous-protection Brock--Mirman, calibration B: $p_0=0.01$, $\phi=0.10$, $b=0.45$,
$p_{dd}=0.6$. The normal-region VFI benchmark is $\alpha^\star=0.188$ and the disaster-region
benchmark is zero. $|\Delta\alpha|$ denotes the distance to the VFI benchmark; FOC$^{H}$ is the
reference-free exact-quadrature protection-FOC residual on held-out states; and
$B_{\mathrm{policy}}$ is the exact-evaluation budget in millions. For all arms except
\texttt{deqn-plain}, entries are ten-seed medians; \texttt{deqn-plain} reports across-seed ranges because its objective does not
identify $\alpha$. Coverage is held fixed across arms, so the table isolates the value of resolving
the regime gap and action-conditioning the continuation.}\label{tab:bmdcp}
\end{table}

\section{IRBC model details}\label{app:irbc_details}

This appendix collects the implementation detail behind the international real business cycle
experiment of Section~\ref{sub:irbc}: the replicable coverage-measure specification, the
framework-to-IRBC object map, the high-dimensional existence solves at $N{=}8,16,32$, and the
supplementary scaling and pricing exhibits.

\subsection{IRBC coverage-measure specification}\label{app:irbc_cov_spec}

This is the replicable per-episode specification of $\muK$ for the
\texttt{EWM-coverage-surrogate} arm of Section~\ref{sub:irbc}. The path batch is $64$ tracks
$\times\,64$ periods, simulated from seeds $k_j\sim\mathrm{LogUnif}[0.70,1.35]$,
$z_j\sim\mathcal N\!\big(0,(0.8\,\sigma_j/\sqrt{1-\rho^2})^2\big)$, $d{=}0$, rolled under the
exact transition (continuous AR(1) TFP and the two-state disaster chain with the calibration
below). The stress component adds a fraction $\rho_2/\rho_1{=}0.3$ of those states with $d$
set to $1$; the local component adds a fraction $\rho_3/\rho_1{=}0.2$ with
$k_j\mapsto k_j\,e^{0.1\,\eta_j}$ and $z_j\mapsto z_j+0.1\,\sigma_j\,\eta'_j$,
$\eta_j,\eta'_j\sim\mathcal N(0,1)$ i.i.d. Each component is repaired (clipped) to the feasible
box $k_j\in[0.1,10]$, $z_j\in[-0.8,0.8]$ ($\approx\pm4$ ergodic s.d.) before the residual is
evaluated. The residual is always the structural IRBC residual evaluated at the repaired admissible
state; no clipping penalty or auxiliary residual is added. Calibration (Table~\ref{tab:calib}): $\beta{=}0.99$, $\zeta{=}0.36$,
adjustment cost $\phi{=}0.50$, EIS $\gamma_j\in\{0.25,1.0\}$, TFP persistence $\rho{=}0.95$,
idiosyncratic volatilities $(\sigma_1,\sigma_2){=}(0.004,0.020)$, aggregate $\sigma_a{=}0.006$,
disaster $p_d{=}0.001$, $\Pr(d'{=}1\mid d{=}1){=}0.85$, TFP collapse
$a_j{=}e^{z_j}(1-0.30\,d)$, depreciation $\delta(0){=}0.01\!\to\!\delta(1){=}0.06$. The
\texttt{DEQN-coverage-exact} arm uses the same construction with stress/local fractions
$0.5/0.25$. It should therefore be read as a conservative surrogate-free coverage control: it gives
the exact-residual solver at least as much off-path mass as the EWM arm's $0.3/0.2$, not an easier
support. The reach $\kappa$ is the mass and radius of this off-ergodic support and is held
fixed across the warm-started IRBC stages; the fractions $0.1\!\to\!0.2\!\to\!0.4$ advanced over
those stages are the surrogate's exact-target anchor share, a surrogate-fidelity dial of the
world component, not a widening of $\kappa$. These fractions, the
perturbation scale, and the stage schedule are design parameters of the coverage measure, not
optimized against the reported held-out $R^{H}$: tuning on the evaluation metric is disallowed,
so we never search over them for the best disaster number. They are fixed instead by a coverage
criterion, that the rare and post-shock regions carry enough mass to be sampled at all, with the
local perturbation kept a small fraction of the ergodic spread so the perturbed states remain
admissible under $\Gamma$. Increasing the stress and local fractions or the perturbation scale
is exactly increasing the reach $\kappa$. The main comparison is a property of the on-path
formulation, not of one
optimizer setting: a constant learning rate in place of the cosine schedule leaves the
self-confirming gap and its repair unchanged.

\subsection{Framework-to-IRBC mapping}\label{app:irbc_map_sec}

Table~\ref{tab:irbc_map} reads each framework object of Section~\ref{sec:framework} onto its
international real business cycle instance, as summarized in Section~\ref{sub:irbc}.

\begin{table}[t!]\centering\small

\begin{tabular}{lll}
\toprule
Framework object & Symbol & IRBC instance \\
\midrule
State        & $x$                          & capital, log-TFP, disaster flag $(k_{1:N},z_{1:N},d)$ \\
Choice       & $a=\nnp(x)$                  & $(k'_j,i_j,\mu_j,\lambda)$; $c_j=(\lambda/\tau_j)^{-\gamma_j}$ \\
Transition   & $x'=\Gamma(x,a,\varepsilon')$ & $k$-part $=k'$;\ $z'_j=\rho z_j+\sigma_j\varepsilon'_j+\sigma_a\varepsilon'_a$;\ $d'$ 2-state Markov \\
Continuation & $\Qpi_j(x)$                   & Euler expectation \eqref{eqn:irbc_Q}; stabilized by $\wm$ \\
Residual     & $\rwm(x,a,Q)$                & Euler${}_j$, clearing, FB \eqref{eqn:irbc_euler}--\eqref{eqn:irbc_fb} \\
\addlinespace
Training measure (DEQN) & $\muerg$          & the policy's own simulated path \\
Training measure (EWM)  & $\muK$            & forced-disaster + local perturb, rolled by exact $\Gamma$ \\
\bottomrule
\end{tabular}
\caption{The framework objects of Section~\ref{sec:framework} and their international
real business cycle instances. The columns are: Framework object, the abstract object of
Section~\ref{sec:framework}; Symbol, its notation; and IRBC instance, its concrete counterpart in
the international real business cycle model. As in Table~\ref{tab:bm_map}, the methods share every row
but the training measure: DEQN imposes the residual on the policy's own ergodic path;
EWM imposes the same residual on the coverage measure $\muK$, built by forcing the
disaster ($d:=1$) and locally perturbing visited states, then rolling forward under the
exact transition $\Gamma$.}\label{tab:irbc_map}
\end{table}

\subsection{High-dimensional existence at \texorpdfstring{$N{=}8$, $16$, and $32$}{N=8, 16, 32}}\label{app:highn}

This appendix gives the full high-dimensional certification solves summarized in
Section~\ref{sub:irbc}. EWM is run with five seeds at $N{=}16$ and $N{=}32$ and two at $N{=}8$,
the surrogate-free coverage arm with three at $N{=}8$, five at $N{=}16$, and three at $N{=}32$,
and the pathwise
baseline with one seed per $N$; this is a certification-at-scale probe rather than the ten-seed
robustness grid of $N\le4$. The $N{=}8$ rows here are a
full solve, distinct from the five-seed $N{=}8$ amortization-timing benchmark of
Table~\ref{tab:irbc_scaling}, which isolates $A(\tau)$ on a fixed reference policy rather than
solving the model. No independent reference solve is feasible at the upper dimensions
($33$--$65$), so rational expectations is certified here by the held-out exact $R^{H}$ reaching
its floor, verified stationarity, and the surrogate audit ($R^{H}/R^{L}\approx1$, with
$R^{H}-R^{L}$ of the same order as the residual itself), not against an external solution. A
pre-training recertification of the monomial $R^{H}$ against a $4096$-node scrambled-Sobol QMC
rule agrees to $\lvert\Delta R^{H}\rvert\approx10^{-6}\ll4\times10^{-2}$, confirming the high-$N$
errors are not a quadrature artifact.

On the surrogate's overhead: the cost the curse of dimensionality inflates is the in-loop policy
budget $B_{\mathrm{policy}}$, paid at every gradient step, and it is smaller than the failing
baseline's at every dimension here (Table~\ref{tab:irbc_highn}). The surrogate-training budget
$B_{\mathrm{world}}$ is larger and grows with $N$, but it is a different kind of cost: incurred
off the policy loop, refit by alternation as the policy moves, and reusable and parallelizable,
it buys the dimension-robust inner loop rather than competing with it step by step. At the low dimensions where exact evaluation is
cheap ($N\le4$), even the combined $B_{\mathrm{policy}}+B_{\mathrm{world}}$ stays at or below the
surrogate-free coverage arm's total exact budget (Section~\ref{sub:irbc}), so the
training does not erase the saving. At the upper dimensions the surrogate-free coverage arm
remains trainable and certifies at all three dimensions (three of three seeds at $N{=}8$, three
of five at $N{=}16$, three of three at $N{=}32$), but pays roughly four times the surrogate's
in-loop exact budget at a higher total cost (at $N{=}32$, $B_{\mathrm{policy}}$ of $3{,}312$
against $851$ million; at $N{=}16$, $1{,}706$ against $439$), so at the upper dimensions the
surrogate buys amortization rather than feasibility, the reverse of what an existence-only
reading would suggest.

\begin{table}[t!]\centering\small

\setlength{\tabcolsep}{4.5pt}
\begin{tabular}{llcccccc}
\toprule
$N$ & Arm & on-path $R^{H}$ & disaster $R^{H}$ & disaster $R^{H}_{p99}$ & verified
& $B_{\mathrm{policy}}$ & $B_{\mathrm{world}}$ \\
\midrule
$8$  & \texttt{EWM-coverage-surrogate} & $4.3\times10^{-3}$ & $3.1\times10^{-3}$ & $9.4\times10^{-3}$ & $2/2$ & $232$  & $619$  \\
     & \texttt{DEQN-coverage-exact}    & $1.7\times10^{-3}$ & $1.0\times10^{-3}$ & $4.1\times10^{-3}$ & $3/3$ & $903$  & ---    \\
     & \texttt{DEQN-path-exact}        & $3.2\times10^{-3}$ & $1.8\times10^{-2}$ & $3.0\times10^{-2}$ & $0/1$ & $516$  & ---    \\
\addlinespace
$16$ & \texttt{EWM-coverage-surrogate} & $1.7\times10^{-3}$ & $3.3\times10^{-3}$ & $1.3\times10^{-2}$ & $5/5$ & $439$  & $1170$ \\
     & \texttt{DEQN-coverage-exact}    & $1.9\times10^{-3}$ & $1.7\times10^{-3}$ & $3.7\times10^{-3}$ & $3/5$ & $1706$ & ---    \\
     & \texttt{DEQN-path-exact}        & $3.8\times10^{-3}$ & $1.9\times10^{-2}$ & $2.8\times10^{-2}$ & $0/1$ & $975$  & ---    \\
\addlinespace
$32$ & \texttt{EWM-coverage-surrogate} & $1.7\times10^{-3}$ & $3.0\times10^{-3}$ & $5.9\times10^{-3}$ & $5/5$ & $851$  & $2270$ \\
     & \texttt{DEQN-coverage-exact}    & $1.8\times10^{-3}$ & $1.1\times10^{-3}$ & $5.9\times10^{-3}$ & $3/3$ & $3312$ & ---    \\
     & \texttt{DEQN-path-exact}        & $3.3\times10^{-3}$ & $1.8\times10^{-2}$ & $2.5\times10^{-2}$ & $0/1$ & $1892$ & ---    \\
\bottomrule
\end{tabular}
\caption{High-dimensional certification at $N{=}8$, $16$, and $32$ (state dimension
$17$, $33$, and $65$, country-modular architecture, $3500$ episodes, $p_d{=}0.001$). The columns are:
$N$, the number of countries; Arm, the solver configuration; on-path $R^{H}$, the exact held-out
residual on the on-path region; disaster $R^{H}$, its median over the disaster-region held-out set;
disaster $R^{H}_{p99}$, the 99th percentile of that disaster residual; verified, the
verified-stationarity count of seeds; $B_{\mathrm{policy}}$, the in-loop policy exact-evaluation
budget; and $B_{\mathrm{world}}$, the separable surrogate-training budget. All errors are
the exact $R^{H}$ on held-out states (median over seeds);
disaster columns are over the disaster-region set. \texttt{EWM-coverage-surrogate} reaches the
disaster-tail floor and verified stationarity at all three dimensions (five seeds at $N{=}16$ and
$N{=}32$, two at $N{=}8$) at an in-loop policy budget $B_{\mathrm{policy}}$ smaller than the failing
baseline's, plus a separable surrogate-training budget $B_{\mathrm{world}}$. The surrogate-free
\texttt{DEQN-coverage-exact} arm also certifies at all three dimensions ($3/3$ at $N{=}8$,
$3/5$ at $N{=}16$, $3/3$ at $N{=}32$) but at roughly four times the EWM in-loop budget
($B_{\mathrm{policy}}$ of $3{,}312$ against $851$ million at $N{=}32$) and a higher total exact
budget, so at the headline dimension the surrogate buys amortization rather than feasibility; the
pathwise baseline keeps the self-confirming signature (low on-path, high disaster, $0/1$ verified).
Coverage arms use three to five seeds per dimension and the baseline one (an existence rather than
ten-seed robustness probe at $N\le4$); budgets in millions of exact
evaluations.}\label{tab:irbc_highn}
\end{table}

\subsection{Supplementary IRBC exhibits}\label{app:irbc_supp}

This appendix collects the supporting IRBC exhibits kept out of the crisp main-text
presentation of Section~\ref{sub:irbc}: the amortization scaling table, the disaster-pricing
certificate, and the surrogate-capacity homotopy.

\begin{table}[t!]\centering\small

\setlength{\tabcolsep}{8pt}
\begin{tabular}{ccccc}
\toprule
$N$ & state $2N{+}1$ & nodes $4(N{+}1)$ & params dense / modular & $A(\tau)$ speedup \\
\midrule
$2$  & $5$  & $12$  & $10{,}373$ / $29{,}291$ & $43\times$ \\
$4$  & $9$  & $20$  & $11{,}145$ / $29{,}299$ & $53\times$ \\
$8$  & $17$ & $36$  & $12{,}689$ / $29{,}315$ & $61\times$ \\
$16$ & $33$ & $68$  & $15{,}777$ / $29{,}347$ & $63\times$ \\
$32$ & $65$ & $132$ & $21{,}953$ / $29{,}411$ & $70\times$ \\
\bottomrule
\end{tabular}
\caption{Architecture and amortization scaling of the international real business cycle model under
the country-modular architecture, one row per country count $N$ from $2$ to $32$. The columns are:
the number of countries $N$; the state dimension $2N{+}1$; the per-query exact continuation node
count $4(N{+}1)$ that the audited surrogate eliminates; the policy parameter count under a dense
versus the modular architecture; and $A(\tau)$, the wall-clock speed-up the surrogate buys. The
first four columns are exact. $A(\tau)$ is the median time of the exact high-fidelity scrambled-Sobol
QMC continuation divided by that of the audited surrogate $\wm$, at a fixed $\approx0.4\%$
continuation accuracy on a fixed reference policy over five seeds per $N$; it is hardware-dependent
and corroborates the node-count saving. The per-query node count $4(N{+}1)$ is the dominant
dimension-dependent in-loop exact cost and is exactly what the surrogate removes: it grows linearly
in $N$ ($\Theta(N)$, Proposition~\ref{prop:amort}, doubling with each country-doubling) under the
monomial rule, and faster for a richer quadrature rule, while the
modular parameter count stays near-constant as the state dimension grows sixteen-fold.}\label{tab:irbc_scaling}
\end{table}

\begin{figure}[t!]\centering
\includegraphics[width=\textwidth]{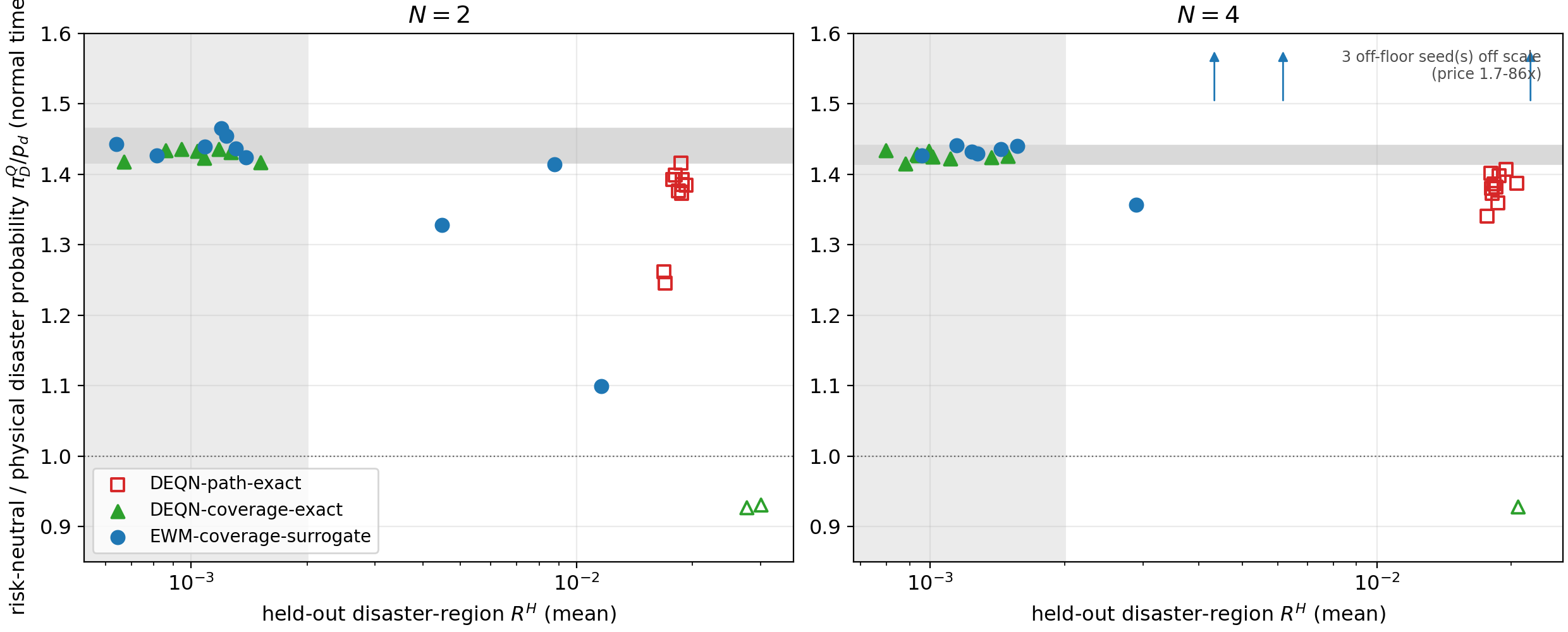}
\caption{Normal-times price of the one-period disaster Arrow claim (implied risk-neutral disaster
probability per unit physical probability) against the held-out disaster residual, $N{=}2$ and
$N{=}4$, one point per seed and arm. Shaded $x$: the certified residual floor
($R^{H}<2\times10^{-3}$); shaded $y$: the price consensus of all floor-reaching seeds across both
coverage technologies. Every floor-reaching seed prices in the band; the pathwise arm reaches the
floor in none of ten seeds and prices below it; off-floor seeds scatter with their residual, in either
direction, which is the mechanism by which an uncertified disaster residual produces an untrustworthy
disaster price.}\label{fig:irbc_pricing}
\end{figure}

\subsubsection{The surrogate-capacity homotopy: the $\widehat\eta_\kappa$--$R^{H}$ descent}
Figure~\ref{fig:irbc_kappa} runs the warm-started sieve at $N{=}2$ along the
surrogate-capacity axis: each stage grows the surrogate width $m$ ($16\!\to\!32\!\to\!64$,
the world-model capacity of Definition~\ref{def:kewm}) and its exact-target anchor fraction
($0.1\!\to\!0.2\!\to\!0.4$, the share of coverage states on which the exact continuation is
evaluated to fit $\wm$), each stage distilled from the previous converged one, with the
coverage reach held fixed. The realized surrogate gap $\widehat\eta_\kappa=\lVert\wm-\Qpi\rVert$ collapses
monotonically ($1.5\times10^{-2}\!\to\!1.7\times10^{-3}$) and the held-out disaster residual
descends with it ($3.3\times10^{-2}\!\to\!3.7\times10^{-3}$), stably across seeds
($\widehat\eta_\kappa$ is the trained surrogate's fit to the exact continuation; it
upper-bounds the in-class approximation gap $\eta_\kappa$ of Section~\ref{sec:theory} plus the
surrogate-fit error, the $\mathrm{ApproxErr}_\kappa$ and $\mathrm{SurrFitErr}$ terms of
Proposition~\ref{prop:decomp}); the endpoint
($3.7\times10^{-3}$) is of the same order as the converged \texttt{EWM} disaster-residual level of
Table~\ref{tab:irbc}. This schedule advances surrogate fidelity and capacity at fixed
coverage, so it traces the $\mathrm{ApproxErr}_\kappa$ and $\mathrm{SurrFitErr}$ descent of
Proposition~\ref{prop:decomp} rather than a coverage sweep. The Bewley experiment of
Section~\ref{sub:bewley} is complementary: it holds the exact residual, transition, coverage
configuration, and training budget fixed and varies only the distributional summary the policy reads
(raw histogram, hand moments, or learned encoder), so it isolates the world model's perception role.
The two experiments together separate the world model's integration and amortization role on the IRBC
model from its perception role on the heterogeneous-agent model.

\begin{figure}[t!]\centering
\includegraphics[width=0.86\textwidth]{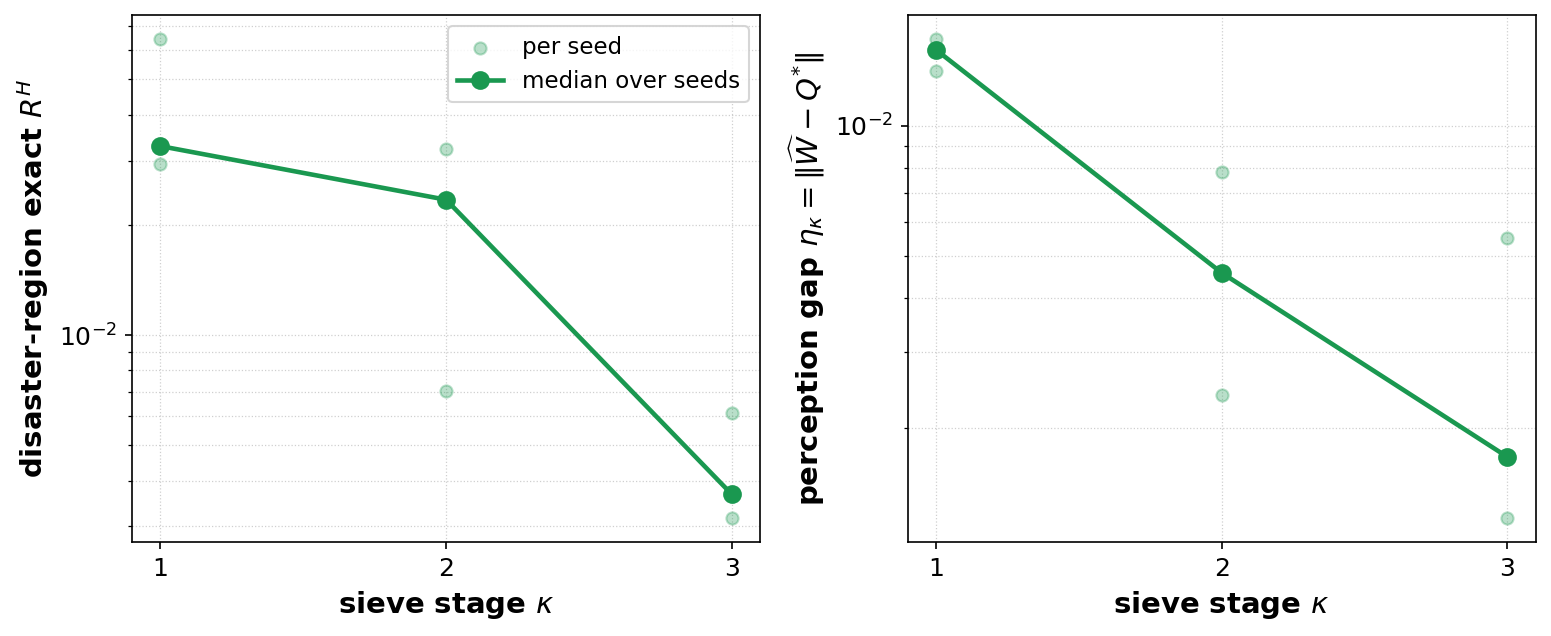}
\caption{The warm-started surrogate-capacity homotopy on the international real business cycle model:
as the surrogate is refined (width $m$ $16\!\to\!64$ and exact-target anchor fraction
$0.1\!\to\!0.4$, at fixed coverage reach, each distilled from the previous converged stage), both
the realized surrogate gap $\widehat\eta_\kappa=\lVert\wm-\Qpi\rVert$ (right) and the held-out
disaster-region exact residual $R^{H}$ (left) descend monotonically and stably across seeds, the
joint surrogate-fidelity-and-capacity descent of the $\mathrm{ApproxErr}_\kappa$/$\mathrm{SurrFitErr}$
terms. The endpoint is of the same order as the converged \texttt{EWM} disaster residual of
Table~\ref{tab:irbc}; the complementary perception experiment, which varies only the distributional
summary at fixed coverage, is the Bewley experiment (Section~\ref{sub:bewley}). Three seeds, $3000$
episodes.}
\label{fig:irbc_kappa}
\end{figure}

\section{Heterogeneous-agent (Bewley) model details}\label{app:encoder}

Section~\ref{sub:bewley} hands the policy a learned summary of the wealth distribution in place of
the distribution itself. This appendix explains what that summary (the \emph{encoder}) is and how it
is trained (the \emph{JEPA} objective), and then shows, in a case with a closed-form answer, that
what it learns to keep is exactly the information the equilibrium needs: on a genuinely
finite-dimensional population it recovers the minimal sufficient statistic, of which the
Krusell--Smith moment list is the hand-chosen special case.

\subsection{The distributional encoder and the JEPA objective}\label{app:encoder_what}

\subsubsection{The encoder}
An \emph{encoder} is a learned map $h_\psi$ that compresses a high-dimensional object into a
low-dimensional representation intended to preserve the components relevant for prediction and
decision making. Here the large object is the
cross-sectional wealth--employment distribution $\mu$, a histogram over hundreds of cells; the
encoder compresses it to a short \emph{code} $z=h_\psi(\mu)\in\mathbb{R}^{d}$ with $d$ of order a
handful, and every decision object reads $z$ in place of $\mu$. This is the same move as
summarizing a distribution by its mean and variance, except that the summary is learned from the
model's own dynamics rather than fixed in advance, and $h_\psi$ is built to be
\emph{permutation-invariant}: it sees a population, not a labelled list of households.
Figure~\ref{fig:linear_probe} shows that the short embedding keeps the wealth statistics the equilibrium
actually reads.

\subsubsection{The JEPA objective}
The encoder is trained by a \emph{Joint Embedding Predictive
Architecture} (JEPA), a way of learning a world model introduced by \citet{LeCun2022AMI} and made
to train stably end to end by the LeWorldModel of \citet{MaesLeLidec2026LeWM}, the architecture we
adapt. A JEPA has two networks: the encoder, which maps the current state to its code, and a
\emph{predictor} $P_\phi$, which forecasts the next embedding from the current one. The two are
trained together so that the predicted next embedding matches the true next embedding, the one obtained by
advancing the state one step and re-encoding. The objective predicts only the compact embedding, never
the full object: a forecaster made to reproduce every cell of next period's distribution would
spend its capacity on detail no decision uses, whereas predicting the embedding forces the encoder to
keep exactly the features needed to see one step ahead and to drop the rest.

\subsubsection{The anti-collapse regularizer}
Predictability alone admits a degenerate solution: the encoder may map every state
to the same constant embedding, which is perfectly predictable but uninformative for the
policy, because it cannot distinguish economically different populations. This failure is called
\emph{collapse}. We rule out this degeneracy with a single anti-collapse regularizer, SIGReg
\citep{BalestrieroLeCun2025SIGReg}, which requires the embeddings to spread out and fill an isotropic Gaussian. It
checks this cheaply, by projecting the embeddings onto many random directions and applying a
one-dimensional normality test to each. In the population ideal, matching all one-dimensional
projections characterizes the full joint distribution by the Cram\'er--Wold device, and SIGReg
implements a finite random-projection approximation to this criterion. The whole objective is then two terms,
predict the next embedding plus keep the embeddings spread, with no stop-gradient, target network, or
pretrained encoder \citep{MaesLeLidec2026LeWM}.

\subsubsection{The economic adaptation}
One thing changes when this recipe enters an equilibrium
model, and it is the disciplining difference of the paper. In a vision world model the next state
is unknown and the transition must itself be learned from data; here the ``observation'' is the
population $\mu_t$ and its one-step law is the \emph{exact, known} push-forward $\Phi$. The target
next embedding is therefore $h_\psi(\Phi(\mu_t))$, formed by advancing the true dynamics and
re-encoding, so $\Phi$ is never learned and never approximated: only the \emph{summary} is learned,
while the dynamics that move it stay exact. The rest of this appendix makes precise what that learned summary
turns out to be.

\subsection{What the encoder learns}\label{app:encoder_learns}

\subsubsection{Setup}
Let the cross-sectional distributions that the dynamics ever visit lie in a
$d$-parameter exponential family,
\begin{equation}
\mu_\theta(x) = h(x)\,\exp\!\big(\theta^\top T(x) - A(\theta)\big),
\qquad \theta\in\Theta\subseteq\mathbb{R}^{d},
\label{eqn:expfam}
\end{equation}
with sufficient statistic $T:\mathcal{X}\to\mathbb{R}^{d}$ (for a Gaussian, $T(x)=(x,x^2)$ and
$d=2$, the mean and variance). A \emph{sufficient statistic} is the short list of numbers that
carries everything about $\mu$ the decision uses; once it is known, the rest of the distribution is
irrelevant detail. Two facts make this the right idealization of the Bewley state.
First, every aggregate the residual \eqref{eqn:bewley_resid} needs is a moment of $\mu$: capital
$K=\int a\,d\mu$, hence prices $r,w$, and therefore the continuation $Q^{H}(\cdot,\mu)$ depend on
$\mu$ only through $\mathbb{E}_{\mu}[T]$. Second, $\theta\mapsto\mathbb{E}_{\mu_\theta}[T]$ is a
bijection on the interior of $\Theta$ (a standard property of minimal exponential families). So
the map
\begin{equation}
e^\star:\ \mu_\theta\ \longmapsto\ \mathbb{E}_{\mu_\theta}[T]\in\mathbb{R}^{d}
\label{eqn:ideal_enc}
\end{equation}
is a $d$-dimensional encoder that discards nothing the solver uses: the policy and the
residual factor through $e^\star(\mu)$ exactly. The raw $M\times2$ histogram of
Section~\ref{sub:bewley} is, on such a population, a redundant coordinate system for a
$d$-dimensional object.

\subsubsection{Krusell--Smith as a hand-chosen encoder}
The moment sieve fixes $T$ in advance
(mean of assets, variance, a constraint mass, a top share). When the model's true sufficient
statistic is spanned by that list, the hand encoder equals $e^\star$ up to a reparameterization
and the solver is correct; this is the regime of approximate aggregation, in which the mean of
the wealth distribution very nearly suffices \citep{krusell1998}. When it is not, when a
binding constraint and a regime change make a feature of $\mu$ outside the chosen list matter for
the continuation, the hand list omits the relevant coordinate, the policy cannot tell the
populations apart along it, and the solver settles on a locally consistent but wrong stationary
distribution, the cross-section failure documented in Section~\ref{sub:bewley} (the moment sieve's
stationary distribution sits far from the reference).

\subsubsection{What the JEPA objective recovers}
The encoder is never told $T$; it has to discover it.
Suppose it is trained as above, to zero prediction error with non-degenerate (uncollapsed) embeddings.
At such a minimizer $h_\psi$ is an invertible relabelling of the coarsest statistic $S(\mu)$ for
which $\big(S(\mu_t)\big)_t$ is itself a Markov chain under $\Phi$, that is, of the minimal
sufficient statistic of the distributional dynamics. On the family \eqref{eqn:expfam} that
statistic is $e^\star(\mu)$, up to coordinates that are not excited by the transition $\Phi$.
So the learned embedding recovers $e^\star$ up to a smooth change of coordinates, which
is all the policy needs. One qualifier carries to the general case: the objective keeps whatever
makes the embedding predictable, which can be slightly larger than what the residual reads. A coordinate
the dynamics stir but the residual never loads on is dynamically relevant yet decision-irrelevant;
on the family \eqref{eqn:expfam} the two coincide.
Figure~\ref{fig:jepa_econ} summarizes the construction and the role of SIGReg.

\begin{figure}[t!]\centering
\resizebox{\textwidth}{!}{
\begin{tikzpicture}[>=Stealth, font=\small,
  dist/.style={rectangle, rounded corners=2pt, draw=blue!55!black, thick, fill=blue!5,
     minimum width=2.7cm, minimum height=1.05cm, align=center},
  code/.style={circle, draw=red!60!black, thick, fill=red!6, minimum size=0.95cm},
  op/.style={font=\footnotesize\itshape, align=center},
  ar/.style={->, thick},
  jloss/.style={<->, thick, dashed, red!55!black}]

\node[dist] (mu)  at (0,0)      {$\mu_t$\\[-1pt]\scriptsize wealth distribution};
\node[code] (zt)  at (3.8,0)    {$z_t$};
\node[code] (zh)  at (8.0,0)    {$\hat z_{t+1}$};
\node[dist] (mu1) at (3.8,-2.1) {$\mu_{t+1}$\\[-1pt]\scriptsize $=\Phi(\mu_t)$};
\node[code] (zt1) at (8.0,-2.1) {$z_{t+1}$};

\draw[ar] (mu)  -- node[op,above]{$h_\psi$\\encoder} (zt);
\draw[ar] (zt)  -- node[op,above]{$P_\phi$\\predictor} (zh);
\draw[ar] (mu)  -- node[op, pos=0.42, left=7pt]{$\Phi$ (exact)} (mu1);
\draw[ar] (mu1) -- node[op,above]{$h_\psi$} (zt1);
\draw[jloss] (zh) -- node[op,right]{JEPA loss\\$\|\hat z_{t+1}-z_{t+1}\|^2$} (zt1);

\begin{scope}[shift={(12.0,-1.05)}]
  \node[op] at (0,1.5) {\textbf{SIGReg}};
  \draw[rounded corners, draw=black!35, fill=black!3] (-1.05,0.35) rectangle (1.05,1.2);
  \fill[red!60!black] (0,0.78) circle (1.7pt);
  \node[font=\scriptsize, text=black!65] at (0,0.12) {collapse $h_\psi\!\equiv\!c$};
  \draw[->, thick, black!55] (0,-0.15) -- (0,-0.95);
  \draw[rounded corners, draw=black!35, fill=black!3] (-1.05,-2.55) rectangle (1.05,-1.2);
  \foreach \p in {(-0.55,-1.45),(0.42,-1.4),(-0.2,-1.9),(0.5,-2.05),(-0.55,-2.15),(0.12,-1.65),(0.02,-2.3),(0.6,-1.75)}
     \fill[red!60!black] \p circle (1.5pt);
  \node[font=\scriptsize, text=black!65] at (0,-2.92) {isotropic Gaussian};
\end{scope}
\end{tikzpicture}}
\caption{The world model's encoder, drawn for the heterogeneous-agent economy. The state is a
cross-sectional wealth distribution $\mu_t$. A learned encoder $h_\psi$ maps it to a
low-dimensional embedding $z_t$, and a predictor $P_\phi$ forecasts next period's embedding
$\hat z_{t+1}$; the target $z_{t+1}$ is obtained by pushing the exact law of motion
$\Phi$ forward to $\mu_{t+1}=\Phi(\mu_t)$ and then encoding, and the JEPA objective trains
$(h_\psi,P_\phi)$ so the two match. Forcing an embedding to predict its own future is what makes $z$
keep exactly what the dynamics need, the minimal sufficient statistic of the distributional
dynamics (on a Gaussian population the mean and variance, $e^\star(\mu)=\mathbb{E}_\mu[T]$).
The SIGReg term (right) spreads the embeddings to fill an isotropic Gaussian, ruling out the
degenerate collapse $h_\psi\equiv c$ in which every population maps to one embedding and the policy
can no longer tell them apart. The transition $\Phi$ is exact and never learned; only the
summary is.}
\label{fig:jepa_econ}
\end{figure}

\subsubsection{The linear--Gaussian instance}
When the dynamics are linear the recovered
embedding has a familiar name. Suppose the deviation of $\mu$ from its
mean evolves linearly, $m_{t+1}=Fm_t+\text{(exact shock)}$ with $m_t\in\mathbb{R}^{p}$ the vector
of centered moments and $F$ the linearized push-forward $\Phi$, and that the policy needs the
linear functional $g^\top m_t$. The smallest embedding that keeps $h_\psi$ Markov and preserves
$g^\top m$ is the projection of $m$ onto the $F$-invariant subspace generated by $g$, the
observable/controllable subspace of the pair $(F,g)$; the optimal $d$-dimensional linear encoder
is its leading principal subspace, a principal-component analysis of the dynamics rather
than of a static sample. In words, the encoder runs a principal-component analysis not on a
snapshot of the population but on how the population moves, keeping the few directions that
both stir under $\Phi$ and feed the decision. The Krusell--Smith ``use the mean'' rule is the $d{=}1$ truncation that
keeps only the first such direction; the learned encoder keeps the $d$ directions the transition
actually excites and the residual actually loads on, which is why the
$\ledgerBewleyEWMSieveDim$-dimensional learned embedding of Section~\ref{sub:bewley} recovers the
cross-section that a fixed, hand-picked moment list misses, not by being shorter (it is not) but by
spending its dimensions on the directions that matter rather than on a prechosen menu.

\paragraph{Capacity check.} The embedding dimension is set by a capacity sweep internal to the
training design, not by tuning to the reference distribution: the reported
$\ledgerBewleyEWMSieveDim$-dimensional embedding is the smallest that both passes the held-out
residual audit and recovers the reference cross-section, which is why it is the headline choice in
Section~\ref{sub:bewley}.

\clearpage
\bibliographystyle{apalike}
\bibliography{references}

\end{singlespace}
\end{document}

%% file: bewley_ledger_values.tex
\newcommand{\ledgerBewleyNumSeeds}{10}
\newcommand{\ledgerBewleyBudgetEpisodes}{3000}
\newcommand{\ledgerBewleyFullMuDim}{200}
\newcommand{\ledgerBewleyKSMomentsDim}{5}
\newcommand{\ledgerBewleyEWMSieveDim}{8}
\newcommand{\ledgerBewleyCertEulerThreshold}{0.03}
\newcommand{\ledgerBewleyCertDriftThreshold}{0.05}
\newcommand{\ledgerBewleyCertConstraintMassThreshold}{0.5}

\newcommand{\ledgerBewleyFullMuEulerMean}{$2.6\times10^{-2}$}
\newcommand{\ledgerBewleyFullMuConstraintMass}{$<\!10^{-3}$}
\newcommand{\ledgerBewleyFullMuLoneRefMedian}{0.69}
\newcommand{\ledgerBewleyFullMuLoneRefMin}{0.67}
\newcommand{\ledgerBewleyFullMuLoneRefMax}{0.69}
\newcommand{\ledgerBewleyKSMomentsEulerMean}{$6.5\times10^{-3}$}
\newcommand{\ledgerBewleyKSMomentsConstraintMass}{0.002}
\newcommand{\ledgerBewleyKSMomentsLoneRefMedian}{0.26}
\newcommand{\ledgerBewleyKSMomentsLoneRefMin}{0.25}
\newcommand{\ledgerBewleyKSMomentsLoneRefMax}{0.46}
\newcommand{\ledgerBewleyEWMSieveEulerMean}{$1.6\times10^{-2}$}
\newcommand{\ledgerBewleyEWMSieveConstraintMass}{0.004}
\newcommand{\ledgerBewleyEWMSieveLoneRefMedian}{0.13}
\newcommand{\ledgerBewleyEWMSieveLoneRefMin}{0.11}
\newcommand{\ledgerBewleyEWMSieveLoneRefMax}{0.17}

\newcommand{\ledgerBewleyKSToEWMLoneRatio}{1.9}
\newcommand{\ledgerBewleyRawCollapseLoneRef}{0.69}

\newcommand{\ledgerBewleyProbeGiniRtwo}{0.99}

\newcommand{\ledgerBewleyProbeVarRtwo}{0.97}

\newcommand{\ledgerBewleyEncoderExponent}{0.92}
\newcommand{\ledgerBewleyHistogramExponent}{1.65}
\newcommand{\ledgerBewleyEncoderMax}{250{,}000}
\newcommand{\ledgerBewleyHistMemWall}{15{,}000}


%% file: bewley_ledger_stubs.tex
\providecommand{\ledgerBewleyNumSeeds}{20}
\providecommand{\ledgerBewleyBudgetEpisodes}{3000}
\providecommand{\ledgerBewleyFullMuDim}{200}        
\providecommand{\ledgerBewleyKSMomentsDim}{5}
\providecommand{\ledgerBewleyEWMSieveDim}{8}

\providecommand{\ledgerBewleyCertEulerThreshold}{0.03}
\providecommand{\ledgerBewleyCertDriftThreshold}{0.05}
\providecommand{\ledgerBewleyCertConstraintMassThreshold}{0.5}

\providecommand{\ledgerBewleyFullMuEulerMean}{\ledgerTODO}
\providecommand{\ledgerBewleyKSMomentsEulerMean}{\ledgerTODO}
\providecommand{\ledgerBewleyEWMSieveEulerMean}{\ledgerTODO}

\providecommand{\ledgerBewleyFullMuConstraintMass}{\ledgerTODO}
\providecommand{\ledgerBewleyKSMomentsConstraintMass}{\ledgerTODO}
\providecommand{\ledgerBewleyEWMSieveConstraintMass}{\ledgerTODO}

\providecommand{\ledgerBewleyFullMuLoneRefMedian}{\ledgerTODO}
\providecommand{\ledgerBewleyFullMuLoneRefMin}{\ledgerTODO}
\providecommand{\ledgerBewleyFullMuLoneRefMax}{\ledgerTODO}
\providecommand{\ledgerBewleyKSMomentsLoneRefMedian}{\ledgerTODO}
\providecommand{\ledgerBewleyKSMomentsLoneRefMin}{\ledgerTODO}
\providecommand{\ledgerBewleyKSMomentsLoneRefMax}{\ledgerTODO}
\providecommand{\ledgerBewleyEWMSieveLoneRefMedian}{\ledgerTODO}
\providecommand{\ledgerBewleyEWMSieveLoneRefMin}{\ledgerTODO}
\providecommand{\ledgerBewleyEWMSieveLoneRefMax}{\ledgerTODO}

\providecommand{\ledgerBewleyKSToEWMLoneRatio}{\ledgerTODO}
\providecommand{\ledgerBewleyRawCollapseLoneRef}{\ledgerTODO}

\providecommand{\ledgerBewleyEncoderExponent}{\ledgerTODO}
\providecommand{\ledgerBewleyHistogramExponent}{\ledgerTODO}
\providecommand{\ledgerBewleyHistMemWall}{\ledgerTODO}
\providecommand{\ledgerBewleyEncoderMax}{\ledgerTODO}

\providecommand{\ledgerBewleyProbeGiniRtwo}{0.99}
\providecommand{\ledgerBewleyProbeVarRtwo}{0.97}

%% file: tab_bmdc.tex
\setlength{\tabcolsep}{4pt}%
\begin{tabular}{lccccc}
\toprule
Arm & normal & disaster mean & disaster $p_{95}$ & $B_{\mathrm{policy}}$ & low-error \\
\midrule
\texttt{DEQN-path-exact} & \shortstack{$3.3\times10^{-4}$\\[1pt] {\scriptsize[$3.0\times10^{-4}$,\,$3.4\times10^{-4}$]}} & \shortstack{$5.3\times10^{-3}$\\[1pt] {\scriptsize[$4.8\times10^{-3}$,\,$5.6\times10^{-3}$]}} & \shortstack{$1.7\times10^{-2}$\\[1pt] {\scriptsize[$1.5\times10^{-2}$,\,$1.8\times10^{-2}$]}} & $2212$ & 10/10 \\
\texttt{DEQN-coverage-exact} & \shortstack{$2.5\times10^{-4}$\\[1pt] {\scriptsize[$2.2\times10^{-4}$,\,$3.1\times10^{-4}$]}} & \shortstack{$7.3\times10^{-4}$\\[1pt] {\scriptsize[$6.1\times10^{-4}$,\,$9.2\times10^{-4}$]}} & \shortstack{$1.7\times10^{-3}$\\[1pt] {\scriptsize[$1.2\times10^{-3}$,\,$2.1\times10^{-3}$]}} & $7962$ & 10/10 \\
\texttt{EWM-coverage-surrogate} & \shortstack{$3.9\times10^{-4}$\\[1pt] {\scriptsize[$3.6\times10^{-4}$,\,$4.7\times10^{-4}$]}} & \shortstack{$1.6\times10^{-3}$\\[1pt] {\scriptsize[$1.3\times10^{-3}$,\,$1.9\times10^{-3}$]}} & \shortstack{$3.5\times10^{-3}$\\[1pt] {\scriptsize[$3.2\times10^{-3}$,\,$4.0\times10^{-3}$]}} & $1327$ & 10/10 \\
\bottomrule
\end{tabular}